\newtheorem{dfn}{Definition}
\newtheorem{thm}{Theorem}
\newtheorem{prp}{Proposition}
\newtheorem{rem}{Remark}
\newtheorem{lmm}{Lemma}
\newtheorem{asm}{Assumption}
\newtheorem{pro}{Problem}
\newcommand{\argmin}{\arg\!\min}
\newcommand{\argmax}{\arg\!\max}
\algnewcommand\algorithmicforeach{\textbf{for each}}
\DeclareMathAlphabet{\mathcal}{OMS}{cmsy}{m}{n}
\begin{document}
%\title{\LARGE \bf Optimizing Coordinated Vehicle Platooning: An Analytical Approach Based on Stochastic Dynamic Programming}
\title{\huge \bf Optimizing Coordinated Vehicle Platooning: An Analytical Approach Based on Stochastic Dynamic Programming}

\author{Xi Xiong\footnote{Corresponding author: xi.xiong@nyu.edu.} \thanks{Department of Civil and Urban Engineering, New York University, Brooklyn, NY, USA.} ,
Junyi Sha\thanks{Tandon School of Engineering and Courant Institute of Mathematical Sciences, New York University, Brooklyn, NY, USA.},
and Li Jin$^\dagger$}
\newcommand*{\QEDA}{\hfill\ensuremath{\blacksquare}}%

\maketitle
%=======================================================================================================================
%==================================================

\begin{abstract}
Platooning connected and autonomous vehicles (CAVs) can improve traffic and fuel efficiency.
However, scalable platooning operations require junction-level coordination, which has not been well studied.
In this paper, we study the coordination of vehicle platooning at highway junctions.
We consider a setting where CAVs randomly arrive at a highway junction according to a general renewal process.
When a CAV approaches the junction, a system operator determines whether the CAV will merge into the platoon ahead according to the positions and speeds of the CAV and the platoon.
We formulate a Markov decision process to minimize the discounted cumulative travel cost, i.e. fuel consumption plus travel delay, over an infinite time horizon.
We show that the optimal policy is threshold-based: the CAV will merge with the platoon if and only if the difference between the CAV's and the platoon's predicted times of arrival at the junction is less than a constant threshold.
We also propose two ready-to-implement algorithms to derive the optimal policy. Comparison with the classical value iteration algorithm implies that our approach explicitly incorporating the characteristics of the optimal policy is significantly more efficient in terms of computation. Importantly, we show that the optimal policy under Poisson arrivals can be obtained by solving a system of integral equations.
We also validate our results in simulation with Real-time Strategy (RTS) using real traffic data.
The simulation results indicate that the proposed method yields better performance compared with the conventional method.
\end{abstract}

%\begin{keywords}
{\bf Index terms}: Connected and autonomous vehicles, vehicle platooning, dynamic programming, Bellman equation.
%\end{keywords}

\section{Introduction}

%background and challenge
In the recent decade, the technology of connected and autonomous vehicles (CAVs) has been developing fast due to continuous progress in deploying communication and computation capabilities on vehicles and on infrastructure \cite{maurer2016autonomous, boccardi2014five}. Platooning is a novel highway operation enabled by the CAV technology, where vehicles travel in groups with very short inter-vehicle spacing \cite{davila2010sartre}. The major advantages of platooning include throughput improvement, fuel savings, and reduced pollutant emissions \cite{bhoopalam2018planning, tsugawa2011automated}. Such advantages lead to considerable incentives for the transportation industry to adopt this operation.
The architecture of a comprehensive platoon management system consists of three layers: the network layer, the link (junction) layer, and the vehicle layer \cite{alam2015heavy}.
The network layer deals with trip scheduling, goods assignment and route planning. The link layer coordinates the formulation, splitting, and reordering of platoons. 
The vehicle layer regulates the longitudinal and lateral motion of the vehicle in a platoon.
However, although much progress has been made for microscopic regulation of vehicle strings \cite{ploeg14,coogan15interconnected,besselink2017string,gao2017data}, very limited methods and results are available for link- and network-layer coordination for platooning.

%problem and contribution
In this paper, we consider a novel Markov decision process (MDP) formulation to study the coordinated platooning problem at independent highway junctions and propose an easy-to-implement but provably optimal strategy to coordinate CAVs. Figure \ref{fig:color_figure} illustrates the scenario that we consider. The local coordinator is located at the junction, and there are two flows of CAVs entering the junction. The coordinator has access to kinematic information (speeds and locations) of CAVs within the \emph{coordinating zone} of radius $D_1$ on each highway branch. The decision of whether a CAV merges into a platoon is made when the CAV arrives at the detector and enters the coordinating zone. If the coordinator decides to merge a CAV into a platoon, the CAV will be instructed to traverse the coordinating zone with a specified speed such that it can catch up with the nearest platoon ahead at the junction. There is a \emph{cruising zone} beyond the junction where CAV platoons that have formed in the coordinating zone will be maintained.
To design efficient coordination algorithms, we focus on the trade-off between the travel cost (time and fuel) over the coordinating zone due to acceleration/deceleration for platooning and the reduced fuel consumption over the cruising zone due to platooning.

\begin{figure}[hbt]
  \centering
  \includegraphics[width=0.7\textwidth, trim=100 150 100 120,clip]{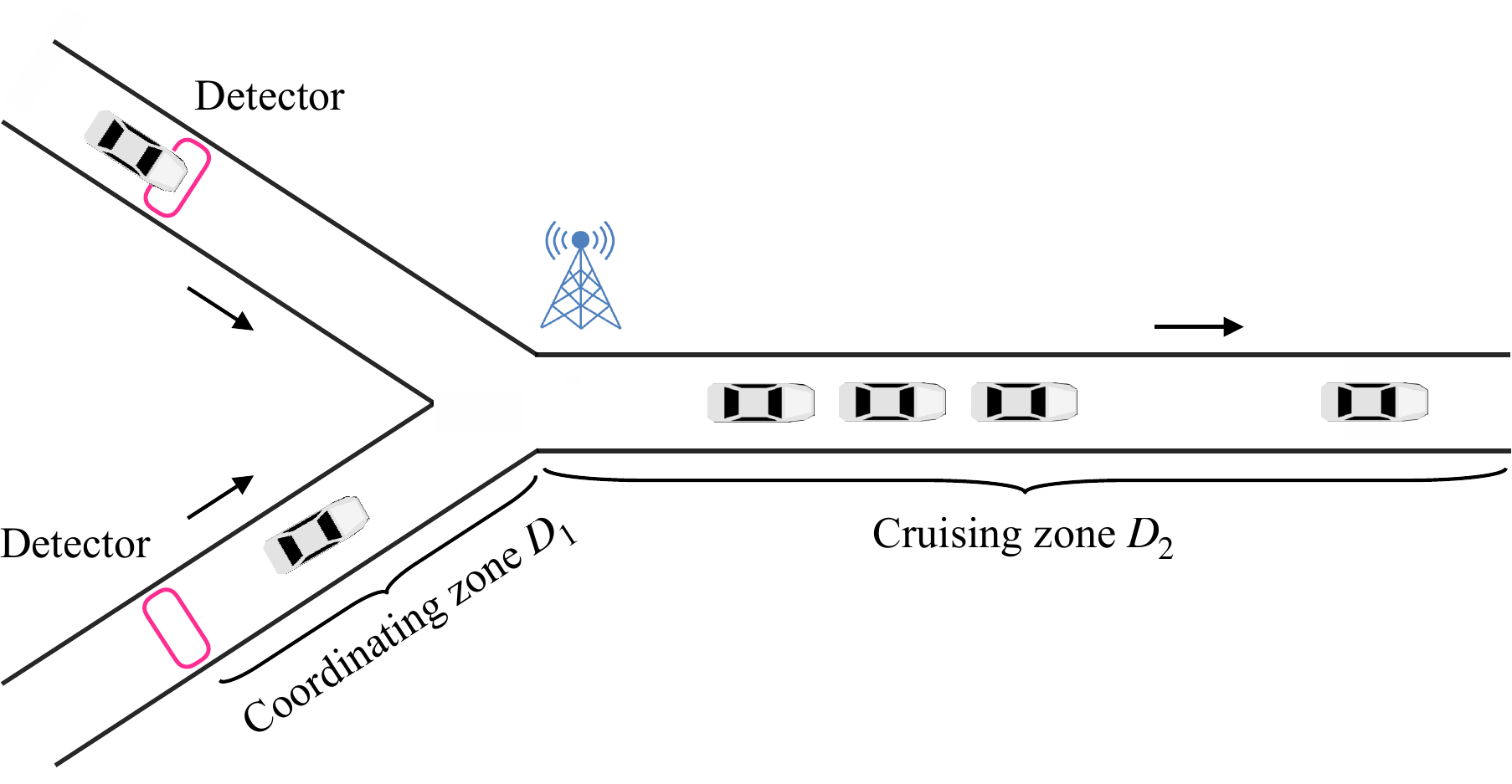}
  \caption{Coordinated platooning on a highway section.}\label{fig:color_figure}
\end{figure}

% %literature review
Most previous work on coordinated platooning is based on optimization (typically mixed integer programming) formulations with heuristic solution algorithms; see \cite{bhoopalam2018planning} for a rather complete overview.
In particular, Larson et al. \cite{larson2015distributed} considered a distributed formulation where platooning decisions are made at each junction; this distributed framework is practical and thus adopted in our modeling approach.
% The authors of \cite{larson2015distributed} provided an optimization formulation but did not characterize the structural of optimal coordination strategies..
Larsson et al. \cite{larsson2015vehicle} proposed an integer programming for platooning-oriented routing and compared the performance of multiple heuristics.
Boysen et al. \cite{boysen2018identical} studied the coordinated platooning problem on a single route, which is similar to the setting considered in this paper, with various objective functions.
Luo et al. \cite{luo2018coordinated} used an integer programming to design coordination algorithms with multiple speeds.
Johansson et al. \cite{johansson2019game} considered a game-theoretic formulation to study the share of economical benefits due to platooning.
Sun and Yin \cite{sun2019behaviorally} quantifies the share of benefits due to platooning and studied the allocation of benefits.
In addition, there exists a line of work on higher-level trip planning for platooning \cite{deng2017energy,larsen2019hub,abdolmaleki2019itinerary}, which focuses on strategical routing and schedule.
There also exists a line of work on lower-level coordination between CAVs for platooning \cite{liang2015heavy,turri2016cooperative,van2017fuel,guo2018fuel}, which focuses more on the kinematics (and sometimes even dynamics) of individual vehicles rather than the overall traffic.
However, the above approaches require exact knowledge of each CAV's trip plan (or constraints), which is not always available. Furthermore, although heuristic algorithms are usually efficient and practical, it is also beneficial to derive structural and insightful results on optimal platooning strategies.
This paper is, to the best of our knowledge, the first effort that explicitly characterize the structure of optimal coordinated platooning strategies in an MDP environment.

%modeling
% To design coordinated platooning at a given junction, we consider a Markov decision process. 
In our formulation, CAVs enter the coordinating zone upstream to a highway junction as a general renewal process, i.e. with independent and identically distributed inter-arrival times. Note that if the inter-arrival times are exponentially distributed, then we have a Poisson process. 
The control action is the spatial-average speed in the coordinating zone recommended to each CAV, which determines whether a CAV will merge with the CAV(s) ahead to form a platoon. The objective is to minimize the travel cost, which is the sum of fuel consumption and travel delay. This objective function captures the trade-off between the fuel benefits due to platooning and the cost for forming platoons. The merging can be achieved by deceleration of leading vehicle or acceleration of following vehicle. Our formulation does not explicitly consider routing \cite{larson2015distributed} and travel time window constraints \cite{zhang2017freight, boysen2018identical}; the uncertainty/variation of each CAV's trip before entering the coordinating zone is captured by the stochastic arrival process. A similar model was considered in our previous work \cite{xiong2019analysis}, which studied a rather intuitive off-line, open-loop coordination strategy in a static setting. In this paper, we consider a MDP with feedback coordination strategies and use the open-loop strategy in \cite{xiong2019analysis} as a baseline for evaluation.

%policy & trade-off model
The main result that we derive for the MDP is an explicit characterization of the structure of an optimal policy (Theorem~\ref{thm_general}).
Such characterization is essential for computing the actual optimal policy, since standard dynamic programming (DP) algorithms do not directly apply to our formulation with a continuous and unbounded state space.
By analyzing the Bellman optimality equation for the MDP, we show that an optimal policy is \emph{threshold-based}: the following vehicle is supposed to merge with the leading vehicle if the predicted time interval between the arrivals of these two vehicles at the junction is less than a given threshold. This threshold essentially captures the comparison between the benefit (fuel savings over the cruising zone) and the cost (travel delay and/or additional fuel consumption over the coordinating zone) of platooning. If the predicted time interval is larger than the threshold, the following vehicle is supposed to travel with an optimal speed that is typically lower than the nominal speed, in anticipation of the arrival of the next vehicle as an opportunity for platooning.
%proof process - stochastic dynamic programming
The proof of this result uses an analytical approach based on stochastic dynamic programming. We first consider an $N$-stage, finite time horizon problem and show that the optimal policy for that problem is threshold-based. Then, we extend the argument to the infinite time horizon.
Our result is rather general in that (i) the arrival process is a general renewal process and (ii) the cost function is general (with assumptions on concavity).

%proposed algorithm - solution complexity
We also develop ready-to-implement algorithms for computing the optimal policy in practical settings.
For general arrival processes, we compare two solution algorithms, viz. bounded value iteration (BVI) and recursive approximation (RA). BVI is derived from value iteration (VI), a classical approach in dynamic programming \cite{bertsekas1995dynamic}. Since the state space of our MDP is continuous and unbounded, the BVI first trims and discretizes the state space and then proceeds analogously to classical VI. Since BVI exhaustively iterates the value as well as the optimal action for each discretized state, it is computationally costly. The RA algorithm that we consider is intended to address this challenge by incorporating the threshold-based structure that we identified. Instead of computing the optimal actions for all states, the RA algorithm assumes the threshold-based structure a priori and only computes the parameters thereof; this significantly accelerates the computation.
Furthermore, for Poisson arrival processes, we show that the parameters of the optimal policy can be obtained by solving a system of integral equations (Theorem~\ref{thm_poisson}), which is much faster than iteration-based algorithms such as BVI and RA. 
The computational efficiency of these algorithms are compared in a numerical example with various arrival processes. 
In particular, for Poisson processes, the convergence times for the BVI, RA, and integral equation-based methods are 9.6 hours, 54 minutes, and less than 1 second.

%SUMO validation - compare with previous strategy
We also validate our results in a realistic simulation environment. The simulation model is calibrated with real traffic data for a junction of US Interstate 210 (I210) obtained from the Freeway Performance Measurement System (PeMS \cite{varaiya2007freeway}).
We consider a hypothetical scenario where various percentages of all traffic are CAVs that can be platooned.
Simulation results show that, under our Real-time Strategy (RTS) of coordination, the average monetary savings (time plus fuel) at one junction is \$3736.8 per day.
In addition, we analyze the sensitivity of the optimal savings with respect to key model parameters, including the sizes of the cruising zone, and the discount factor for the MDP.
Finally, we present the interpretation of the RTS.

%paper outline
The rest of this paper is organized as follows.
In Section~\ref{sec_model}, we introduce the stochastic model and formulate the Markov decision process for coordinated platooning.
In Section~\ref{sec_analysis}, we characterize the structure of the optimal policy to the MDP.
In Section~\ref{sec_optimize}, we formulate and solve an optimization problem for coordinated platooning.
In Section~\ref{sec_results}, we present the numerical results using real traffic data.
In Section~\ref{sec_conclude}, we summarize the conclusions and propose several directions for future work. %1p
\section{Modeling and Formulation}
\label{sec_model}
In this section, we introduce our model for the process of coordinated platooning and formulate a Markov decision process (MDP) to minimize the system-wide travel cost based on this model.
Consider the scenario in Figure \ref{fig:color_figure}, where flows on two branches merge at the junction. Two detectors located equivalent distance $D_1$ from the junction on each branch would transmit the arrival time of each vehicle to the coordinator, which would determine whether the vehicle will merge with the previous platoon. When the coordinator sends the merging signal, the vehicle would accommodate its speed to meet the previous platoon at the junction, after which the vehicle will cruise together with the previous platoon, and experience fuel savings due to decreased air resistance. The vehicle will cruise with an optimal speed if the coordinator sends the non-merging signal. We assume that all vehicles share the identical path during the cruising zone $D_2$ to give insights into the one-junction coordination. The rest of this section is devoted to the details of our modeling and formulation.

\subsection{Modeling}
In this subsection, we define the arrival process of CAVs and the control actions that the platooning coordinator can take.
We also introduce the travel cost model for the platooning process, based on which we formulate the decision problem in the next subsection.
%% Simplify the problem into one branch; following given distribution f(x);
%% increased travel time;
%% our advantage ((mention our CDC input variable)); acceleration & deceleration; intra-platoon headway approximates 0;
%% fuel-time trade-off;
%% what kind of policy do we need (complicated to solve);

\subsubsection{Platooning coordination}

The coordination strategy is determined based on the arrival times on both branches. The following vehicle would merge with the previous platoon either on the same branch or on the other one. To simplify the modeling, two flows are assumed to appear on the same route, and one detector is used to record the arrival times of each vehicle. In Figure \ref{fig:SK}, the $(k-1)$th vehicle, $k=1,2, \ldots,$ enters the coordinating zone at time $T_{k-1}$. The following vehicle $k$ is recorded by either detector, and the arrival time is $T_k$. The inter-arrival time $X_k = T_k - T_{k-1}$ is assumed to follow an independent and identically distributed (i.i.d.) process, and the probability density function (PDF) is $f(x)$. The platooning strategy is realized by adapting the vehicle speed during the coordinating zone $D_1$. In real implementation, the intra-platoon headway $h_0$ is much smaller than coordinating distance $D_1$ and cruising distance $D_2$, then $h_0 \approx 0$ is assumed to simplify the analysis.

\subsubsection{State: predicted headway}
\label{Section: Predicted Headway}

Without loss of generality, we assume that vehicles cruise with an average speed $v$, hence the average transverse time on $D_1$ is $t_0 = \frac{D_1}{v}$. The $k$th vehicle would arrive at the junction at $(T_k + t_0)$ without changing the speed. In practice, vehicle $k$ can accelerate or decelerate to transverse the coordinating zone. We neglect the speed variation and assume that the vehicle would keep constant speed during the coordinating zone. In Figure \ref{fig:SK}, the actual arrival time would be earlier than $(T_k + t_0)$ when vehicle $k$ choose to accelerate, in which time reduction $U_k > 0$. Also $U_k < 0$ occurs in the deceleration case. The maximum time reduction for vehicle $k$ arises when vehicle $k$ meets vehicle $(k-1)$ at the junction. Then they would form a platoon, and vehicle $k$ would experience the reduced fuel consumption. We use $ \widetilde{S}_k =  \max \{U_k\}$ to denote the maximum time reduction for vehicle $k$. $\widetilde{S}_k$ can be negative when the actual arrival time of vehicle $(k-1)$ is larger than $(T_k + t_0)$, in which vehicle $k$ and vehicle $(k-1)$ would both drive with lower speeds. Note that the speed adaptation only occurs during the coordinating zone, and the spatial-average speed is assigned once passing the detector. After the junction, all vehicles would return to the average speed $v$. The coordinating zone is only used to coordinate platooning between consecutive vehicles by changing time reduction $U_k$ taking value from $\left(-\infty, \widetilde{S}_k \right]$, i.e., the speed during the coordinating zone, $v_k = \frac{D_1}{D_1 /v - U_k}$. Obviously, $\widetilde{S}_k$ should be less than $t_0$ if the two vehicles would merge at the junction.

A related work \cite{xiong2019analysis} designed the threshold-based policy based on the inter-arrival time $X_k$, which would result in frequent acceleration and thus high speed when forming long platoons. In addition, deceleration was not considered for the vehicle maneuver. In this paper, we incorporate the deceleration option, and use the time reduction $\widetilde{S}_k$ as the decision variable, which incorporates the information of the previous vehicle, and hence $\widetilde{S}_k$ is sensitive to platoon size. Specifically, the previous vehicle may leave the coordinating zone while the following vehicle arrives at the detection. In Figure \ref{fig:SK}, the actual arrival time at the junction for vehicle $k$ is $(T_k + t_0 - \widetilde{S_k})$ if it would merge with vehicle $(k-1)$. When vehicle $(k+1)$ enters the coordinating zone at $T_{k+1} > T_k + t_0 - \widetilde{S}_k$, vehicle $(k+1)$ cannot meet with vehicle $k$ at the junction. To generalize the decision variable for all vehicles, the concept of time reduction $\widetilde{S}_k$ is extended to the predicted headway $S_k$, which denotes the hypothetical time used to catch up with the first vehicle in the previous platoon. Consider the scenario where all vehicles drive towards the junction before the detector in Figure \ref{fig:platoon_series}. The predicted headway of vehicle $k$ still meets the equation $S_k = \widetilde{S}_k$. However, vehicle $(k+1)$ requires $(S_k + T_{k+1} - T_k)$ to catch up with vehicle $k$ (also vehicle $(k-1)$ due to $h_0 \approx 0$) even though vehicle $(k+1)$ would not merge with the previous platoon. Hence the predicted headway of vehicle $(k+1)$ is $S_{k+1} = S_k + T_{k+1} - T_k$. %Also vehicle $k+1$ can choose to accelerate or decelerate, and the accessible time increment $I_{k+1} \in \left(-t_0, +\infty \right)$.

\begin{figure}
  \centering
  \includegraphics[width=0.75\textwidth, trim=200 240 200 230,clip]{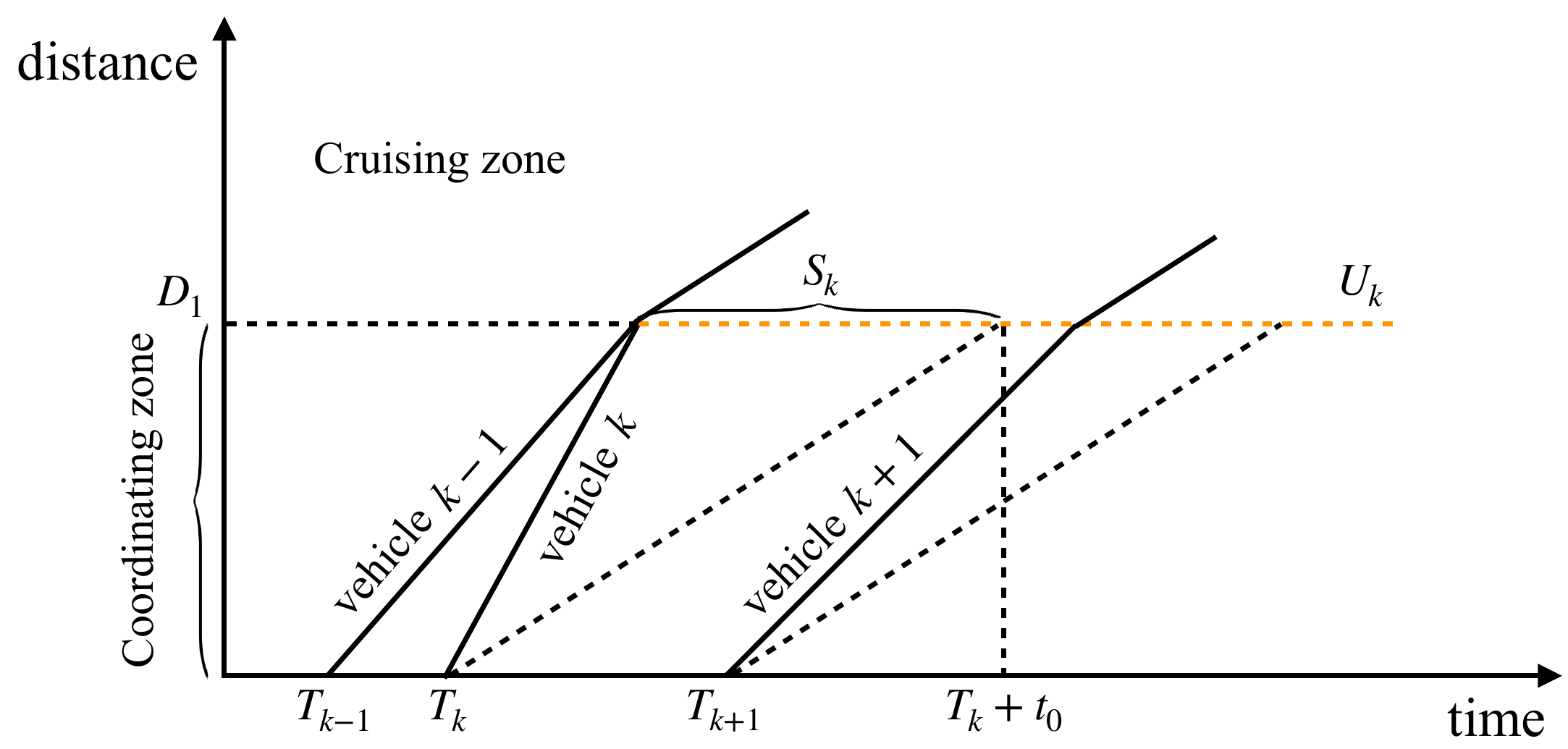}
  \caption{Predicted headway $S_k$ and time reduction $U_k$.}\label{fig:SK}
\end{figure}

\subsubsection{Problem definition}
We consider travel cost that consists of fuel consumption and travel time. When the vehicle is assigned to catch up with previous vehicle, the increased (decreased) speed would incur more (less) fuel consumption. The increased (decreased) speed during the coordinating zone would reduce (increase) the travel time. In addition, the vehicle would have less fuel consumption in $D_2$. We then formulate the optimization problem to minimize the travel cost. In section \ref{Section: Predicted Headway}, the predicted headway $S_k$ is proposed to be the decision variable for vehicle $k$. The action we used is the time reduction $U_k$, which affects the spatial-average speed, lies in $\left(-\infty , S_k \right]$ since the following vehicle could not surpass the leading vehicle. When $S_k \geq t_0$, the accessible time reduction $U_{k} \in \left(-\infty , t_0 \right)$.

The benefits of platooning consist of reduced travel time and improved fuel economy. Typically, we use the relative benefits instead of absolute values \cite{larson2015distributed}. The total cost for $k$th vehicle can be defined as:
\begin{align*}
    TC_k = -w_1 U_k + w_2 (\Delta F_1 - \Delta F_2),
\end{align*}
where $w_1$ represents the value of time, and $w_2$ denotes the fuel price. $\Delta F_1$ is the increased fuel during $D_1$, and $\Delta F_2$ is the fuel reduction resulting from reduced air resistance in $D_2$, which only arises when $k$th vehicle merges with $(k+1)$th vehicle.

Following the definition in \cite{larson2015distributed}, $\Delta F_1$ results from the air drag depending on the vehicle speed $v$ and the distance of $D_1$, $\Delta F_1 = \alpha D_1 \left( \frac{D_1}{D_1/v - U_k} \right)^2 - \alpha D_1 v^2$, where $\alpha$ represents the coefficient of increased fuel based on speed and distance. Typically, $\Delta F_2$ occurs when the following vehicle merges with the previous platoon, i.e., $U_k = S_k$, and is denoted as: $\Delta F_2 = \eta \phi D_2$, where $\eta$ represents the fuel saving fraction compared with driving alone, and $\phi$ denotes the fuel efficiency \cite{larson2015distributed}. When the following vehicle would not merge with the previous platoon, i.e., $U_k < S_k$, $\Delta F_2 = 0$. The total platooning cost for vehicle $k$ is:
\begin{align}
\label{Equation: total_cost}
    TC_k = \begin{cases}
    -w_1 U_k + w_2 \left(\alpha D_1 \left( \frac{D_1}{D_1/v - U_k} \right)^2 - \alpha D_1 v^2 - \eta \phi D_2 \right) & U_k = S_k, \\
    -w_1 U_k + w_2 \left(\alpha D_1 \left( \frac{D_1}{D_1/v - U_k} \right)^2 - \alpha D_1 v^2 \right) & U_k < S_k.
     \end{cases}
\end{align}
Note that $TC_k = -w_1 U_k + w_2 \left(\alpha D_1 \left( \frac{D_1}{D_1/v - U_k} \right)^2 - \alpha D_1 v^2 \right)$ when $S_k \geq t_0$, which also satisfies the case when $S_k \geq t_0 > U_k$.

To minimize the total travel cost for all vehicles, we need to find the optimal strategy of platooning based on the predicted headway. The optimization problem can be formulated as:
\begin{pro}
\label{Problem: Total Cost Minimization}
Consider $N$ vehicles, $N \to \infty$, drive towards the junction. The inter-arrival time $X_k$ is independent and identically distributed, and the probability density function is $f(x)$. The predicted headway for vehicle $k$ is $S_k$, $k = 1, 2, \ldots, N$. We need to find the optimal policy $\pi^* = \left\{\mu_1^*(s_1), \mu_2^*(s_2), \ldots \right\}$, to minimize the expected total travel cost for all vehicles.
\end{pro}

\subsection{MDP for coordinated platooning}
Specifically, the number of platooning combinations under $N$ vehicles is $2^N$, i.e., each vehicle can choose to merge with the previous vehicle or not (Figure \ref{fig:platoon_series}). In addition, the time reduction $U_k$ is continuous, which makes it difficult to find the $\pi^*$ in Problem \ref{Problem: Total Cost Minimization}. Since the inter-arrival time $X_k$ follows an independent and identically distributed (i.i.d.) process with the probability density function (PDF) $f(x)$, we then use the stochastic dynamic programming (SDP) \cite{bertsekas1995dynamic}, which captures the recursive interaction among consecutive vehicles, to find the optimal solution.

%% DP framework of solving the problem; advantage (recursive)
%% States (s_k + x > t_0), actions, reward;
%% recursive relationship between V(s)
We formulate the problem as a discrete-time problem with $N$ control steps. The state at each step is the predicted headway $S_k$. The action $A_k$ is the time reduction $U_k$. The reward $R(s_k, a_k) = - TC_k$,
\begin{align}
    \label{Equation: Reward}
    R(s_k, a_k) = \begin{cases}
    w_1 a_k + w_2 \left(\alpha D_1 v^2 - \alpha D_1 \left( \frac{D_1}{D_1/v - a_k} \right)^2 + \eta \phi D_2 \right) & \ a_k = s_k, \\
    w_1 a_k + w_2 \left(\alpha D_1 v^2 - \alpha D_1 \left( \frac{D_1}{D_1/v - a_k} \right)^2 \right) & \ a_k < s_k, \\
     \end{cases}
\end{align}
where $a_k \in \left(-\infty, t_0 \right)$ when $s_k \geq t_0$, and $a_k \in \left(-\infty, s_k \right]$ when $s_k < t_0$. The reward function is discontinuous at $a_k=s_k$ when $s_k < t_0$ due to the fuel savings in $D_2$.

\begin{dfn}
We use $G(s)$ to denote the reward when $a = s$ with $s<t_0$,
\begin{align*}
   G(s) = w_1 s + w_2 \left(\alpha D_1 v^2 - \alpha D_1 \left( \frac{D_1}{D_1/v - s} \right)^2 + \eta \phi D_2 \right),
\end{align*}
and use $H(a)$ to denote the reward when $a<s$,
\begin{align*}
   H(a) = w_1 a + w_2 \left(\alpha D_1 v^2 - \alpha D_1 \left( \frac{D_1}{D_1/v - a} \right)^2 \right).
\end{align*}
Note that $G(s) - H(s) = G(0) = \eta \phi D_2 > 0$.
\end{dfn}

The derivative of $G(s)$ is,
\begin{align*}
    \frac{\mathrm{d} G}{\mathrm{d}s} =  w_1 - 2 w_2 \alpha \left(\frac{D_1}{D_1/v - s} \right)^3.
\end{align*}

Since $w_1$, $w_2$, and $\alpha$ are all positive, $G(s)$ is a concave function with the maximum value when $s=D_1 \left( \frac{1}{v} - \sqrt[3]{\frac{2 w_2 \alpha}{w_1}}\right)$ by letting $\frac{\mathrm{d} G}{\mathrm{d}s} = 0$. 
We suppose $c_N = \argmax_{s} {G(s)}$, then $c_N < \frac{D_1}{v} = t_0$. Similarly, $H(s)$ is concave and has the maximum value when $s = c_N$. In addition, $\lim\limits_{s \to t_0^- }{G(s)} = - \infty$, and $\lim\limits_{s \to - \infty }{G(s)} = - \infty$.

\begin{asm}
\label{Assumption 1}
    To generalize the total cost which balances fuel consumption and travel time, the generic total cost $G(s)$ follows:
    \begin{itemize}
        \item[1.] G(s) is a concave function for $s <t_0$, $\lim\limits_{s \to t_0^- }{G(s)} = - \infty$, and $\lim\limits_{s \to - \infty }{G(s)} = - \infty$;
        \item[2.] $c_N = \argmax_{s} G(s)$;
        \item[3.] For all $s$, $H(s) = G(s) - G(0)$, where $G(0)$ is the fuel savings of platooning in $D_2$.
    \end{itemize}{}
\end{asm}

The next state $S_{k+1}$ is the predicted headway for vehicle $(k+1)$. The deterministic relationship between vehicle $k$ and vehicle $(k+1)$ is,
\begin{align*}
    S_{k+1} = X_{k+1} + A_k.
\end{align*}

\begin{dfn}

We use $V_k^*(s)$, $k = 1,2, \ldots, N$, to denote the maximum expected total rewards, i.e., optimal state-value function, from step $k$ to step $N$, and then for all $k$ and $s \in \mathcal{S}$,
\begin{align*}
    V_k^*(s) \coloneqq \max_{\pi} {\mathbb{E} \left[{\sum_{j=0}^{N-k}}  \gamma^{j} R(s_{j+k}, a_{j+k}) | S_{k} = s \right]},
\end{align*}
where $0 < \gamma < 1$ is the discount factor, which denotes the extent of valuing long-term rewards. We use the discounted reward $\gamma^{j} R(s_{j+k}, a_{j+k})$ at each step since we start with vehicle $k$ and end with vehicle $N$. Let $Q_k^*(s,a)$ be the optimal action-value function,
\begin{align*}
    Q_k^*(s, a) \coloneqq \max_{\pi} {\mathbb{E} \left[{\sum_{j=0}^{N-k}}  \gamma^{j} R(s_{j+k}, a_{j+k}) | S_{k} = s, A_{k} = a, \pi \right]}.
\end{align*}
Then for all $k$, $s \in \mathcal{S}$ and $a \in \mathcal{A}$, 
\begin{align*}
V_k^*(s) = \max_{a \in \mathcal{A}} Q_k^*(s, a).
\end{align*}

\end{dfn}

The objective is to maximize the discounted total rewards $\mathscr{R}$,
\begin{align*}
\mathscr{R} = \lim\limits_{N \to \infty }{ {\mathbb{E} \left[{\sum_{j=0}^{N-1}}  \gamma^{j} R(s_{j+1}, a_{j+1}) \right]}},
\end{align*}
where $\mathscr{R}$ is associated with a policy $\pi = \left\{\mu_1, \mu_2, \ldots \right\}$

\begin{figure}
  \centering
  \includegraphics[width=1.0\textwidth, trim=90 470 60 190,clip]{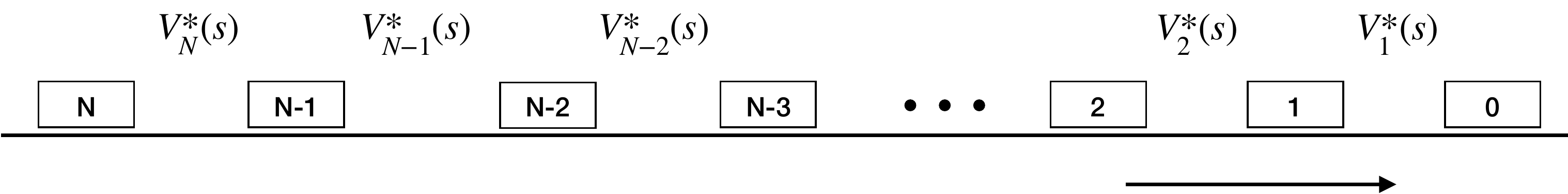}
  \caption{N-stage arrival of connected vehicles.}\label{fig:platoon_series}
\end{figure}

 %1p
\section{Optimal Policy under General Arrival Processes}
\label{sec_analysis}

In this section, we study the structure of the optimal policy to the DP problem formulated in the previous section with general arrival processes for CAVs. Specifically, we show that the optimal strategy is threshold-based as follows:

\begin{thm}
\label{thm_general}
An optimal policy to the coordinated platooning Problem \ref{Problem: Total Cost Minimization} is a threshold-based policy such that
\begin{align}
\label{Equation: threshold_policy}
    \mu^*(s)=\begin{cases}
    s & s \leq \theta,\\
    c & s > \theta,
    \end{cases}
\end{align}
where $\theta$ is the threshold, and $c$ is the constant time reduction.
\end{thm}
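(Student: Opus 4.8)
The plan is to solve the finite-horizon version of Problem~\ref{Problem: Total Cost Minimization} by backward induction on the Bellman optimality equation and then pass to the limit $N\to\infty$. Throughout I write the one-step continuation value as
\begin{align*}
W_k(a) \coloneqq \gamma\,\mathbb{E}\!\left[V_{k+1}^*(X+a)\right] = \gamma\int_0^{\infty} V_{k+1}^*(x+a)\,f(x)\,\mathrm{d}x ,
\end{align*}
and define the auxiliary function $\Phi_k(a)\coloneqq G(a)+W_k(a)$, i.e. the total discounted reward obtained by taking time reduction $a$ at the current stage and behaving optimally thereafter, \emph{with} the platooning bonus $G(0)$ collected. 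Using the relation $H(a)=G(a)-G(0)$ from Assumption~\ref{Assumption 1}, the Bellman equation for $V_k^*$ splits into a merge branch ($a=s$, reward $G$) and a no-merge branch ($a<s$, reward $H=G-G(0)$):
\begin{align*}
V_k^*(s) = \max\Big\{\,\Phi_k(s),\ \textstyle\sup_{a<s}\Phi_k(a)-G(0)\,\Big\}, \qquad s<t_0 ,
\end{align*}
while for $s\ge t_0$ only the no-merge branch is feasible and $V_k^*(s)=\sup_{a<t_0}\Phi_k(a)-G(0)$.

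Next I would reduce the optimal decision to a scalar comparison. Writing $P_k(s)\coloneqq\max_{a\le s}\Phi_k(a)$ for the running maximum of $\Phi_k$, the merge action is optimal exactly when $P_k(s)-\Phi_k(s)\le G(0)$. The whole argument then rests on the claim that $\Phi_k$ is \emph{unimodal}, with interior peak $c_k=\argmax_a\Phi_k(a)<t_0$ (interiority follows from $\Phi_k(a)\to-\infty$ at both ends of the feasible range, inherited from $G$ via Assumption~\ref{Assumption 1}). Granting unimodality, $P_k(s)=\Phi_k(s)$ for $s\le c_k$, so the gap is $0\le G(0)$ and merging is optimal; for $s>c_k$ the gap equals $\Phi_k(c_k)-\Phi_k(s)$, which is strictly increasing and therefore crosses the level $G(0)$ exactly once, at a point $\theta_k\in(c_k,t_0)$. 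Hence merging ($a=s$) is optimal for $s\le\theta_k$ and the best no-merge action is the constant $a=c_k$ for $s>\theta_k$, which is precisely the threshold form~\eqref{Equation: threshold_policy} with $\theta=\theta_k$ and $c=c_k$. This also shows that $V_k^*(s)$ equals $\Phi_k(s)$ for $s\le\theta_k$ and the constant $\Phi_k(c_k)-G(0)$ for $s>\theta_k$, which is the structure I feed back into the induction.

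The hard part is establishing the unimodality of $\Phi_k$ inductively, because the continuation value is \emph{not} concave: by the induction hypothesis $V_{k+1}^*$ rises to its peak at $c_{k+1}$, decreases to $\theta_{k+1}$, and is then flat, so it has a convex kink at $\theta_{k+1}$ (its slope jumps from negative up to zero). Averaging against $f$ to form $W_k$ smooths but does not remove this convexity, so $\Phi_k=G+W_k$ need not be concave and I cannot simply invoke preservation of concavity. I would instead prove a single-crossing property of $\Phi_k'$ directly: the concavity of $G$ makes $G'$ strictly decreasing with a single sign change at $c_N$, while the monotone-then-flat shape of $V_{k+1}^*$ lets me control the sign changes of $W_k'(a)=\gamma\int_0^{\infty}(V_{k+1}^*)'(x+a)f(x)\,\mathrm{d}x$; combining these to show $\Phi_k'$ vanishes only once is the crux. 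Where a fully general renewal density obstructs a single sign change, a log-concavity-type condition on $f$ secures it.

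Finally, I would pass to the infinite horizon. Since $0<\gamma<1$ and the per-stage reward is bounded above, standard discounted dynamic-programming arguments apply on the function class identified above: the finite-horizon values $V_k^*$ converge (monotonically / via the contraction property of the Bellman operator) to a stationary $V^*$ of the same unimodal-then-flat form. The thresholds $\theta_k$ and constants $c_k$, being continuous functionals of $\Phi_k$, then converge to limits $\theta$ and $c$, and the limiting stationary policy inherits the threshold form and is optimal for the discounted objective $\mathscr{R}$, which is the assertion of Theorem~\ref{thm_general}.
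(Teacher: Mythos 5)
Your reduction of the Bellman recursion to the scalar comparison between $\Phi_k(s)$ and $\sup_{a<s}\Phi_k(a)-G(0)$ is sound and coincides with the paper's function $J_{k-1}$, but the attempt has a genuine gap exactly where you yourself locate the crux: the unimodality of $\Phi_k$ over the whole action range is never established, and for a general renewal density it can actually fail. The continuation term $W_k(a)=\gamma\int_0^{\infty}V_{k+1}^*(x+a)f(x)\,\mathrm{d}x$ is a moving average of a unimodal-then-flat function, and moving averages do not preserve unimodality: take $f$ to be a mixture of two narrow peaks at widely separated lags, and $W_k$ becomes a superposition of two shifted copies of a hump of height $G(0)$, which the concavity of $G$ need not dominate. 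Your fallback of assuming log-concavity of $f$ would repair the single-crossing step, but it proves a strictly weaker statement than Theorem~\ref{thm_general}, which is claimed, and proved in the paper, for arbitrary $f$ (including the discrete-atom headway distributions the paper later uses in Section~\ref{sub_algorithm_compare}).

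The paper closes the two holes your plan leaves open by never requiring global unimodality. First, the backward induction (Lemma~\ref{Lemma 2}) is run only on $s\ge c_N$, where the inductive invariant is mere monotonicity: since $G'<0$ on $(c_N,t_0)$ and $V_k^*$ is decreasing on $[c_N,\infty)$, the function $J_{k-1}$ is decreasing on $[c_N,t_0)$ for \emph{every} $f$, which is all that is needed for a unique threshold $\theta_{k-1}\in(c_N,\theta_N)$; the paper explicitly declines to characterize $V_{k-1}^*$ below $c_N$ in finite horizon, precisely because that is where the unknown $f$ matters. Second, the region $s<c_N$ is settled only after passing to the stationary infinite-horizon limit, via the uniform value-difference bound $V^*(s_2)-V^*(s_1)\ge -G(0)$ for all $s_2>s_1$ (Lemma~\ref{lemma 3}); combined with $G$ being increasing below $c_N$, this yields $Q^*(s,s)-Q^*(s,a)\ge G(s)-H(a)-\gamma G(0)>(1-\gamma)G(0)>0$ for every no-merge action $a<s$, so merging dominates with no shape information about $V^*$ on that region at all (Lemma~\ref{lemma 4}). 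Your infinite-horizon passage is standard and unobjectionable, but as written your argument either needs the extra hypothesis on $f$ or should be restructured along these two lines --- monotonicity invariant above $c_N$, value-difference bound below $c_N$ --- to recover the theorem in the generality stated.
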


One can interpret the optimal policy as follows.
When the $k$th vehicle enters the coordinating zone, if the predicted headway $s_k$ is less than or equal to a threshold $\theta$, then vehicle $k$ will be instructed to catch up with vehicle $(k-1)$ and thus platoon. Otherwise, vehicle $k$ will slightly slow down with the constant time reduction $c$ in anticipation of platooning with vehicle $(k+1)$.

We prove Theorem~\ref{thm_general} by analyzing the value function associated with the MDP.
We firstly study the finite-horizon, $N$-stage problem and characterize the structure of the optimal policy for $s \geq c_N$.
Then we take the limit of $N$ and obtain the result. Furthermore, the general threshold-based policy is presented. Throughout the derivation of this result, we assume a general arrival process and the generic cost function in Assumption \ref{Assumption 1}.

\subsection{Optimal returns of $N$-stage problem}
\label{Section: N-stage Problem}

\begin{prp}
\label{Proposition 1}
The optimal policy for vehicle $k$, $k = 1,2, \ldots, N$ is a threshold-based policy for $s \geq c_N$ such that
 \begin{align*}
    \mu^*_{k}(s_{k})=\begin{cases}
    s_{k} & c_N \leq s_{k} \leq \theta_{k},\\
    c_{k} & s_{k} > \theta_{k},
    \end{cases}
\end{align*}
where $\theta_k$ is the threshold for vehicle $k$, and $c_k$ is the optimal constant time reduction for vehicle k.
\end{prp}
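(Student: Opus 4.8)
The plan is to argue by \emph{backward induction} on the stage index $k$, running from $k=N$ down to $k=1$, using the finite-horizon Bellman recursion
\[
V_k^*(s) = \max_{a}\Big\{R(s,a) + \gamma\,\mathbb{E}_X\big[V_{k+1}^*(a+X)\big]\Big\},
\]
with transition $S_{k+1}=a+X$ and the crucial sign constraint $X\ge 0$. Since the reward in \eqref{Equation: Reward} equals $G(s)$ on the merge branch ($a=s$, feasible only for $s<t_0$) and $H(a)$ on the no-merge branch ($a<s$), I would split the maximization accordingly and, for a fixed continuation $V_{k+1}^*$, introduce the merge value $M_{k+1}(s):=G(s)+\gamma\,\mathbb{E}_X[V_{k+1}^*(s+X)]$ and the no-merge value $W_{k+1}(a):=H(a)+\gamma\,\mathbb{E}_X[V_{k+1}^*(a+X)]$ as a function of the chosen action. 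The identity $G(s)=H(s)+G(0)$ from Assumption~\ref{Assumption 1} then yields the clean relation $M_{k+1}(s)=W_{k+1}(s)+G(0)$, so merging just adds the platooning bonus $G(0)>0$ to the no-merge value evaluated at $a=s$.

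The invariant I would propagate is that, for every $k$, $V_k^*$ is bounded above on the whole state space and is \emph{non-increasing} on $[c_N,\infty)$. The base case $k=N$ is immediate: with no continuation, $V_N^*(s)=\max\{G(s),\,\sup_{a<s}H(a)\}$, and since $H$ is concave with maximizer $c_N$, for $s\ge c_N$ this reduces to $\max\{G(s),H(c_N)\}$, the maximum of a strictly decreasing function and a constant, hence non-increasing; boundedness follows from $G\le G(c_N)$. For the inductive step the key structural lemma is that the optimal no-merge action is a \emph{state-independent constant} $c_k\le c_N$. This is exactly where $X\ge 0$ enters: for $a\ge c_N$ we have $a+X\ge c_N$, so the inductive monotonicity of $V_{k+1}^*$ on $[c_N,\infty)$ makes $\mathbb{E}_X[V_{k+1}^*(a+X)]$ non-increasing in $a$ there; together with $H$ being non-increasing past $c_N$, $W_{k+1}$ is non-increasing on $[c_N,\infty)$, so its maximizer $c_k:=\argmax_a W_{k+1}(a)$ satisfies $c_k\le c_N$. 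Because $c_k\le c_N\le s$ for every $s\ge c_N$, the best no-merge value for all such states is the single constant $C_{k+1}:=W_{k+1}(c_k)$.

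It then remains to compare the two branches, so that on $[c_N,\infty)$ the recursion collapses to $V_k^*(s)=\max\{M_{k+1}(s),\,C_{k+1}\}$. On the feasible merge interval $[c_N,t_0)$ the function $M_{k+1}$ is \emph{strictly decreasing} ($G$ is strictly decreasing past $c_N$ by concavity and the $-\infty$ limits, and the expectation term is non-increasing by the same monotonicity argument) and tends to $-\infty$ as $s\to t_0^-$ since $G(s)\to-\infty$ while the continuation stays bounded; for $s\ge t_0$ merging is infeasible and the action is forced to $c_k$. Hence $M_{k+1}$ crosses the constant level $C_{k+1}$ at most once, which defines the threshold $\theta_k\in[c_N,t_0)$: for $c_N\le s\le\theta_k$ the merge value dominates and $\mu_k^*(s)=s$, while for $s>\theta_k$ the constant $C_{k+1}$ dominates and $\mu_k^*(s)=c_k$, which is precisely the claimed threshold policy. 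Finally $V_k^*=\max\{M_{k+1},C_{k+1}\}$ is the maximum of a decreasing function and a constant, hence non-increasing on $[c_N,\infty)$ and bounded above, restoring the invariant and closing the induction.

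The main obstacle I anticipate is the structural lemma $c_k\le c_N$ and, more broadly, the propagation of monotonicity through the Bellman operator. Note that $V_k^*$ is \emph{not} concave: on $[c_N,\infty)$ it is a strictly decreasing concave piece that flattens to a constant beyond $\theta_k$, so the kink at $\theta_k$ is convex rather than concave. The proof therefore cannot rely on preservation of concavity (which genuinely fails) and must lean entirely on the monotonicity invariant together with $X\ge 0$. A secondary technical point is to justify, under mild regularity on the inter-arrival density $f$ and on the growth of $H$, that $W_{k+1}$ attains its maximum (so $c_k$ exists) and that $\mathbb{E}_X[V_{k+1}^*(a+X)]$ is continuous in $a$; these follow from $V_{k+1}^*$ being bounded above together with $H(a)\to-\infty$ as $a\to-\infty$, but should be stated carefully.
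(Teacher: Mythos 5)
Your proposal follows essentially the same route as the paper's proof (the base case of Lemma~\ref{Lemma 1} plus the inductive Lemma~\ref{Lemma 2}): backward induction on $k$, the decomposition into a merge value $M_{k+1}(s)=G(s)+\gamma\,\mathbb{E}[V_{k+1}^*(s+X)]$ and a no-merge value $W_{k+1}(a)=M_{k+1}(a)-G(0)$ (the paper's $J_{k-1}$ and $J_{k-1}-G(0)$), a state-independent no-merge action $c_k\le c_N$ obtained from monotonicity of $V_{k+1}^*$ on $[c_N,\infty)$ together with $X\ge 0$, and your observation that concavity is \emph{not} preserved so monotonicity must carry the induction is exactly the paper's mechanism. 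However, there is one genuine gap, precisely at the step that locates the threshold: you assert that the crossing of the strictly decreasing $M_{k+1}$ with the constant level $C_{k+1}=\max_a W_{k+1}(a)$ defines $\theta_k\in[c_N,t_0)$, but you never verify $M_{k+1}(c_N)\ge C_{k+1}$, and your invariant (``bounded above and non-increasing on $[c_N,\infty)$'') is too weak to yield it. Indeed $C_{k+1}-W_{k+1}(c_N)\le \gamma\bigl(\sup_s V_{k+1}^*(s)-Z_{k+1}\bigr)$, where $Z_{k+1}$ is the tail constant of $V_{k+1}^*$: the maximizer $c_k$ lies below $c_N$, so $c_k+X$ can land in the region $s<c_N$ where you control $V_{k+1}^*$ only through the unquantified upper bound. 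If the overshoot $\sup_s V_{k+1}^*(s)-Z_{k+1}$ exceeded $G(0)/\gamma$, the no-merge constant would dominate $M_{k+1}(c_N)$, the merge region inside $[c_N,\infty)$ would be empty, and the claimed two-branch structure (and the bound $\theta_k>c_N$ used downstream in Proposition~\ref{Proposition 2}) would fail; nothing in your invariant excludes this.

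The repair is exactly the quantitative strengthening the paper builds into the hypothesis of Lemma~\ref{Lemma 2}: carry additionally $V_{k+1}^*(s)=Z_{k+1}$ for $s\ge\theta_{k+1}$ and $\sup_s V_{k+1}^*(s)=V_{k+1}^*(c_{k+1})=Z_{k+1}+G(0)$, i.e.\ the value function exceeds its tail constant by exactly the platoon bonus $G(0)$. Then $C_{k+1}\le H(c_N)+\gamma\bigl(Z_{k+1}+G(0)\bigr)$ while $W_{k+1}(c_N)\ge H(c_N)+\gamma Z_{k+1}$, so $M_{k+1}(c_N)-C_{k+1}\ge(1-\gamma)G(0)>0$ --- this is the content of the paper's inequality \eqref{Equation: upper_bound_Z_k_1} --- and the symmetric computation \eqref{Equation: lower_bound_Z_k_1} gives $\theta_k<\theta_N$, while the comparison of $J_{k-1}$ at $s\le\theta_N'$ with $J_{k-1}(c_N)$ gives $c_k>\theta_N'$; these compact ranges are what the paper later uses when passing to the infinite horizon. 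The strengthened invariant restores itself at no extra cost in your own construction: $M_{k+1}(s)=W_{k+1}(s)+G(0)\le C_{k+1}+G(0)$ yields $V_k^*\le Z_k+G(0)$ globally (with $Z_k=C_{k+1}$), even on $s<c_N$ where the shape of $V_k^*$ is unknown --- the paper explicitly defers that region to Proposition~\ref{Proposition 3} --- and merging at $s=c_k$ attains $M_{k+1}(c_k)=Z_k+G(0)$; the base case holds with $Z_N=H(c_N)$. With this one addition your argument closes and essentially coincides with the paper's.
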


We prove Proposition \ref{Proposition 1} by starting from the initial optimal returns for vehicle $N$, and then derive the optimal policies for all vehicles using the induction of value functions with known transition probabilities.

\begin{dfn}
    For all $s_k$, $a_k$ takes value from $\left(- \infty, s_k \right]$. We use $a_k^1 < s_k$ to denote the action of non-merging, and use $a_k^2 = s_k$ to represent the action of platooning with the previous vehicle.
\end{dfn}

%%%%%%%%%%%%%%%%%%%%%% Vehicle N 
\subsubsection{Initial optimal returns}
\begin{lmm}
\label{Lemma 1}
The optimal policy for vehicle $N$ is a threshold-based policy such that
\begin{align}
\label{Equation: policy_vehicle_N}
    \mu_N^*(s_N)=\begin{cases}
    s_N & s_N \leq \theta_N,\\
    c_N & s_N > \theta_N,
    \end{cases}
\end{align}
where $\theta_N $ is the threshold, $\theta_N \in (c_N, t_0)$, and $c_N$ is the optimal constant time reduction.
\end{lmm}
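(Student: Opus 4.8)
The plan is to exploit the fact that vehicle $N$ is the terminal stage, so that its optimal value function carries no continuation term. First I would observe that in the definition of $V_N^*(s_N)$ the summation over $j$ ranges only over $j=0$, hence $V_N^*(s_N) = \max_{a_N} R(s_N, a_N)$: the optimal policy for the last vehicle is obtained by a purely myopic, single-stage maximization of the immediate reward. I would then partition the feasible actions according to the two branches of $R$ in \eqref{Equation: Reward}: the merging action $a_N = s_N$, which is available only when $s_N < t_0$ and yields $G(s_N)$, and the non-merging actions $a_N < s_N$, which yield $H(a_N)$.

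The second step is to identify the best non-merging reward. Since $H$ is concave with unique maximizer at $c_N < t_0$, for $s_N > c_N$ the unconstrained maximizer $c_N$ is feasible, so the best non-merging reward equals $H(c_N)$, attained at $a_N = c_N$. For $s_N \le c_N$, the map $H$ is increasing on the admissible range, so every non-merging reward is strictly below $H(s_N)$; and since part 3 of Assumption \ref{Assumption 1} gives $H(s) = G(s) - G(0)$ with $G(0) > 0$, we have $H(s_N) < G(s_N)$. Thus the attained merging value $G(s_N)$ strictly dominates the supremum of all non-merging values, so merging is optimal and $\mu_N^*(s_N) = s_N$ on this region. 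This already establishes the \emph{merge} branch for small headways.

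The crux is the region $s_N > c_N$, where I would compare the merging reward $G(s_N)$ against the constant benchmark $H(c_N)$. Setting $g(s) := G(s) - H(c_N)$, I expect $g$ to be strictly decreasing on $(c_N, t_0)$ because $G$ is concave with maximum at $c_N$. At the left end, $g(c_N) = G(c_N) - H(c_N) = G(0) > 0$; at the right end, $\lim_{s \to t_0^-} G(s) = -\infty$ forces $g(s) \to -\infty$. By the intermediate value theorem together with strict monotonicity, there is a unique $\theta_N \in (c_N, t_0)$ with $g(\theta_N) = 0$. Consequently merging is optimal exactly when $s_N \le \theta_N$ and non-merging at $a_N = c_N$ is optimal when $s_N > \theta_N$, which is the claimed policy \eqref{Equation: policy_vehicle_N}; the case $s_N \ge t_0$ (no merging possible, best non-merge at $c_N$) is subsumed automatically since $\theta_N < t_0$.

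The main obstacle I anticipate is the discontinuity of $R$ at $a_N = s_N$ together with the half-open action set $(-\infty, s_N]$: for $s_N \le c_N$ the supremum of the non-merging rewards is not attained, so I must argue that the attained merging value strictly beats this unattained supremum rather than merely comparing two attained maxima. The inequality $H(s_N) < G(s_N)$ furnished by $G(0) > 0$ is precisely what makes this comparison clean, while the blow-up $G(s) \to -\infty$ as $s \to t_0^-$ is what pins the threshold strictly below $t_0$, i.e. guarantees $\theta_N \in (c_N, t_0)$.
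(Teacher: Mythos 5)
Your proposal is correct and follows essentially the same route as the paper's proof: both reduce vehicle $N$'s problem to a single-stage maximization of the immediate reward, compare the merging value $G(s_N)$ against the best non-merging benchmark $H(c_N)=Z_N$, and pin down the unique threshold $\theta_N \in (c_N, t_0)$ from $G(\theta_N)=Z_N$ using concavity of $G$ and $\lim_{s \to t_0^-} G(s) = -\infty$. Your careful handling of the unattained supremum over $\{a_N < s_N\}$ when $s_N \le c_N$ is the same strict-inequality chain $H(a_N) < H(s_N) < G(s_N)$ that the paper uses in its case (i).
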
{}

\begin{proof}
Since there is no vehicle after $N$th vehicle, $V_N^*(s) = \max_{a_N}R(s_N, a_N)$, for all $s_N$. 
We use $\left(Z_N + G(0) \right)$ to denote the maximum value (Figure \ref{fig:V_N(s)}), and $Z_N$ is the maximum value of $H(s)$ when $s<t_0$ due to $H(s) = G(s) - G(0)$. $\lim\limits_{s \to t_0^- }{G(s)} = - \infty$, hence
\begin{align*}
    \exists! \theta_N \in (c_N, t_0) \colon      \ G(\theta_N)=Z_N.
\end{align*}{}
For $s < c_N$, $\lim\limits_{s \to - \infty }{G(s)} = - \infty$, hence
\begin{align*}
    \exists! \theta_N^' \in (-\infty, c_N) \colon \ G(\theta_N^')=Z_N.
\end{align*}{}

\begin{figure}[hbt]
  \centering
  \includegraphics[width=0.43\textwidth,trim=250 250 220 210,clip]{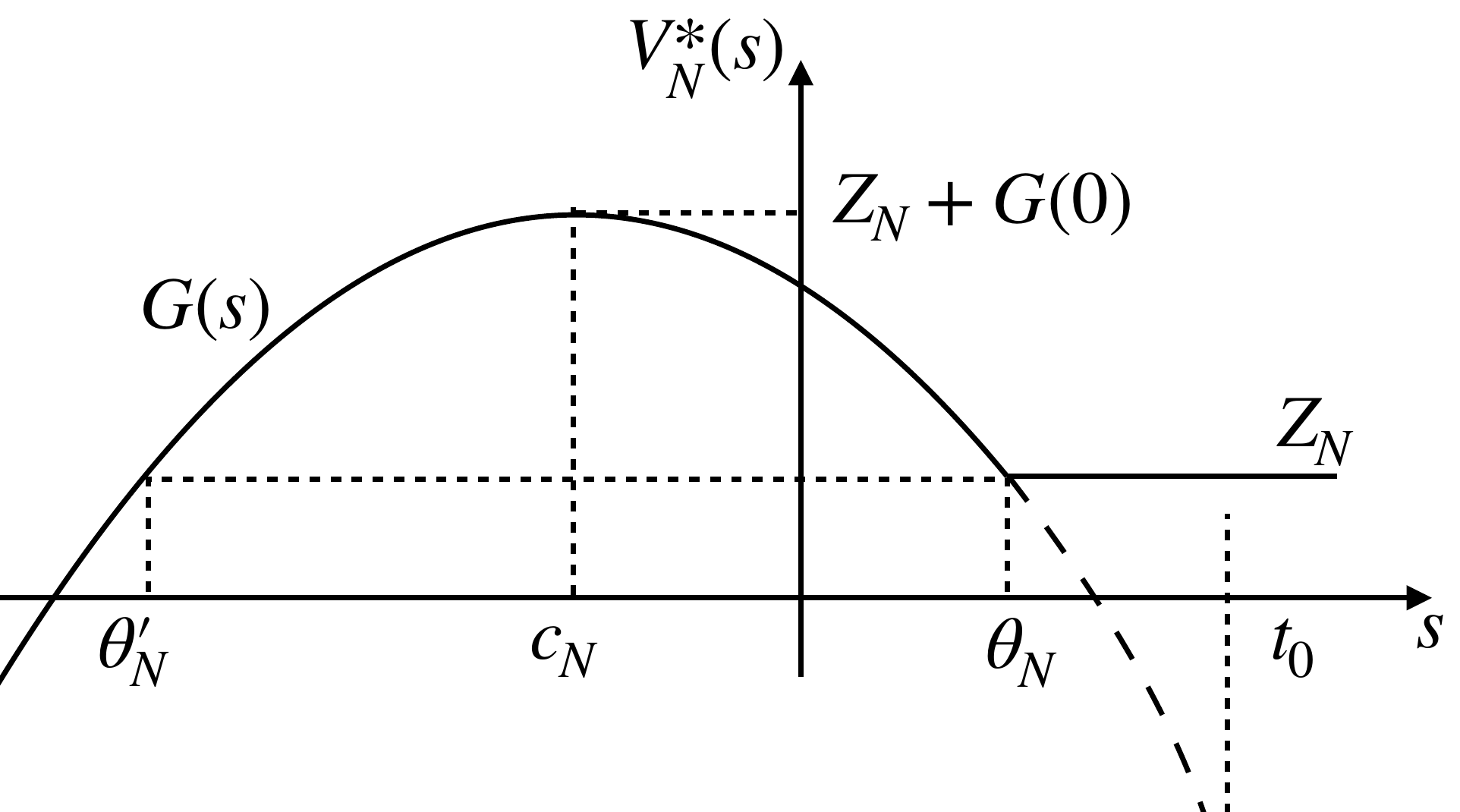}
  \caption{Value function for vehicle $N$.}
  \label{fig:V_N(s)}
\end{figure}

We then analyze the action-value functions under different $s$:
\begin{itemize}

\item[(i)] When $s_N \in \left( -\infty,  c_N \right]$, for all  $a_N^1 < s_N$ and $a_N^2 = s_N$,
\begin{align*}
    Q_N^*(s_N,a_N^1) & = H(a_N^1) < H(s_N), \\
    Q_N^*(s_N,a_N^2) & = G(s_N),
\end{align*}
and then $Q_N^*(s_N,a_N^1) < H(s_N) < G(s_N) = Q_N^*(s_N,a_N^2)$. The optimal action is $a_N^2=s_N$, i.e., merging with the previous vehicle.

\item[(ii)] When $ s_N \in \left( c_N,  \theta_N \right]$, for all  $a_N^1 < s_N$ and $a_N^2 = s_N$,
\begin{align*}
    Q_N^*(s_N,a_N^1) & = H(a_N^1) < H(c_N)=Z_N, \\
    Q_N^*(s_N,a_N^2) & = G(s_N),
\end{align*}
and then $Q_N^*(s_N,a_N^1) < Z_N \leq G(s_N) = Q_N^*(s_N,a_N^2)$. The optimal action is $a_N^2=s_N$.

\item[(iii)] When $ s_N \in \left( \theta_N,  t_0 \right)$, for all  $a_N^1 < s_N$ and $a_N^2 = s_N$,
\begin{align*}
    & c_N  = \argmax_{a_N^1} Q_N^*(s_N,a_N^1), \\
    & Q_N^*(s_N, c_N)  = Z_N, \\
    & Q_N^*(s_N,a_N^2) = G(s_N) < Z_N,
\end{align*}
and then the optimal action is $a_N^1 = c_N$, i.e., cruising with the non-merging time reduction $c_N$.

\item[(iv)] When $s_N \in \left[t_0, +\infty \right)$,
\begin{align*}
    & Q_N^*(s_N,a_N) = H(a_N), \\
    & c_N = \argmax_{a_{N}} Q_N^*(s_N,a_N),
\end{align*}
and then the optimal action is also $a_N = c_N$.
\end{itemize}

Above all, the optimal policy for vehicle $N$ is a threshold-based policy as shown in Lemma \ref{Lemma 1}.
Furthermore, the value function $V_N^*(s)$ is:
\begin{align*}
   V_N^*(s) = \begin{cases}
    G(s) & s \leq \theta_N, \\
    Z_N & s > \theta_N.
    \end{cases}
\end{align*}

\end{proof}

\begin{rem}
Vehicle $N$ should merge with the previous vehicle if the predicted headway is below the threshold $\theta_N$, otherwise it should cruise with the optimal constant time reduction $c_N$. In practice, we usually value fuel consumption over travel time, i.e., $w_2 \gg w_1$, then $c_N=D_1 \left( \frac{1}{v} - \sqrt[3]{\frac{2 w_2 \alpha}{w_1}}\right) < 0 $. Vehicle $N$ would cruise with lower speed with time reduction $c_N$.

\end{rem}

%%%%%%%%%%%%%%%%%%%%%%%%%% value function induction
\subsubsection{Induction of value function}
\label{Section: Induction for K vehicle}

\begin{lmm}
\label{Lemma 2}

If vehicle $k$, $k= 2,\ldots, N$, follows an optimal threshold-based policy for $s \geq c_N$ such that
 \begin{align*}
    \mu^*_{k}(s_{k})=\begin{cases}
    s_{k} & c_N \leq s_{k} \leq \theta_{k},\\
    c_{k} & s_{k} > \theta_{k},
    \end{cases}
\end{align*}
where $\theta_{k}$ is the threshold for vehicle $k$, $\theta_{k} \in \left(c_N, \theta_N \right] $, and $c_k$ is the optimal constant time reduction for vehicle $k$, $ c_k \in \left( \theta_N^', c_N \right]$. In addition, $V^*_k(s)$ is monotonically decreasing for $s \geq c_N$, $V^*_k(s) = Z_k$ for $s \geq \theta_k$, and $V^*_k(c_k) = Z_k + G(0)$.

Then vehicle $(k-1)$ also follows an optimal threshold-based policy for $s \geq c_N$,
\begin{align*}
    \mu^*_{k-1}(s_{k-1})=\begin{cases}
    s_{k-1} & c_N \leq s_{k-1} \leq \theta_{k-1},\\
    c_{k-1} & s_{k-1} > \theta_{k-1},
    \end{cases}
\end{align*}
where $\theta_{k-1}$ is the threshold for vehicle $(k-1)$, $\theta_{k-1} \in \left(c_N, \theta_N \right) $, and $c_{k-1}$ is the optimal constant time reduction for vehicle $(k-1)$, $ c_{k-1} \in \left( \theta_N^', c_N \right] $. $V^*_{k-1}(s)$ is monotonically decreasing for $s \geq c_N$, $V^*_{k-1}(s) = Z_{k-1}$ for $s \geq \theta_{k-1}$, and $V^*_{k-1}(c_{k-1}) = Z_{k-1} + G(0)$.

\end{lmm}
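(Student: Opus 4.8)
The plan is to run one step of backward induction through the Bellman optimality equation. Writing the transition $S_k = X_k + a$ and unrolling the discounted return, the optimal value for vehicle $(k-1)$ satisfies
\begin{align*}
V_{k-1}^*(s) = \max_{a \le s}\Big[ R(s,a) + \gamma\, \mathbb{E}_X\!\left[V_k^*(a+X)\right]\Big],
\qquad
W_k(a) \coloneqq \int_0^\infty V_k^*(a+x)\,f(x)\,\mathrm{d}x .
\end{align*}
Splitting the action set into the merging action $a=s$ (reward $G(s)$) and the non-merging actions $a<s$ (reward $H(a)$), and using $G=H+G(0)$ from Assumption~\ref{Assumption 1}, I would set $J(a) \coloneqq H(a)+\gamma W_k(a)$ and reduce the Bellman equation to
\begin{align*}
V_{k-1}^*(s) = \max\Big\{\, J(s)+G(0),\ \sup_{a<s} J(a)\,\Big\}.
\end{align*}
This makes the trade-off transparent: the merging value at $s$ exceeds the same-point non-merging value by exactly the platooning gain $G(0)>0$.

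The core of the argument is a one-sided monotonicity of $J$. Differentiating under the integral gives $W_k'(a)=\mathbb{E}_X[(V_k^*)'(a+X)]$. Since $X\ge 0$, for $a\ge c_N$ every argument $a+X\ge c_N$ lies in the region where, by the induction hypothesis, $V_k^*$ is monotonically decreasing; hence $W_k'(a)\le 0$ there, and together with $G'(a)\le 0$ for $a\ge c_N$ this yields $J'(a)=G'(a)+\gamma W_k'(a)\le 0$ on $[c_N,\infty)$. Because $J(a)\to-\infty$ as $a\to-\infty$ (both $G$ and $V_k^*$ diverge to $-\infty$) and $J'(c_N)\le 0$, the global maximizer $c_{k-1}\coloneqq\argmax_a J(a)$ satisfies $c_{k-1}\le c_N$; set $Z_{k-1}\coloneqq J(c_{k-1})$. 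For every $s\ge c_N$ the maximizer $c_{k-1}<s$ is admissible, so $\sup_{a<s}J(a)=Z_{k-1}$ and $V_{k-1}^*(s)=\max\{J(s)+G(0),\,Z_{k-1}\}$. Since $J(s)+G(0)$ is nonincreasing on $[c_N,\infty)$, starts strictly above $Z_{k-1}$ and eventually falls below it, there is a unique crossing $\theta_{k-1}$; this delivers the threshold policy, the value-function form ($J(s)+G(0)$ below $\theta_{k-1}$, constant $Z_{k-1}$ above), the monotone decrease of $V_{k-1}^*$ on $[c_N,\infty)$, and $V_{k-1}^*(c_{k-1})=Z_{k-1}+G(0)$.

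The delicate part I expect to be the main obstacle is twofold. First, $V_k^*$ is \emph{not} globally concave (it has a convex kink at $\theta_k$ where its slope jumps from negative to zero), so $J$ need not be concave and the single-crossing cannot be obtained from concavity; the argument must rely on the fact that the claim is restricted to $s\ge c_N$, which is precisely the region where the clean sign of $J'$ is available. Second, establishing the \emph{strict} bounds $\theta_{k-1}\in(c_N,\theta_N)$ and $c_{k-1}\in(\theta_N',c_N]$ requires quantitative comparison with the base (vehicle-$N$) quantities. Here I would use the uniform estimates $Z_k\le W_k(c_N)$ (since $V_k^*\ge Z_k$ on $[c_N,\infty)$) and $W_k(a)\le Z_k+G(0)$ (the global maximum of $V_k^*$), together with $G(c_N)=Z_N+G(0)$. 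These give $Z_{k-1}\ge Z_N+\gamma W_k(c_N)$, from which $\theta_{k-1}<\theta_N$ follows from $W_k(\theta_N)\le W_k(c_N)$, while $\theta_{k-1}>c_N$ and $c_{k-1}>\theta_N'$ both reduce to the inequality $(1-\gamma)G(0)>0$. Thus the strict bounds propagate precisely because the discount factor satisfies $\gamma<1$ and the platooning gain $G(0)$ is strictly positive; verifying these inequalities carefully is the step that carries the real content.
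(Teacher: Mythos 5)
Your proposal is correct and follows essentially the same route as the paper's proof of Lemma~\ref{Lemma 2}: your $J(a)=H(a)+\gamma W_k(a)$ is the paper's $J_{k-1}(a)-G(0)$, your one-sided monotonicity of $J$ on $[c_N,\infty)$ and location of the maximizer $c_{k-1}\in(\theta_N',c_N]$ with a unique crossing $\theta_{k-1}$ mirror the paper's argument step for step, and your observation that the strict bounds reduce to $(1-\gamma)G(0)>0$ is precisely the content of the paper's comparison chains \eqref{Equation: lower_bound_Z_k_1} and \eqref{Equation: upper_bound_Z_k_1}. The only slip is cosmetic: your weak bound $W_k(\theta_N)\le W_k(c_N)$ delivers only $\theta_{k-1}\le\theta_N$, and the strict conclusion $\theta_{k-1}<\theta_N$ needs $W_k(c_N)>W_k(\theta_N)$ (i.e., $f$ placing positive mass where $V_k^*>Z_k$, namely on $(0,\theta_k-c_N)$) --- the same tacit strictness the paper itself invokes when it asserts $\gamma\int_0^{+\infty} f(x)V_k^*(c_N+x)\,\mathrm{d}x>\gamma Z_k$.
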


\begin{proof}
We consider vehicle $k$ and vehicle $(k-1)$,
\begin{align*}
V^*_{k-1}(s) = \max_{a_{k-1} \in \mathcal{A}_{k-1}} \left[ R(s_{k-1}, a_{k-1}) +  \gamma \sum_{s_k \in \mathcal{S}_k} \mathbb{P}(s_k | s_{k-1}, a_{k-1}) V^*_{k}(s_k) \right].
\end{align*}
When $ s_{k-1} < t_0$, for all $a_{k-1}^1 < s_{k-1}$ and $a_{k-1}^2 = s_{k-1}$,
\begin{align*}
    Q_{k-1}^*(s_{k-1},a_{k-1}^1) = H(a_{k-1}^1) + \gamma \int_0^{+\infty} f(x) V_k^*(a_{k-1}^1+x) \mathrm{d}x, \\
    Q_{k-1}^*(s_{k-1},a_{k-1}^2) = G(s_{k-1}) + \gamma \int_0^{+\infty} f(x) V_k^*(s_{k-1}+x) \mathrm{d}x.
\end{align*}

We use $J_{k-1}(s)$ to denote $Q_{k-1}^*(s_{k-1},a_{k-1}^2)$,
\begin{align}
\label{Equation: J_{k-1}}
    J_{k-1}(s) = G(s) + \gamma \int_0^{+\infty} f(x) V_k^*(s+x) \mathrm{d}x, \ \forall s < t_0,
\end{align}
and then $Q_{k-1}^*(s_{k-1},a_{k-1}^1) = J_{k-1}(a_{k-1}^1) - G(0)$ for all $a_{k-1}^1 < s_{k-1}$.

For $s \leq \theta_N^'$,
\begin{align*}
    J_{k-1}(s) = \underbrace{G(s)}_{\leq Z_N} + \underbrace{\gamma \int_0^{+\infty} f(x) V_k^*(s+x) \mathrm{d}x}_{< \gamma \left( Z_k + G(0) \right)},
\end{align*}
We then derive the lower bound of $J_{k-1}(c_{N})$,
\begin{align}
\label{Equation: lower_bound_J_{k-1_c_N}}
    J_{k-1}(c_{N}) = \underbrace{G(c_N)}_{= Z_N + G(0)} + \underbrace{\gamma \int_0^{+\infty} f(x) V_k^*(c_N + x) \mathrm{d}x}_{> \gamma Z_k},
\end{align}
hence
\begin{align*}
    J_{k-1}(s) < Z_N + \gamma \left( Z_k + G(0) \right) < Z_N + G(0) + \gamma Z_k < J_{k-1}(c_{N}), \ \forall s \leq \theta_N^'.
\end{align*}

Since $\frac{\mathrm{d} G}{\mathrm{d}s} < 0$, and $V_k^*(s)$ is monotonically decreasing for $ s \geq c_N$, $J_{k-1}(s)$ in Equation \eqref{Equation: J_{k-1}} is monotonically decreasing in $\left[c_N, t_0 \right)$ for all $f(x)$ (Figure \ref{fig:V_K_1(s)}), we suppose
\begin{align*}
 c_{k-1} = \sup \left\{s \in  (\theta_N^', c_N ] \colon J_{k-1}(s) = Z_{k-1} + G(0)   \right\},
\end{align*}
where $Z_{k-1} + G(0) = \max_{s \in \left(- \infty, c_N \right]} J_{k-1}(s)$.

\begin{figure}[hbt]
  \centering

  \subfigure[Value function of $k$th vehicle.]
  {\label{fig:V_K(s)}
  \includegraphics[width=0.42\textwidth, trim=250 230 250 210,clip]{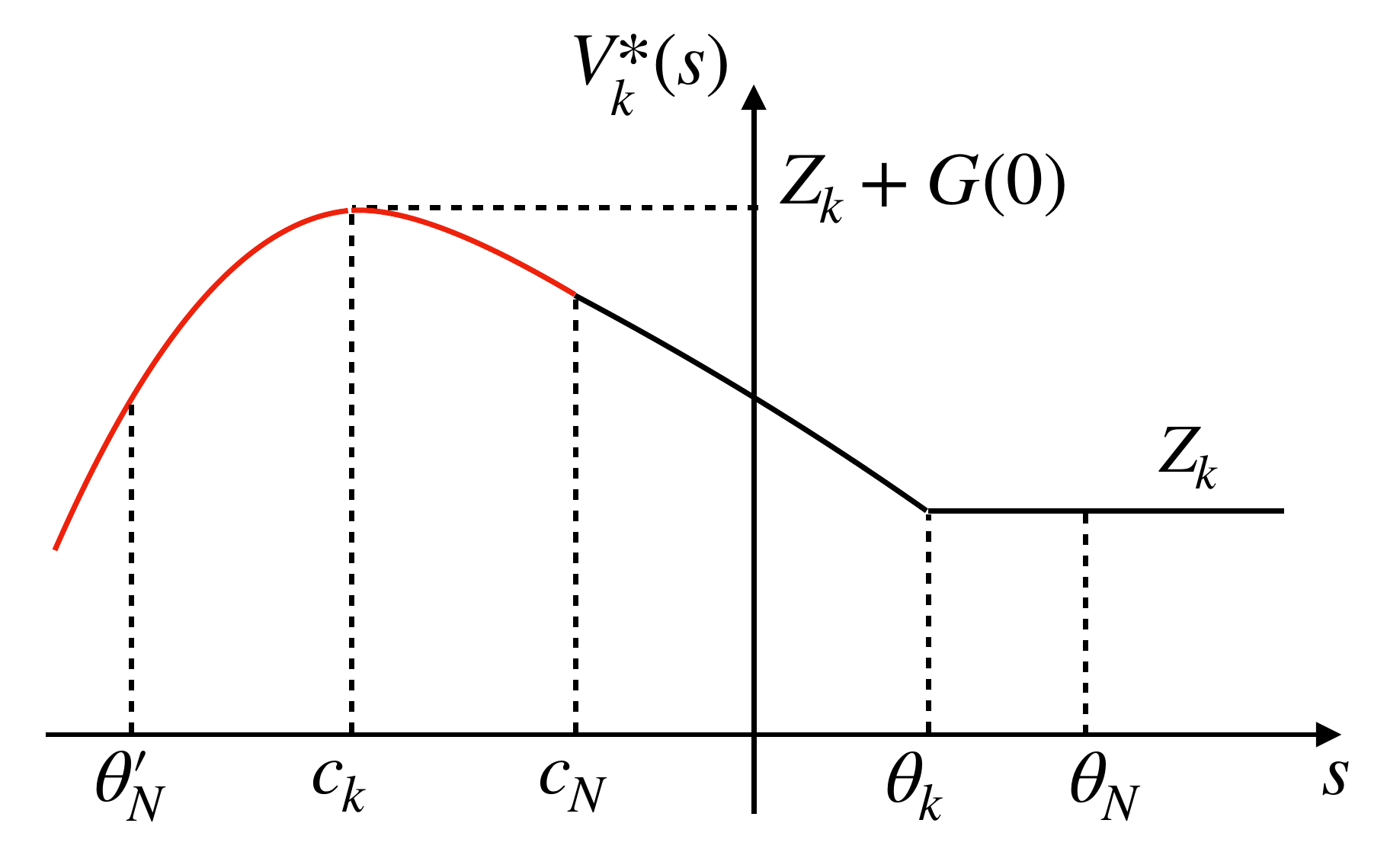}
  }
  \subfigure[Value function of $(k-1)$th vehicle.]
  {\label{fig:V_K_1(s)}
  \includegraphics[width=0.42\textwidth, trim=250 230 250 210,clip]{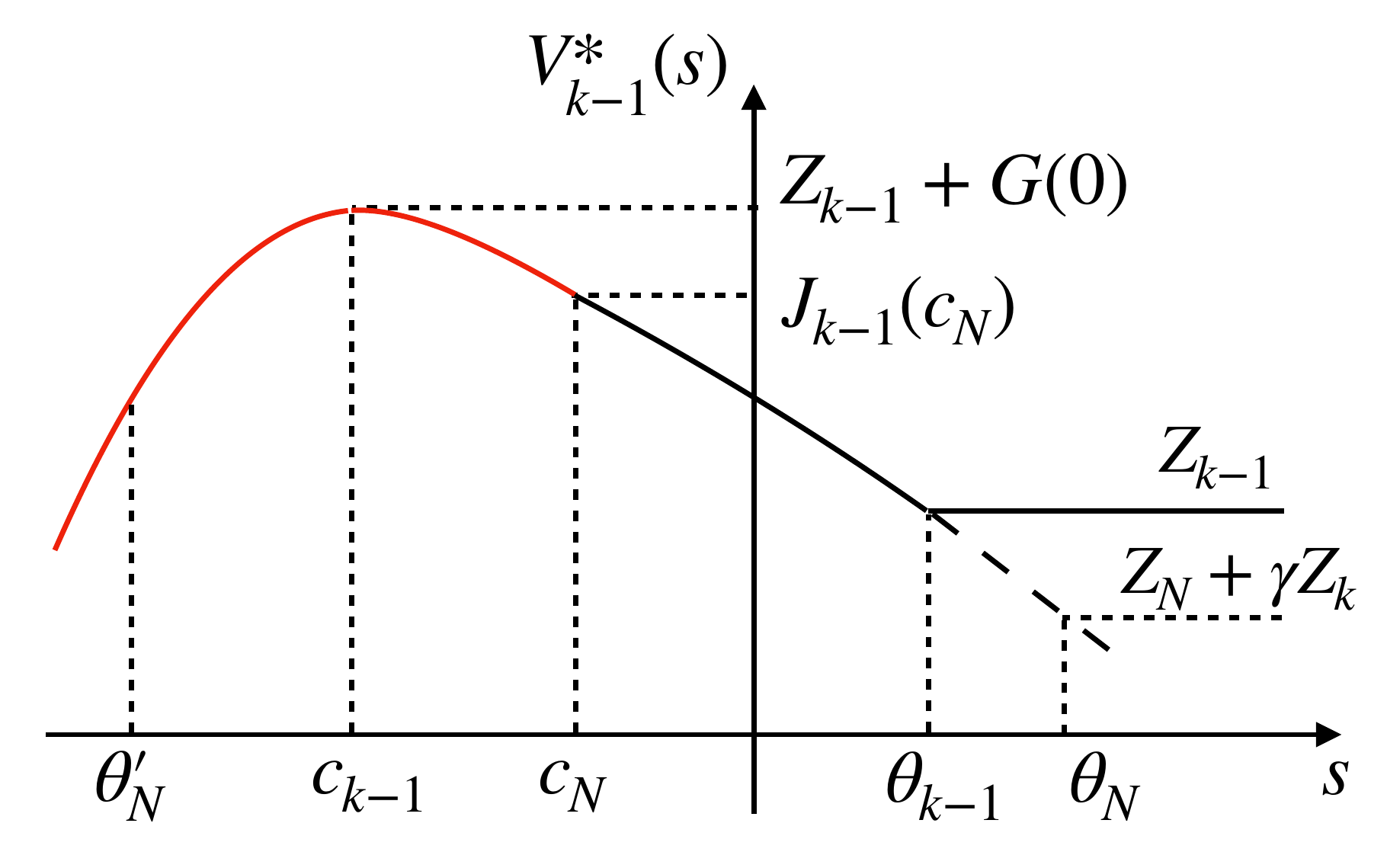}
  }
  
  \caption{Induction of value function for vehicle $k$ and $(k-1)$.}
\end{figure}

Then we analyze the upper bound and lower bound of $Z_{k-1}$.

\begin{itemize}
    \item[1.] Firstly we compare $Z_{k-1}$ with $J_{k-1}(\theta_N)$,
\begin{align*}
    Z_{k-1} - J_{k-1}(\theta_N) & = J_{k-1}(c_{k-1}) - G(0) - G(\theta_N) - \gamma \int_0^{+\infty} f(x) V_k^*(\theta_N + x) \mathrm{d}x \\
    & \geq J_{k-1}(c_{N}) - G(0) - Z_N - \gamma Z_k.
\end{align*}
In Equation \eqref{Equation: lower_bound_J_{k-1_c_N}}, $J_{k-1}(c_{N}) > Z_N + G(0) + \gamma Z_k$,
hence
\begin{align}
    \label{Equation: lower_bound_Z_k_1}
    Z_{k-1} - J_{k-1}(\theta_N) > 0.
\end{align}

    \item[2.] Then we compare $Z_{k-1}$ with $J_{k-1}(c_N)$,
\begin{align*}
    Z_{k-1} - J_{k-1}(c_N) & = J_{k-1}(c_{k-1}) - G(0) - G(c_N) - \gamma \int_0^{+\infty} f(x) V_k^*(c_N + x) \mathrm{d}x \\
    & = \underbrace{G(c_{k-1})}_{\leq G(c_{N})} + \underbrace{\gamma \int_0^{+\infty} f(x) V_k^*(c_{k-1} + x) \mathrm{d}x}_{\leq \gamma \left(Z_k + G(0) \right)} - G(0) - G(c_N) \\
    & \quad - \underbrace{\gamma \int_0^{+\infty} f(x) V_k^*(c_N + x) \mathrm{d}x}_{\geq \gamma Z_k},
\end{align*}
hence
\begin{align*}
    Z_{k-1} - J_{k-1}(c_N) \leq G(c_{N}) + \gamma \left( Z_k + G(0) \right) - G(0) - G(c_N) - \gamma Z_k < 0,
\end{align*}
and then
\begin{align}
    \label{Equation: upper_bound_Z_k_1}
    Z_{k-1} - J_{k-1}(c_N) < 0.
\end{align}

\end{itemize}

From Equation \eqref{Equation: lower_bound_Z_k_1} and Equation \eqref{Equation: upper_bound_Z_k_1}, $J_{k-1}(\theta_N)< Z_{k-1} < J_{k-1}(c_N)$. $J_{k-1}(s)$ is monotonically decreasing in $\left[c_N, t_0 \right)$, hence 
\begin{align*}
\exists! \theta_{k-1} \in (c_N, \theta_N) \colon \ J_{k-1}(\theta_{k-1})=Z_{k-1}.
\end{align*}

We then analyze the action-value functions under different $s$:
\begin{itemize}

\item[(i)] When $s_{k-1} \in \left[c_{N}, \theta_{k-1}
\right]$, for all  $a_{k-1}^1 < s_{k-1}$ and $a_{k-1}^2 = s_{k-1}$,
\begin{align*}
    Q_{k-1}^*(s_{k-1},a_{k-1}^1) & = J_{k-1}(a_{k-1}^1) - G(0) \leq Z_{k-1}, \\
    Q_{k-1}^*(s_{k-1},a_{k-1}^2) & \geq Z_{k-1},
\end{align*}
and then $Q_{k-1}^*(s_{k-1},a_{k-1}^1) \leq Q_{k-1}^*(s_{k-1},a_{k-1}^2)$. The optimal action is $a_{k-1}^2 = s_{k-1}$, i.e., merging with the previous vehicle.

\item[(ii)] When $s_{k-1} \in \left(\theta_{k-1}, t_0 \right]$, for all  $a_{k-1}^1 < s_{k-1}$ and $a_{k-1}^2 = s_{k-1}$,
\begin{align*}
    & c_{k-1} = \argmax_{a_{k-1}^1} Q_{k-1}^*(s_{k-1},a_{k-1}^1) \\
    & Q_{k-1}^*(s_{k-1},c_{k-1}) = Z_{k-1}, \\
    & Q_{k-1}^*(s_{k-1},a_{k-1}^2) < Z_{k-1},
\end{align*}
and then the optimal action is $a_{k-1}^1 = c_{k-1}$, i.e., cruising with the non-merging time reduction $c_{k-1}$.

\item[(iii)] When $ s_{k-1} \in \left[t_0, +\infty \right]$, $a_{k-1} < t_0$
\begin{align*}
    & c_{k-1} = \argmax_{a_{k-1}} Q_{k-1}^*(s_{k-1},a_{k-1}), \\
    & Q_{k-1}^*(s_{k-1},c_{k-1}) = Z_{k-1},
\end{align*}
and then the optimal action is $a_{k-1} = c_{k-1}$.

\end{itemize}

Above all, the optimal policy for vehicle $(k-1)$ is a threshold-based policy for $s \geq c_{N}$ as shown in Lemma \ref{Lemma 2}. Furthermore, the value function $V^*_{k-1}(s)$ is:
\begin{align*}
   V_{k-1}^*(s) = \begin{cases}
    J_{k-1}(s) & c_N \leq s \leq \theta_{k-1}, \\
    Z_{k-1} & s > \theta_{k-1}.
    \end{cases}
\end{align*}
Note that we cannot know $V_{k-1}^*(s)$ for $s < c_{N}$ due to the unknown $f(x)$. We will clarify this in Section \ref{Section: Threshold_policy}.

\end{proof}

Since vehicle $N$ follows the optimal policy shown in Lemma \ref{Lemma 1}, and $V^*_N(s)$ meets the conditions in Lemma \ref{Lemma 2}, vehicle $k$, $k = 1,2,\ldots, N-1$ also follows an optimal threshold-based policy for $s \geq c_N$ as shown in Proposition \ref{Proposition 1}, and $V^*_k(s)$ has the characteristics in Lemma \ref{Lemma 2}.

%%%%%%%%%%%%%%%%%%%%%%%%%%%%% extension to infinite
\subsection{Extension to infinite horizon}
\label{Section: Infinite Horizion}
\begin{prp}
\label{Proposition 2}
When $N \to \infty$, the optimal policy for vehicle $k$, $k = 1,2,\ldots$, converges to the threshold-based policy for $s \geq c_N$ such that
 \begin{align*}
    \mu^*_{}(s_{})=\begin{cases}
    s_{} & c_N \leq s_{} \leq \theta_{},\\
    c_{} & s_{} > \theta_{},
    \end{cases}
\end{align*}
where $\theta$ is the threshold, $\theta \in \left[c_N, \theta_N \right]$, and $c$ is the optimal time reduction, $c \in \left[ \theta_N^', c_N \right]$.
\end{prp}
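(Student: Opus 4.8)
The plan is to recast the backward recursion of Section~\ref{Section: N-stage Problem} as a forward value iteration indexed by the number of remaining stages and to invoke a contraction argument. Writing $m = N-k$ for the stages-to-go, the functions $\{V_k^*\}$ built in Lemma~\ref{Lemma 1} and Lemma~\ref{Lemma 2} are exactly the iterates $W_m = \mathcal{T}^m W_0$ of the Bellman operator
\begin{align*}
(\mathcal{T} V)(s) = \max_{a \le s}\left[ R(s,a) + \gamma \int_0^{+\infty} f(x)\, V(a+x)\,\mathrm{d}x \right],
\end{align*}
started from $W_0 = V_N^*$. Establishing convergence of the policy as $N \to \infty$ is therefore equivalent to establishing convergence of the iterates $W_m$ as $m \to \infty$.

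First I would show that $\mathcal{T}$ is a $\gamma$-contraction in the supremum norm on the invariant function space identified below. For any two bounded $V_1, V_2$, since $f$ is a probability density and $|\max_a g_1(a) - \max_a g_2(a)| \le \sup_a |g_1(a) - g_2(a)|$,
\begin{align*}
\bigl| (\mathcal{T}V_1)(s) - (\mathcal{T}V_2)(s) \bigr| \le \gamma \int_0^{+\infty} f(x)\, \|V_1 - V_2\|_\infty \,\mathrm{d}x = \gamma \|V_1 - V_2\|_\infty.
\end{align*}
By the Banach fixed-point theorem, $W_m \to V^*$ uniformly, where $V^*$ is the unique fixed point of $\mathcal{T}$; equivalently $V_k^* \to V^*$ as $N \to \infty$, and $V^*$ is monotonically decreasing as the uniform limit of the monotone $V_k^*$.

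Next I would transfer this convergence to the policy parameters. By Lemma~\ref{Lemma 2}, the threshold $\theta_k$ is the unique root in $(c_N, \theta_N)$ of $J_k(\theta_k) = Z_k$, where $J_k(s) = G(s) + \gamma \int_0^{+\infty} f(x)\, V_{k+1}^*(s+x)\,\mathrm{d}x$, while the non-merge constant satisfies $c_k \in (\theta_N', c_N]$; both sequences thus lie in compact intervals. Uniform convergence $V_{k+1}^* \to V^*$ yields $J_k \to J^*$ uniformly on $[c_N, t_0)$, and since $G' < 0$ there while $V^*$ is monotonically decreasing, $J^*$ is strictly decreasing, so $J^*(\theta) = Z^*$ (with $Z^* = \lim_k Z_k$) has a unique root. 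Because that root depends continuously on $J_k$, we obtain $\theta_k \to \theta \in [c_N, \theta_N]$; the same reasoning applied to the maximizer defining $c_k$ gives $c_k \to c \in [\theta_N', c_N]$. The limits $(\theta, c)$ furnish the stationary threshold policy of Proposition~\ref{Proposition 2}.

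The step I expect to be the main obstacle is setting up the contraction on the correct domain. Since $G(s) \to -\infty$ as $s \to -\infty$, the value functions are not bounded below on all of $\mathbb{R}$ — indeed Lemma~\ref{Lemma 2} only determines $V_k^*$ for $s \ge c_N$. The fix is to restrict $\mathcal{T}$ to the half-line $[\theta_N', +\infty)$, which is invariant under optimal play: every optimal action lies in $[\theta_N', s]$ and the inter-arrival time is strictly positive, so the successor state $a + x$ again exceeds $\theta_N'$. On this set the per-stage reward collected under the optimal policy never drops below $Z_N$ (as $G \ge Z_N$ on $[\theta_N', c_N]$) and never exceeds $Z_N + G(0)$, so the iterates stay confined to the closed, sup-norm-bounded subspace of functions valued in $[Z_N/(1-\gamma),\, (Z_N + G(0))/(1-\gamma)]$, which $\mathcal{T}$ preserves. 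Once this invariance is verified the contraction estimate and all subsequent limits are routine.
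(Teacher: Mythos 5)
Your overall architecture---identify the backward iterates $V_k^*$ with value iteration $W_m=\mathcal{T}^m W_0$, contract in sup norm, then transfer convergence to the roots defining $\theta_k$---is essentially the machinery that the paper's own proof leaves implicit: the paper simply cites the existence of a stationary optimal policy for infinite-horizon discounted MDPs and \emph{asserts} that $\mu_k^*$, $V_k^*$, and $Z_k$ converge with the stated parameter intervals. So you are making the same approach rigorous rather than taking a different route; the trouble is that the step you yourself flag as the main obstacle does not close as stated. Your invariant class consists of functions valued in $[Z_N/(1-\gamma),\,(Z_N+G(0))/(1-\gamma)]$, a band of width $G(0)/(1-\gamma)$. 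To truncate the maximization to $a\in[\theta_N', s]$ you must show every $a<\theta_N'$ is suboptimal for \emph{every} $V$ in the class; but such an action sacrifices only $H(\theta_N')-H(a)$ in immediate reward (arbitrarily little for $a$ near $\theta_N'$), while its continuation term can exceed the best in-range continuation by up to $\gamma\,(U-L)=\gamma G(0)/(1-\gamma)$, which is larger than $G(0)$ whenever $\gamma>1/2$---including the paper's nominal $\gamma=0.9$. Hence within your class the truncated operator need not coincide with the true Bellman operator, and its fixed point is not identified with $V^*$. The exclusion of $a<\theta_N'$ does go through for functions whose oscillation is at most $G(0)$, i.e.\ exactly the one-sided bound $V(s_2)-V(s_1)\geq -G(0)$ of Lemma~\ref{lemma 3}; but then you must prove that $\mathcal{T}$ \emph{preserves} this tighter property, which amounts to re-proving Lemma~\ref{lemma 3} at the operator level---that is the genuinely missing step, not a routine verification. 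Relatedly, your justification of the value bounds is incorrect as stated: the per-stage reward under the non-merge action is $H(c_k)\in(Z_N-G(0),\,Z_N]$, which does drop below $Z_N$; the bounds are nevertheless invariant, but because the \emph{feasible} action $a=\min(s,c_N)$ (or merging when $\theta_N'\leq s\leq c_N$, where $G(s)\geq Z_N$) guarantees immediate reward at least $Z_N$, not because realized rewards are bounded below by $Z_N$.

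A second, smaller gap is the step $c_k\to c$. Uniform convergence $J_k\to J^*$ yields convergence of maximizers only if the maximizer of $J^*$ on $(-\infty,c_N]$ is unique, and nothing in Assumption~\ref{Assumption 1} guarantees this: $V_k^*$ is undetermined below $c_N$, so $J_k$ need not be strictly concave there, and the paper itself defines $c_{k-1}$ as a supremum over the level set $\{s: J_{k-1}(s)=Z_{k-1}+G(0)\}$ precisely because uniqueness is unavailable. Without it you obtain only subsequential convergence to the argmax set. The clean repair, consistent with the paper's Equations \eqref{Equation: V^*(theta)_Z}--\eqref{Equation: V^*(c)}, is to define $\theta$ and $c$ directly from the fixed point $V^*$ (the unique root of $J^*(\theta)=Z$ and any maximizer of $J^*$ in $[\theta_N',c_N]$, respectively) and verify optimality of the resulting stationary threshold policy through the Bellman equation, rather than chasing limits of the finite-horizon constants; your root-continuity argument for $\theta$ itself is fine.
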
{}

\begin{proof}
Since we consider the infinite horizon discounted MDPs, there exists an optimal \emph{stationary policy} \cite{bertsekas1995dynamic}, $\pi^* = \left\{\mu^*, \mu^*, \mu^*, \ldots \right\}$. In Proposition \ref{Proposition 1}, we have shown that vehicle $k \in \left\{1,2,\ldots, N \right\}$ follows the policy with the threshold $\theta_k$ and time reduction $c_k$ for $s \geq c_N$. Then $\mu_k^*(s_k)$ should converge to $\mu^*(s)$ shown in Proposition \ref{Proposition 2} with $\theta \in \left[c_N, \theta_N \right]$ and $c \in \left[ \theta_N^', c_N \right]$.

When $N \to \infty$, $V_k^*(s)$ would converge to $V^*(s)$, $Z_k$ would converge to $Z$, and $V^*(s)$ is monotonically decreasing for $s \geq c_N$. In addition,
\begin{align}
\label{Equation: V^*(theta)_Z}
    V^*(s) = Z, \ \forall s \geq \theta,
\end{align}
and
\begin{align}
\label{Equation: V^*(c)}
    V^*(c) = Z + G(0).
\end{align}

The Bellman optimality equation for $V^*(s)$ is,
\begin{align*}
V^*(s) = \max_{a \in \mathcal{A}} \left[  R(s, a) +  \gamma \sum_{s' \in \mathcal{S}} \mathbb{P}(s' | s, a) V^*(s') \right], \ \forall s \in \mathcal{S},
\end{align*}
where $s'$ is the next state of $s$.

Since the probability density function is $f(x)$,
 \begin{align*}
    V^*(s) = \begin{cases}
    Z & s > \theta, \\
    G(s) + \gamma \int_0^{+\infty} f(x) V^*(s+x) \mathrm{d}x & c_N \leq s \leq \theta,\\
    \max_{a \in ( -\infty, s] } \left[  R(s, a) +  \gamma \int_0^{+\infty} f(x) V^*(a+x) \mathrm{d}x \right] & s < c_N,
    \end{cases}
\end{align*}
where $V^*(s)$ is continuous at $s = \theta$.

\end{proof}

%%%%%%%%%%%%%%%%%%%%%%% the general threshold
\subsection{The general threshold-based policy}
\label{Section: Threshold_policy}

\begin{prp}
\label{Proposition 3}
    The optimal stationary policy for $s < c_N$ is,
     \begin{align*}
    \mu^*_{}(s_{}) = s,
\end{align*}
where the vehicle would merge with the previous vehicle.
\end{prp}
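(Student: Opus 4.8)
The plan is to finish the case analysis that Lemma~\ref{Lemma 2} explicitly left open for the region $s<c_N$, working with the same two candidate action-values isolated there. For a state $s<c_N$, write the merge value as $J(s)=G(s)+\gamma\int_0^{+\infty}f(x)V^*(s+x)\,\mathrm{d}x$ and, using $H(a)=G(a)-G(0)$, the non-merge value of any admissible action $a<s$ as $Q^*(s,a)=H(a)+\gamma\int_0^{+\infty}f(x)V^*(a+x)\,\mathrm{d}x=J(a)-G(0)$. The Bellman equation at $s<c_N$ then collapses to the scalar comparison
\[ V^*(s)=\max\Big\{\,J(s),\ \sup_{a<s}J(a)-G(0)\,\Big\}, \]
so that proving $\mu^*(s)=s$ is equivalent to showing the merge term dominates, i.e. $J(s)+G(0)\ge \sup_{a<s}J(a)$ for every $s<c_N$.

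The structural ingredient I would establish is that $J$ is unimodal on $(-\infty,\theta]$: increasing on $(-\infty,c]$, attaining its maximum $J(c)=Z+G(0)$ at the non-merge point $c$, and decreasing on $[c,\theta]$ with $J(\theta)=Z$. The decreasing branch on $[c_N,\theta]$ and the identities $J(c)=Z+G(0)$, $J(\theta)=Z$ are already available from Proposition~\ref{Proposition 2} and the construction of $c$ and $Z$; what is new is the increasing branch on $(-\infty,c]$ and monotone decrease on $[c,c_N]$. Granting unimodality, I would conclude by splitting into two regimes. For $s\le c$, monotonicity gives $\sup_{a<s}J(a)=J(s)$, so the required inequality is just $J(s)+G(0)\ge J(s)$, which holds since $G(0)>0$; here the admissibility constraint $a\le s$ is essential, because it forbids the following vehicle from selecting the globally optimal deceleration $c>s$, making the merge action the best available. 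For $c<s<c_N$, the peak lies below $s$, so $\sup_{a<s}J(a)=J(c)=Z+G(0)$ and the inequality becomes $J(s)\ge Z$, which holds because $J$ is decreasing on $[c,\theta]$ with $J(\theta)=Z$ and $s<c_N<\theta$. In both regimes the merge action $a=s$ is optimal, giving $\mu^*(s)=s$ and $V^*(s)=J(s)$ on $s<c_N$.

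The main obstacle is the unimodality of $J$ on $(-\infty,c]$, which is genuinely delicate because the merge transition $s\mapsto s+x$ can return the state to the region $s<c_N$, so $V^*$ there is defined only self-referentially and cannot be read off directly (precisely the gap flagged after Lemma~\ref{Lemma 2}). I would resolve this with the same finite-horizon-then-limit device used elsewhere in Theorem~\ref{thm_general}: carry the inductive hypothesis that each $V_k^*$ is unimodal with peak $c_k\le c_N$, with explicit base case $V_N^*=G$ on $(-\infty,\theta_N]$ (manifestly increasing up to $c_N$), and then show the Bellman update preserves the shape. The technical heart is that the averaging operator $\Phi(s)=\gamma\int_0^{+\infty}f(x)V_k^*(s+x)\,\mathrm{d}x$ maps a single-peaked $V_k^*$ to a function whose sum with the concave $G$ remains single-peaked with peak at or below $c_N$. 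I would verify this through the sign of $J_{k-1}'(s)=G'(s)+\gamma\int_0^{+\infty}f(x)(V_k^*)'(s+x)\,\mathrm{d}x$, using $G'>0$ on $(-\infty,c_N)$ together with the fact that $(V_k^*)'$ changes sign only once, from positive to negative; the concavity of $G$ (so $G'$ decreasing) is what keeps this sum from developing a second sign change. Passing $N\to\infty$ transfers unimodality to the stationary $V^*$ and closes the argument.
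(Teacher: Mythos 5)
Your reduction of the Bellman equation at $s<c_N$ to the scalar comparison $J(s)+G(0)\ge\sup_{a<s}J(a)$ is correct and matches the paper's setup, but the load-bearing step --- unimodality of $J$ on $(-\infty,\theta]$, in particular monotone increase on $(-\infty,c]$ and decrease on $[c,c_N]$ --- is precisely what you have not proved, and the sign-counting argument you sketch for it does not go through for general renewal densities. The map $s\mapsto\Phi'(s)=\gamma\int_0^{+\infty}f(x)\,(V_k^*)'(s+x)\,\mathrm{d}x$ is a correlation of $(V_k^*)'$ with the kernel $f$, and the property you need --- that a function with a single sign change (from $+$ to $-$) is carried to one with at most a single sign change --- is the variation-diminishing property, which holds only for totally positive kernels (P\'olya frequency densities, e.g.\ log-concave $f$ such as the exponential), not for an arbitrary density $f$. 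A bimodal $f$ (one mode near $0$, one far out) paired with a $(V_k^*)'$ that is large just left of its peak and small far to the left already produces a $\Phi'$ that alternates sign more than once in $s$; and since on $(-\infty,c_N)$ you only know $G'>0$ and decreasing, with no amplitude relation between $G'$ and the oscillation of $\Phi'$, the sum $J_{k-1}'=G'+\Phi'$ can inherit the extra crossings, so your induction does not close. Note that the paper itself flags this obstruction at the end of the proof of Lemma~\ref{Lemma 2}: $V_{k-1}^*(s)$ for $s<c_N$ cannot be characterized for unknown $f(x)$ --- which is exactly the region where your inductive hypothesis asserts a shape. (There is also the minor point that $V_k^*$ has kinks at the thresholds, so any derivative argument must be read almost everywhere, but that is repairable; the single-crossing claim is the genuine gap.)

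The paper's own proof avoids any shape assumption on $V^*$ below $c_N$ by proving a much weaker, distribution-free property: Lemma~\ref{lemma 3} shows $V^*(s_2)-V^*(s_1)\ge -G(0)$ for all $s_2>s_1$, via an exchange argument --- replay the optimal action of $s_1$ at $s_2$; if that action was non-merging it is still admissible and loses nothing, and if it was the merge $a=s_1$, then at $s_2$ the same time reduction $a=s_1$ is admissible but counts as non-merging, costing exactly the merge bonus $G(0)$. Lemma~\ref{lemma 4} then compares action values pointwise: for $a_1^*<s<c_N$, the immediate gain from merging is $G(s)-H(a_1^*)=G(s)-G(a_1^*)+G(0)>G(0)$ because $G$ is increasing on $(-\infty,c_N)$, while by Lemma~\ref{lemma 3} the continuation deficit is at least $-\gamma G(0)$; since $\gamma<1$, merging strictly dominates. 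This works for every $f$, including the very negative states $s\to-\infty$ where $J(s)\to-\infty$ and any global bound like $J(s)\ge Z$ fails. If you want to rescue your route, you would have to restrict to log-concave inter-arrival densities (where variation diminishing holds, covering the Poisson case of Theorem~\ref{thm_poisson}), which falls short of the general setting claimed in Theorem~\ref{thm_general}; otherwise the Lipschitz-type bound of Lemma~\ref{lemma 3} is the right substitute for unimodality.
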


We first present the lower bound of the value function difference, and then compare the action-value functions to derive the policy for $s < c_N$.

\subsubsection{Lower bound of vaue function difference}
\label{Section: lower_bound_value_function_difference}

\begin{lmm}
\label{lemma 3}
    For $s_1$, $s_2 \in \mathcal{S}$, $s_2 > s_1$, then 
    \begin{align}
        V^*(s_2) - V^*(s_1) \geq - G(0).
    \end{align}
    
\end{lmm}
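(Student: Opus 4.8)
The plan is to prove the bound by a direct \emph{mimicking} (action-transfer) argument on the Bellman optimality equation, exploiting two structural facts: the set of admissible actions only grows as the state increases, and the continuation value depends on the chosen action alone, not on the current state. For brevity write
\[
Q^*(s,a) = R(s,a) + \gamma \int_0^{+\infty} f(x)\, V^*(a+x)\, \mathrm{d}x,
\]
so that $V^*(s) = \sup_a Q^*(s,a)$ over the admissible actions at $s$, namely $a \in (-\infty, s]$ when $s < t_0$ and $a \in (-\infty, t_0)$ when $s \geq t_0$.

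First I would record the two elementary facts that drive the argument. (i) \emph{Feasibility is monotone}: every action admissible at $s_1$ is also admissible at $s_2$, since $s_2 > s_1$. (ii) \emph{Any transferred action is non-merging at $s_2$}: if $a$ is admissible at $s_1$ then $a \leq s_1 < s_2$ (or $a < t_0 \leq s_1 < s_2$ when $s_1 \geq t_0$), so $a < s_2$, and hence $Q^*(s_2,a) = H(a) + \gamma \int_0^{+\infty} f(x)\, V^*(a+x)\, \mathrm{d}x$.

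The heart of the proof is the pointwise comparison $Q^*(s_1,a) \leq Q^*(s_2,a) + G(0)$ for every $a$ admissible at $s_1$. Since the continuation term is literally identical at both states (it depends only on $a$), it cancels and the comparison reduces to the immediate rewards. If $a$ is non-merging at $s_1$ (i.e. $a < s_1$, or $s_1 \geq t_0$), then $R(s_1,a) = H(a) = R(s_2,a)$ and the two $Q$-values coincide. The only remaining case is the merging action $a = s_1$, possible only when $s_1 < t_0$: there $R(s_1,s_1) = G(s_1)$, whereas at $s_2$ the same numerical action is non-merging with reward $H(s_1) = G(s_1) - G(0)$, using the identity $G(s) - H(s) = G(0)$ from the definition. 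Thus exactly the platooning bonus $G(0)$ is forfeited, giving $Q^*(s_1,s_1) = Q^*(s_2,s_1) + G(0)$, and the claimed pointwise inequality holds in every case.

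Finally I would take the supremum over all actions admissible at $s_1$ and invoke the feasibility inclusion from fact (i):
\[
V^*(s_1) = \sup_a Q^*(s_1,a) \leq \sup_a Q^*(s_2,a) + G(0) \leq V^*(s_2) + G(0),
\]
where the middle supremum is over the actions admissible at $s_1$, a subset of those admissible at $s_2$. Rearranging yields $V^*(s_2) - V^*(s_1) \geq -G(0)$. I expect the main obstacle to be purely the bookkeeping around the reward discontinuity at $a = s$: one must verify that the single action $a = s_1$ is the \emph{only} action whose reward changes upon transfer, and that it changes by exactly $G(0)$ and no more. Note that monotonicity of $V^*$ is never needed here, since the continuation terms cancel identically; and because the bound is one-sided, no attainment of the suprema is required, so the argument applies on all of $\mathcal{S}$, including the region $s < c_N$ where the optimal policy has not yet been characterized.
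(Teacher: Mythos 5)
Your proof is correct, and it reaches the bound by a cleaner and more self-contained route than the paper's. The paper proves Lemma~3 by conditioning on where $s_1, s_2$ sit relative to the threshold $\theta$: its cases (i)--(ii) use the plateau $V^*(s) = Z$ for $s \geq \theta$ together with the step ``suppose $V^*(s_1) = Z' + G(0)$ with $Z' \leq Z$,'' which amounts to the global upper bound $V^* \leq Z + G(0)$ — a fact justified at that point only by the Proposition~2 structure on $s \geq c_N$ — and only its case (iii) contains the action-transfer idea, applied there just to the optimizing action at $s_1$. You instead establish the uniform pointwise comparison $Q^*(s_1,a) \leq Q^*(s_2,a) + G(0)$ for \emph{every} action $a$ admissible at $s_1$: the continuation term depends on $a$ alone (since $S_{k+1} = X_{k+1} + A_k$) and cancels identically, the admissible action sets are nested ($(-\infty, s_1] \subseteq (-\infty, s_2]$, respectively $(-\infty, t_0)$), and the sole reward that changes under transfer is the merging action $a = s_1$, which becomes non-merging at $s_2$ and forfeits exactly $G(0) = G(s_1) - H(s_1)$; taking suprema then yields the claim. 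What your route buys: no case split on $\theta$, no appeal to monotonicity, plateau structure, or attainment of maxima, and validity on all of $\mathcal{S}$ including $s < c_N$ — precisely the regime in which Lemma~4 invokes this bound — whereas the paper's cases (i)--(ii) quietly lean on structure established only for $s \geq c_N$. Your bookkeeping at the reward discontinuity is also right in the boundary configuration $s_1 < t_0 \leq s_2$, where $a = s_1 < t_0$ remains admissible at $s_2$ and earns $H(s_1)$.
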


\begin{proof}

We suppose,
\begin{align*}
    V^*(s_1) =  \max_{a} Q^*(s_1, a) = Z' + G(0),
\end{align*}
where $Z' \leq Z$ (Figure \ref{fig:V(s)}).

Then we derive the lower bound of $\left( V^*(s_2) - V^*(s_1) \right)$ under the following scenarios:
\begin{itemize}
    \item [(i)] When $s_2 > s_1 \geq \theta$, $V^*(s_1) = V^*(s_2) = Z' + G(0)$,
    \begin{align*}
        V^*(s_2) - V^*(s_1) = 0 > - G(0).
    \end{align*}
    
    \item [(ii)] When $s_2 \geq \theta > s_1$,
    \begin{align*}
        V^*(s_2) - V^*(s_1) = Z - \left(Z' + G(0) \right) \geq - G(0). 
    \end{align*}
    
    \item [(iii)] %When $a_1^* = \argmax_a {Q^*(s_1, a)} < s_1 < s_2 < \theta$,
    When $s_1 < s_2 \leq \theta$,
    \begin{itemize}
        \item[(a)] We suppose $a_{1}^1 = \argmax_a {Q^*(s_1, a)}$, where $a_{1}^1$ denotes the non-merging action for $s_1$,
        \begin{align*}
         Q^*(s_2, a_{1}^1) = H(a_{1}^1) + \gamma \int_0^{+\infty} f(x) V^*(a_{1}^1 + x) \mathrm{d}x = Q^*(s_1, a_{1}^1) = Z' + G(0),
        \end{align*}
        hence
        \begin{align*}
            V^*(s_2) = \max_{a \in ( -\infty, s_2] } Q^*(s_2, a) \geq Z' + G(0),
        \end{align*}
        and then
        \begin{align*}
            V^*(s_2) - V^*(s_1) \geq 0 > - G(0).
        \end{align*}
        
    %\item [(iv)] When $a_2^* = \argmax_a {Q^*(s_1, a)} = s_1 < s_2 < \theta$,
        \item[(b)] $a_{1}^2 = \argmax_a {Q^*(s_1, a)}$, where $a_{1}^2$ denotes the merging action for $s_1$,
        \begin{align*}
         Q^*(s_1, a_1^2) = G(s_1) + \gamma \int_0^{+\infty} f(x) V^*(s_1 + x) \mathrm{d}x = Z' + G(0).
        \end{align*}
        Since $s_1 < s_2$,
        \begin{align*}
         Q^*(s_2, s_1) = H(s_1) + \gamma \int_0^{+\infty} f(x) V^*(s_1 + x) \mathrm{d}x = Z',
        \end{align*}
        hence 
        \begin{align*}
            V^*(s_2) = \max_{a \in ( -\infty, s_2] } Q^*(s_2, a) \geq Z',
        \end{align*}
        and then
        \begin{align*}
            V^*(s_2) - V^*(s_1) \geq - G(0).
        \end{align*}
        
    \end{itemize}
\end{itemize}

Above all,
\begin{align*}
    V^*(s_2) - V^*(s_1) \geq - G(0), \ \forall s_2 > s_1.
\end{align*}

\begin{figure}[hbt]
  \centering
  \includegraphics[width=0.61\textwidth, trim=150 250 220 210,clip]{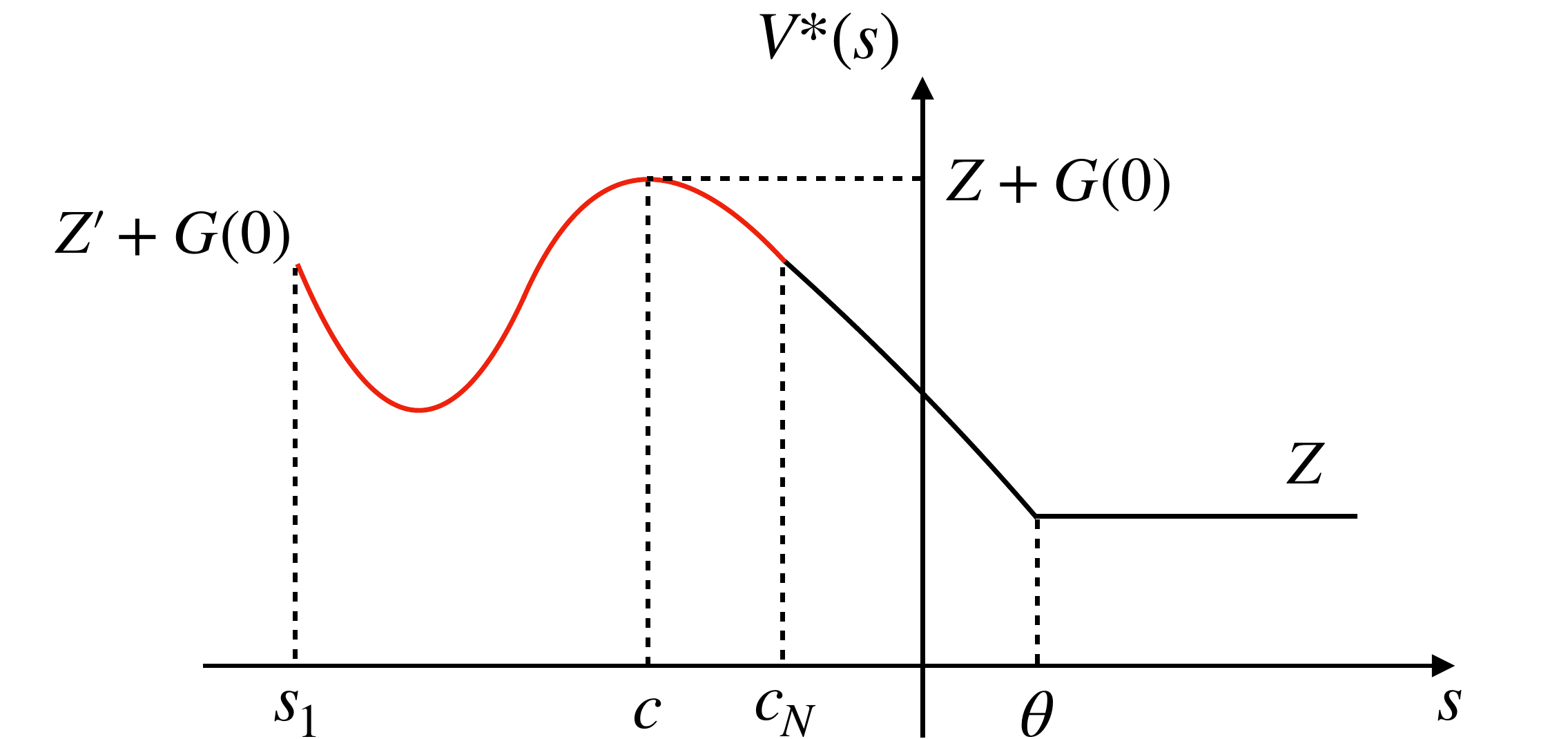}
  \caption{Optimal value function $V^*(s)$.} \label{fig:V(s)}
\end{figure}

\end{proof}

\subsubsection{Action value comparison}
\label{Section: value_function_comparison}

\begin{dfn}
For $s \in \mathcal{S}$, the action $a \in \left(-\infty, s \right]$. $a_1^* < s$ is the action of non-merging, and $a_2^* = s$ is the action of merging. Note that when $s \geq t_0$, $a \in \left(-\infty, t_0 \right)$.
\end{dfn}

\begin{lmm}
\label{lemma 4}
    For $s < c_N$, 
    \begin{align}
        Q^*(s, a_2^*) > Q^*(s, a_1^*),
    \end{align}
    where $a_1^* < s$, and $a_2^* = s$. The optimal action for $s < c_N$ is to merge with the previous vehicle.
\end{lmm}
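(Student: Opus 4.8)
The plan is to compare the two action-value functions directly, using the Bellman characterization of $V^*$ from Proposition~\ref{Proposition 2} together with the lower bound established in Lemma~\ref{lemma 3}. First I would write out both quantities. For the merging action $a_2^* = s$,
\[
Q^*(s, a_2^*) = G(s) + \gamma \int_0^{+\infty} f(x) V^*(s+x)\,\mathrm{d}x,
\]
and for any non-merging action $a_1^* < s$,
\[
Q^*(s, a_1^*) = H(a_1^*) + \gamma \int_0^{+\infty} f(x) V^*(a_1^* + x)\,\mathrm{d}x.
\]
Invoking the identity $H(a_1^*) = G(a_1^*) - G(0)$ from the definition of $G$ and $H$, the difference rearranges into three clean terms:
\[
Q^*(s, a_2^*) - Q^*(s, a_1^*) = \bigl(G(s) - G(a_1^*)\bigr) + G(0) + \gamma \int_0^{+\infty} f(x)\bigl(V^*(s+x) - V^*(a_1^*+x)\bigr)\,\mathrm{d}x.
\]

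Then I would bound each term. Because $s < c_N$ and $G$ is concave with maximizer $c_N$, $G$ is strictly increasing on $(-\infty, c_N)$; since $a_1^* < s < c_N$, this gives $G(s) - G(a_1^*) > 0$. The constant $G(0) = \eta\phi D_2 > 0$ is the fixed platooning bonus. For the continuation term, observe that $s + x > a_1^* + x$ for every $x > 0$, so Lemma~\ref{lemma 3} applies pointwise to yield $V^*(s+x) - V^*(a_1^*+x) \geq -G(0)$; integrating against the density and using $\int_0^{+\infty} f(x)\,\mathrm{d}x = 1$ gives
\[
\gamma \int_0^{+\infty} f(x)\bigl(V^*(s+x) - V^*(a_1^*+x)\bigr)\,\mathrm{d}x \geq -\gamma G(0).
\]
Combining the three bounds,
\[
Q^*(s, a_2^*) - Q^*(s, a_1^*) \geq \bigl(G(s) - G(a_1^*)\bigr) + (1-\gamma)G(0) > 0,
\]
which holds for every admissible $a_1^* < s$, so merging is strictly optimal for all $s < c_N$, exactly as claimed.

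The step I expect to matter most is not a computation but a sign-bookkeeping observation, which I would state carefully rather than rush: the merging bonus $+G(0)$ could in principle be offset by the continuation value, but Lemma~\ref{lemma 3} shows the continuation value can erode it by at most $\gamma G(0)$. Hence the two $G(0)$ contributions do \emph{not} cancel; they leave the strictly positive residual $(1-\gamma)G(0)$, and the concavity gain $G(s) - G(a_1^*)$ only reinforces this. Every inequality used is either the strict monotonicity of $G$ on $(-\infty, c_N)$ or the previously established Lemma~\ref{lemma 3}, so no further obstacle arises.
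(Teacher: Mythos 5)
Your proof is correct and follows essentially the same route as the paper's: the same decomposition of $Q^*(s,a_2^*)-Q^*(s,a_1^*)$, the same pointwise application of Lemma~\ref{lemma 3} to the continuation values, and the same use of strict monotonicity of $G$ below $c_N$ (which the paper invokes implicitly when writing $G(s)-H(a_1^*)>G(0)$). Your explicit split of $G(s)-H(a_1^*)$ into $\bigl(G(s)-G(a_1^*)\bigr)+G(0)$, isolating the residual $(1-\gamma)G(0)$, is merely a cleaner bookkeeping of the identical argument.
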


\begin{proof}
    For $s < c_N$,
\begin{align*}
    V^*(s) = \max_{a \in ( -\infty, s] } Q^*(s, a),
\end{align*}
Then for the action $a_1^* < a_2^* = s$,
\begin{align*}
    Q^*(s, a_1^*) = H(a_1^*) + \gamma \int_0^{+\infty} f(x) V^*(a_1^* + x) \mathrm{d}x, \\
    Q^*(s, a_2^*) = G(s) + \gamma \int_0^{+\infty} f(x) V^*(s + x) \mathrm{d}x.
\end{align*}
We take the difference,
\begin{align*}
    Q^*(s, a_2^*) - Q^*(s, a_1^*) = G(s) - H(a_1^*) + \gamma \int_0^{+\infty} f(x) \left[ V^*(s + x) - V^*(a_1^* + x) \right] \mathrm{d}x.
\end{align*}

Since $s+x > a_1^* + x$, the lower bound of the state-value function difference (Lemma \ref{lemma 3}) is
\begin{align*}
     V^*(s + x) - V^*(a_1^* + x) \geq -G(0), \ \forall x > 0,
\end{align*}
and then for all $f(x)$,
\begin{align*}
    Q^*(s, a_2^*) - Q^*(s, a_1^*) \geq G(s) - H(a_1^*) - \gamma G(0).
\end{align*}
For $a_1^* < s <c_N$,
\begin{align*}
    G(s) - H(a_1^*) - \gamma G(0) > G(0) - \gamma G(0) > 0,
\end{align*}
hence
\begin{align*}
    Q^*(s, a_2^*) > Q^*(s, a_1^*), \ \forall s < c_N.
\end{align*}

Then the optimal policy for $s < c_N$ is to merge with the previous vehicle as shown in Proposition \ref{Proposition 3}.

\end{proof}{}

In summary, from Proposition \ref{Proposition 2} and Proposition \ref{Proposition 3}, the optimal stationary policy is,
\begin{align*}
    \mu^*(s)=\begin{cases}
    s & s \leq \theta,\\
    c & s > \theta,
    \end{cases}
\end{align*}
where $\theta$ is the threshold, $\theta \in \left[c_N, \theta_N \right]$, and $c$ is the optimal time reduction, $c \in \left[\theta_N^', c_N \right]$.
The optimal value function is,
 \begin{align}
 \label{Equation: final_V^*(s)}
    V^*(s) = \begin{cases}
    Z & s > \theta, \\
    G(s) + \gamma \int_0^{+\infty} f(x) V^*(s+x) \mathrm{d}x & s \leq \theta.
    \end{cases}
\end{align}

%$\mathbf{V_{1}(s)}$

 %3p
\section{Solution Algorithm}
\label{sec_optimize}

In this section, we propose algorithms to compute the optimal coordinated platooning policies.
For general arrival processes, we design two dynamic programming (DP) algorithms based on previous analysis: bounded value iteration and recursive approximation, to determine the optimal $\theta$ and $c$ in Equation \eqref{Equation: threshold_policy}.
The bounded value iteration is a generic DP algorithm with the threshold-structure (Section~\ref{sub_bvi}), while the recursive approximation (Section~\ref{sub_ra}) explicitly incorporates the characteristics of the optimal policy obtained in the previous section.
We compare the computational efficiency of these two algorithms in various scenarios (Section~\ref{sub_algorithm_compare}).
For Poisson arrival processes (which is the most commonly used model for vehicle arrivals), we further show that $\theta$ and $c$ can be obtained by solving a system of integral equations (Section~\ref{sub_poisson}), which is significantly faster than DP-based algorithms.

\subsection{Bounded value iteration}
\label{sub_bvi}
Value iteration is a generic approach in dynamic programming \cite{sutton2018reinforcement}. According to our analysis in Section \ref{Section: Threshold_policy}, the value function $V^*(s)$ stays constant when $s > \theta_N$. We only need to update the value function for $s \leq \theta_N$ to reduce algorithm complexity. The bounded value iteration is shown in Algorithm \ref{algorithm:value_iteration}.

\begin{algorithm}[htbp]
  \caption{Bounded value iteration.}
  \label{algorithm:value_iteration}
  \begin{algorithmic}[1]
    \Require
      Discrete states $s \in \left[m,n \right]$;
      Reward function $R(s, a)$;
      General headway probability density function  $f(x)$;
      Initial value function $V^*(s)$;
      Discount factor $\gamma$;
      Threshold of vehicle $N$ $\theta_N$;
      Small positive number $\varepsilon$ determining the estimation accuracy;
    \Ensure
      Optimal platooning policy $\pi^*(s)$;
    %\begin{algorithmic}[1]
    \While {$\Delta \geq \varepsilon$}
        \State $\Delta \gets 0$;
    \ForEach {$s \in \left[m, n \right]$}
        \State $v \gets V^*(s)$;
        \If{ $s \leq \theta_N$}
            %\For{action $a$ in $\left[m, s \right]$}
            \State $V^*(s) \gets \max_a \left[R(s,a) + \gamma \int_0^{+\infty} f(x) V^*(a + x) \mathrm{d}x \right]$
            %\EndFor
        \Else
             \State $V^*(s) \gets v$
        \EndIf
        \State $\Delta \gets \max(\Delta, |v - V^*(s)|)$
    \EndFor
    \EndWhile
%\Until{$\Delta < \varepsilon$}
\State Output the optimal deterministic policy $\pi^*$ such that
\State $\pi^*(s) = \argmax_a \left[R(s,a) + \gamma \int_0^{+\infty} f(x) V^*(a + x) \mathrm{d}x \right]$
\end{algorithmic}
\end{algorithm}

The action is the time reduction $a$, which lies in $\left(-\infty, s \right]$. Although the state $s$ and action $a$ are both continuous, the problem is transformed into the tabular framework due to the discontinuous reward $R(s, a)$ at $a = s$ shown in Equation \eqref{Equation: Reward}. In Algorithm \ref{algorithm:value_iteration}, we firstly discretize the states in a given range $\left[m,n \right]$, $m < c_N < \theta_N < n$. Then we let $V^*(s) \big|_{s > \theta_N} = V^*(s) \big|_{s = \theta_N}$ to simplify the value iteration process. The optimal policy would converge when the estimation gap is smaller than the small positive number $\varepsilon$, which determines the estimation accuracy.

\subsection{Recursive approximation}
\label{sub_ra}
The complexity of value iteration in Algorithm \ref{algorithm:value_iteration} is determined by the small positive number $\varepsilon$. The iteration steps would increase significantly if we choose a small $\varepsilon$. To improve the efficiency, we design the algorithm (Algorithm \ref{algorithm:solve_equation}) that approximates the optimal solution based on the characteristics of $V^*(s)$.

In Equation \eqref{Equation: final_V^*(s)}, $V^*(s)$ is continuous at $ s = \theta$,
\begin{align*}
    V^*(\theta) & = G(\theta) + \gamma \int_0^{+\infty} f(x) V^*(\theta + x) \mathrm{d}x = G(\theta) + \gamma Z = Z.
\end{align*}
hence,
\begin{align}
\label{Equation: Poisson_2}
Z = \frac{G(\theta)}{1-\gamma}.
\end{align}
For each discrete $\theta_i \in \left[m,n \right]$, we can derive the $Z_i$ based on Equation \eqref{Equation: Poisson_2}. Since the optimal solution is a threshold-based policy, $V_i^*(s) = Z_i$ for $s \geq \theta_i$. Then we can derive $V_i^*(s)$ for $s < \theta_i$ recursively from $\theta_i$ to $m$ using the following equation,
\begin{align*}
V_i^*(s) = G(s) + \gamma \int_0^{+\infty} f(x) V_i^*(s + x) \mathrm{d}x.
\end{align*}
For each $\theta_i$, we can derive the specific function $V_i^*(s)$. In Section \ref{Section: Infinite Horizion}, $V_i^*(s)$ has the maximum value $M_i= Z_i + G(0)$ when $s = c_i$. The optimal $\theta$ and $c$ should be the values that make $M_i$ approximate $\left( Z_i + G(0) \right)$ best. The methodology of Algorithm \ref{algorithm:solve_equation}
is in accordance with the essence of dynamic programming since we derive $V_i^*(s)$ recursively from $\theta_i$ to $m$, which would avoid the value iteration.

\begin{algorithm}[htb]
  \caption{Recursive approximation.}
  \label{algorithm:solve_equation}
  \begin{algorithmic}[1]
    \Require
      Discrete states $s \in \left[m,n \right]$;
      Discrete threshold $\theta \in \left[c_N, \theta_N \right]$;
      Reward function $R(s, a)$;
      General headway probability density function  $f(x)$;
      Initial value function $V^*(s)$;
      Dicsount factor $\gamma$;
    \Ensure
      Optimal threshold $\theta$ and $c$;
      
    \State Inverse the list of states $s$;
    \ForEach{$\theta_i$ in $\left[c_N, \theta_N \right]$}
    
        \State $Z_i \gets \frac{G(\theta_i)}{1-\gamma}$ ;
        
        \ForEach{$s$ in the reversed list from $n$ to $m$}
        
            %\State $s$ begins from $n$;
            
            \If {$s \geq \theta_i$}
                \State $V_i^*(s) \gets Z_i$;
                
            \Else
            
                \State $V_i^*(s) \gets G(s) + \gamma \int_0^{+\infty} f(x) V_i^*(s + x) \mathrm{d}x$.
  
            \EndIf
        \EndFor
        
        \State $M_i \gets \max_s V_i^*(s)$;
        \State $c_i \gets \argmax_s V_i^*(s)$;
        
    \EndFor
    
\State Optimal $c, \theta$ = $\argmin_{c_i, \theta_i} \left| M_i - \left(G(0) + Z_i \right) \right|$
    %\Return final O-D prediction;
  \end{algorithmic}
\end{algorithm}

\subsection{Optimal strategy under Poisson arrivals}
\label{sub_poisson}

\begin{thm}
\label{thm_poisson}
The optimal policy to the coordinated platooning Problem \ref{Problem: Total Cost Minimization} under Poisson arrivals with $f(x) = \lambda e^{- \lambda x}$ is a threshold-based policy such that
\begin{align}
    \mu^*(s)=\begin{cases}
    s & s \leq \theta, \\
    c & s > \theta,
    \end{cases}
\end{align}
where $\theta$ and $c$ satisfy the following equations:
\begin{align}
\label{Equation: Poisson_solutions}
    \begin{cases}
        Z - e^{\lambda (1-\gamma) \theta} \left( \int_{c}^{\theta} e^{ - \lambda (1-\gamma) t} \left(G^'(t) - \lambda G(t) \right) \mathrm{d}t + \left(Z + G(0) \right) e^{- \lambda (1-\gamma) c} \right) = 0, \\
        G(\theta) + \gamma Z - Z = 0, \\
        G^'(c) -\lambda G(c) + \lambda (1 - \gamma) \left( Z + G(0) \right)= 0,
    \end{cases}
\end{align}
in which $\lambda$ is the arrival rate, $G'(s)$ is the derivative of $G(s)$.
\end{thm}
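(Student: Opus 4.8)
The plan is to exploit the memoryless property of the exponential inter-arrival distribution to turn the fixed-point integral equation for the optimal value function into a first-order linear ordinary differential equation, whose solution, together with three natural boundary and optimality conditions, reproduces exactly the system \eqref{Equation: Poisson_solutions}. By Theorem~\ref{thm_general} and Proposition~\ref{Proposition 2} the optimal policy is already known to be threshold-based, so the only remaining task is to pin down the three unknowns $\theta$, $c$, and $Z$. I would start from the value function in the form \eqref{Equation: final_V^*(s)}, recalling that $V^*(s)=Z$ for $s>\theta$ by \eqref{Equation: V^*(theta)_Z} and $V^*(c)=Z+G(0)$ by \eqref{Equation: V^*(c)}.

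First I would substitute $f(x)=\lambda e^{-\lambda x}$ into the merging branch of \eqref{Equation: final_V^*(s)} and change variables $u=s+x$ to write, for $s\le\theta$,
\[
V^*(s)=G(s)+\gamma\lambda e^{\lambda s}\int_s^{+\infty} e^{-\lambda u}V^*(u)\,\mathrm{d}u.
\]
Differentiating in $s$ and substituting the integral term back using the same identity, the exponential kernel reproduces itself and the integral drops out, leaving the linear ODE
\[
\frac{\mathrm{d}}{\mathrm{d}s}V^*(s)-\lambda(1-\gamma)V^*(s)=G'(s)-\lambda G(s),\qquad s\le\theta.
\]
Multiplying by the integrating factor $e^{-\lambda(1-\gamma)s}$ makes the left-hand side an exact derivative, so that $e^{-\lambda(1-\gamma)s}V^*(s)$ is an antiderivative of $e^{-\lambda(1-\gamma)s}\big(G'(s)-\lambda G(s)\big)$.

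Next I would impose the three conditions that single out $\theta$, $c$, and $Z$. (i) Continuity at $s=\theta$: letting $s\uparrow\theta$ in the merging branch and using $V^*(\theta+x)=Z$ gives $Z=G(\theta)+\gamma Z$, the second equation of \eqref{Equation: Poisson_solutions} (equivalently \eqref{Equation: Poisson_2}). (ii) Stationarity at the maximizer: since $c\le c_N<\theta$, the point $c$ lies in the interior of the merging region where $V^*$ is $C^1$ and attains its maximum $V^*(c)=Z+G(0)$, hence $\tfrac{\mathrm{d}}{\mathrm{d}s}V^*(c)=0$; substituting this together with $V^*(c)=Z+G(0)$ into the ODE at $s=c$ yields $G'(c)-\lambda G(c)+\lambda(1-\gamma)\big(Z+G(0)\big)=0$, the third equation. (iii) Integrating the exact-derivative form of the ODE from $c$ to $\theta$ and inserting the boundary values $V^*(\theta)=Z$ and $V^*(c)=Z+G(0)$ gives
\[
e^{-\lambda(1-\gamma)\theta}Z-e^{-\lambda(1-\gamma)c}\big(Z+G(0)\big)=\int_c^{\theta}e^{-\lambda(1-\gamma)t}\big(G'(t)-\lambda G(t)\big)\,\mathrm{d}t,
\]
which rearranges into the first equation of \eqref{Equation: Poisson_solutions}.

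The main obstacle I anticipate is justifying the differentiation step that produces the ODE: one must argue that $V^*$ is $C^1$ on $(-\infty,\theta]$ and that differentiation under the integral sign is legitimate, both of which hinge on the memorylessness of the exponential kernel, since that is precisely what lets the integral term be re-expressed in terms of $V^*(s)$ itself rather than requiring $V^*$ on a shrinking window. A secondary point is confirming that $c$ is a genuine interior stationary maximizer rather than a boundary optimum, so that $\tfrac{\mathrm{d}}{\mathrm{d}s}V^*(c)=0$ is valid; this is guaranteed by $c\le c_N<\theta$ together with the monotonicity and maximizer facts established for $V^*$ in Proposition~\ref{Proposition 2}. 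Once these regularity issues are settled, the three displayed relations are simply the boundary, stationarity, and integrated forms of the single ODE, and jointly they determine $\theta$, $c$, and $Z$.
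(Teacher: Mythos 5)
Your proposal is correct and takes essentially the same route as the paper's proof: both reduce the fixed-point equation under the exponential kernel to the linear ODE $\frac{\mathrm{d}V^*}{\mathrm{d}s}=G'(s)-\lambda G(s)+\lambda(1-\gamma)V^*(s)$ on the merging region, and then extract the three equations from continuity at $s=\theta$ (giving $G(\theta)+\gamma Z-Z=0$), stationarity of the interior maximizer $c$ with $V^*(c)=Z+G(0)$, and the solved ODE linking $c$ and $\theta$ via $V^*(\theta)=Z$. Your integration of the exact-derivative form from $c$ to $\theta$ is just an algebraic rearrangement of the paper's variation-of-parameters solution evaluated at $\theta$, so the two arguments coincide up to presentation (your explicit attention to differentiability of $V^*$ and interiority of $c$ is a point the paper leaves implicit, not a deviation).
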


\begin{proof}
%\subsubsection{Solving Bellman's Optimality Equation}
For all $s < \theta$,
\begin{align*}
    V^*(s) & = G(s) + \gamma \int_0^{+\infty}\lambda e^{- \lambda x} V^*(x+s) \mathrm{d}x \\
    & = G(s) + \gamma e^{\lambda s} \int_{s}^{+\infty}\lambda e^{- \lambda x} V^*(x) \mathrm{d}x,
\end{align*}
in which the integral interval $\left(s, +\infty \right)$ can be divided into $\left(s, \theta \right)$ and $\left(\theta, + \infty \right)$.
In Equation \eqref{Equation: V^*(theta)_Z}, $V^*(s) = Z $ for s $\geq \theta$,
\begin{align*}
    V^*(s) & = G(s) + \gamma e^{\lambda s} \int_{s}^{\theta}\lambda e^{- \lambda x} V^*(x) \mathrm{d}x  + \gamma e^{\lambda s}  \int_{\theta}^{+ \infty}\lambda e^{- \lambda x} V^*(x) \mathrm{d}x \\
          & = G(s) + \gamma e^{\lambda s} \int_{s}^{\theta}\lambda e^{- \lambda x} V^*(x) \mathrm{d}x  + \gamma e^{\lambda s} Z e^{- \lambda \theta}.
\end{align*}
The derivative of $V^*(s)$ is,
\begin{align}
\label{Equation: ODE}
    %\begin{split}
    \frac{\mathrm{d} V^*(s)}{\mathrm{d} s} = G^'(s) -\lambda G(s) + \lambda (1 - \gamma) V^*(s),
    %\end{split}
\end{align}
which meets the format of Ordinary Differential Equation (ODE), the solution of the ODE \cite{chicone2006ordinary} with $V^*(c) = Z + G(0)$ in Equation \eqref{Equation: V^*(c)} is
\begin{align*}
    V^*(s) = e^{\lambda (1-\gamma) s} \left( \int_{c}^{s} e^{ - \lambda (1-\gamma) t} \left(G^'(t) - \lambda G(t) \right) \mathrm{d}t + \left(Z + G(0) \right) e^{- \lambda (1-\gamma) c} \right).
\end{align*}
Since $V^*(\theta) = Z$,
\begin{align}
\label{Equation: Poisson_1}
    Z = e^{\lambda (1-\gamma) \theta} \left( \int_{c}^{\theta} e^{ - \lambda (1-\gamma) t} \left(G^'(t) - \lambda G(t) \right) \mathrm{d}t + \left(Z + G(0) \right) e^{- \lambda (1-\gamma) c} \right).
\end{align}
$V^*(s)$ has the maximum value when $s=c$, and $V^*(s)$ is derivable for $s < \theta$,
\begin{align*}
    \frac{\mathrm{d} V^*}{\mathrm{d} s} \bigg|_{s = c} = G^'(c) -\lambda G(c) + \lambda (1 - \gamma) V^*(c) = 0,
\end{align*}
hence
\begin{align}
\label{Equation: Poisson_3}
G^'(c) -\lambda G(c) + \lambda (1 - \gamma) \left(Z + G(0) \right)= 0.
\end{align}
We can find the optimal threshold $\theta$ and constant time reduction $c$ by solving Equation \ref{Equation: Poisson_2}, \ref{Equation: Poisson_1}, and \ref{Equation: Poisson_3}.

\end{proof}

\subsection{Algorithm efficiency comparison}
\label{sub_algorithm_compare}
In this subsection, we compare the efficiency of the proposed algorithms: Bounded value iteration (BVI), Recursive approximation (RA), and the refinement under Poisson arrivals (PR). The time complexity comparison is shown in Table \ref{table:complexity_analysis}. $|\mathcal{S}|$ is the number of states in $\left[m,n \right]$, $|\mathcal{S}_1|$ is the number of states in $\left[m, \theta_N \right]$, $|\mathcal{S}_2|$ is the number of states in $\left(\theta_N, n \right]$, and $|\mathcal{S}_3|$ is the number of states in $\left[c_N, \theta_N \right]$, $\mathcal{S}_3 \subset \mathcal{S}_1 \subset \mathcal{S}$.
$|\mathcal{I}|$ is the number of iterations to converge in Algorithm \ref{algorithm:value_iteration}. Since $\mathcal{I}$ is large due to the small $\varepsilon$, the RA algorithm is more efficient than BVI under general arrival processes. The time of solving a system of integral equations is constant.

\begin{table}[htbp]
  \caption{Time complexity analysis.}\label{table:complexity_analysis}
  \begin{center}
    \begin{tabular}{cccc}
    \hline
    Algorithms              &  BVI     &RA      &PR   \\
    \hline
    Time complexity         & $\mathcal{O} \left( \left(|\mathcal{S}_1|^2 + |\mathcal{S}_2| \right) \times |\mathcal{I}| \right)$         & $\mathcal{O}(|\mathcal{S}| \times |\mathcal{S}_3|)$                         & $\mathcal{O}(1)$ \\
    \hline
    \end{tabular}
  \end{center}
\end{table}

Then we use nominal values to evaluate the algorithm efficiency.
$m=-100 \ s$, $n=400 \ s$, the discrete interval between consequent $s$ is $0.25 \ s$, $\theta_N = 27.5 \ s$, $c_N = -0.49 \ s$. In Algorithm \ref{algorithm:value_iteration}, the small positive number $\varepsilon$ is $0.002$. Other nominal values are shown in Table \ref{table:case_study_parameters}. Three typical headway distributions are chosen to evaluate the solution time: exponential, discrete random variable, and constant headway, where $\lambda = 0.02 \ veh/s$, $p_1 = 0.4$, $p_2 = 0.6$, $h_{c1} = 15.0 \ s$, $h_{c2} = 8.0 \ s$, and $h_c = 10.0 \ s$. The algorithms are evaluated on the platform with Xeon Bronze 3104 CPU $@$ 1.70GHz.

The time of finding the optimal solution is shown in Table \ref{table:numerical_results}. When the headway follows an exponential distribution, the solution time of bounded value iteration is $9.6$ hours, while the solution time of recursive approximation is less than $1$ hour. When the discrete random and constant headway distributions are chosen, recursive approximation  still yields better performance than bounded value iteration. In addition, the time in exponential distribution is more significant due to the calculation of the integral. Specifically, solving the system of equations is the most efficient approach under Poisson arrivals. The results indicate that adding more prior knowledge can improve algorithm efficiency.

\begin{table}[!ht]
  \caption{Comparison of the solution time.}\label{table:numerical_results}
  \begin{center}
    \begin{tabular}{ccccc}
    \hline
    Headway Distribution & $f(x)$ & BVI & RA & PR   \\
    \hline
    Exponential  & $\lambda e^{- \lambda x}$   &  $9.6$ h    & $0.9$ h & $<1$s                              \\
    
    Discrete random variable     & $p_1 h_{c1} + p_2 h_{c2}$     & $37$ s   &   $10$ s  & $ - $              \\
    
    Constant     & $h_c$      & $29$ s   &   $8$ s   & $ - $           \\
    \hline
    \end{tabular}
  \end{center}
\end{table}

\section{Numerical Results}
\label{sec_results}

In this section, we use real data to evaluate our proposed policy presented in the previous sections. The Real-time Strategy (RTS) is developed to coordinate platooning under different flows. The case study setting is provided in Section~\ref{sub_setting}. The validation is conducted in three steps. In Section~\ref{sub_performance}, we compare the threshold-based policy with a straightforward policy and quantify the improvement. In Section~\ref{sub_sensitivity}, we study the sensitivity of the average cost with respect to key operational parameters. In Section~\ref{sub_interpretation}, we present the interpretation of the RTS.

\subsection{Setting}
\label{sub_setting}
We study the implementation of heavy-duty vehicle platooning at the interchange of I-210 and 134 located in the Los Angles metropolitan area. The real data are acquired through the Freeway Performance Measurement System (PeMS), which provides historical traffic data based on Caltrans loop detectors \cite{varaiya2007freeway}. The 24-hour data on January 22, 2019 are used to evaluate the policy (Table \ref{table:vehicle_flow}). Poisson arrivals are assumed to generate vehicle departures during each hour, and the flow within each hour keeps constant.

Detectors are located $D_1 = 1 \ km$ from the junction on freeway I-210 and 134. The platooning decision is made based on the predicted headway introduced in Section~\ref{Section: Predicted Headway}. When the following vehicle is determined to merge with the leading vehicle, the following vehicle would adjust its speed to meet the leading vehicle at the junction. We consider a safety reaction time $2.3 \ s$ for the following vehicle to avoid collision \cite{mcgehee2000driver}, i.e., the following vehicle would arrive $2.3 \ s$ later than the expected arrival time given by our policy. Note that there is no safety reaction time under non-merging case. The maximum speed on the freeway in our setting is $40 \ m/s $.

\begin{table}[!ht]
  \caption{Vehicle flow on freeway I-210 and 134.}\label{table:vehicle_flow}
  \begin{center}
    \begin{tabular}{cccccc}
    \hline
    
    \multirow{2}*{Time interval}   & I-210 flow   & 134 flow  
    &\multirow{2}*{Time interval}  & I-210 flow   & 134 flow \\
    &[veh/hour] &[veh/hour] &      &[veh/hour]    &[veh/hour]\\
    %& &[veh/hour] & [veh/hour] & &[veh/hour] & [veh/hour]\\
    \hline
    $0:00-1:00$    & 254         & 665       & $12:00-13:00$  & 1005        & 4212     \\
    
    $1:00-2:00$    & 249         & 525       & $13:00-14:00$  & 1074        & 4351     \\
    
    $2:00-3:00$    & 210         & 445       & $14:00-15:00$  & 1205        & 5252     \\
    
    $3:00-4:00$    & 206         & 407       & $15:00-16:00$  & 1251        & 5568     \\
    
    $4:00-5:00$    & 269         & 687       & $16:00-17:00$  & 1374        & 5711     \\                
    $5:00-6:00$    & 397         & 1398      & $17:00-18:00$  & 1340        & 5783     \\                  
    $6:00-7:00$    & 693         & 3164      & $18:00-19:00$  & 1351        & 5677     \\                  
    $7:00-8:00$    & 1221        & 5498      & $19:00-20:00$  & 1150        & 4989     \\                 
    $8:00-9:00$    & 1367        & 5740      & $20:00-21:00$  & 845         & 3696     \\                
    $9:00-10:00$   & 1141        & 4922      & $21:00-22:00$  & 745         & 2934     \\                
    $10:00-11:00$  & 1040        & 4286      & $22:00-23:00$  & 582         & 2233     \\                     
    $11:00-12:00$  & 946         & 3993      & $23:00-24:00$ & 413         & 1361      \\ 
    \hline
    \end{tabular}
  \end{center}
\end{table}

%\begin{figure}
%  \centering
%  \includegraphics[width=0.9\textwidth]{figures/vehicle_flow.png}
%  \caption{Vehicle flow on freeway I-210 and 134.} %\label{fig:vehicle_flow}
%\end{figure}

Since the vehicle flow fluctuates during the day, we consider the $M$-step discounted previous headways as an estimator for the arrival rate:
\begin{align*}
      \hat\lambda_k=\Big[(1-\beta)\sum_{m=0}^{M-1}\beta^mX_{k-m}\Big]^{-1},
\end{align*}
where $\beta \in \left(0, 1 \right)$ is the discounted factor of inter-arrival time, and larger $\beta$ denotes longer headway calculation period. $X_{k-m}$ is the $m$th previous headway. When $M$ is large enough, $\hat\lambda_k$ can be an estimator for the real arrival rate $\lambda$. 

The estimated arrival rate $\hat\lambda_k$ is updated once the vehicle is detected on either branch. Under Poisson arrivals, the optimal threshold and C could be obtained by solving a system of integral equations in Theorem~\ref{thm_poisson}. In our case study, Real-time Strategy (RTS) is realized by solving Equation \eqref{Equation: Poisson_solutions} based on dynamic prediction of the arrival rate, i.e., each vehicle would have the specific threshold and C value, which would yield better performance with respect to the flow fluctuation. The solution of Equation \eqref{Equation: Poisson_solutions} is calculated using the \textit{Python} package \textit{scipy.optimize}, which relies on the initial value. To avoid the divergence in real implementation, the calculated threshold and C value of vehicle $k$ are the initial values of vehicle $(k+1)$.

The evaluation metric is the absolute total cost including fuel consumption and travel time. Fuel consumption is calculated by the sum of fuel cost at every step. The fuel rate $f \ (L/s)$ is derived using the data in Simulation of Urban MObility (SUMO) \cite{krajzewicz2002sumo}: $f = 3.51 \times 10^{-7} v^3 + 4.07 \times 10^{-4} v$, in which we neglect the effect of acceleration. The fuel consumption in the coordinating zone is: $F_1 = \frac{D_1}{v} f = 3.51 \times 10^{-7} D_1 v^2 + 4.07 \times 10^{-4} D_1$, hence $\Delta F_1 = 3.51 \times 10^{-7} \left( D_1 v_k^2 - D_1 v^2 \right)$, in which $v_k$ is the modified speed of vehicle $k$, and then $\alpha = 3.51 \times 10^{-7} \ L \cdot s^2 / m^3$.
Fuel consumption and travel time both include cost in the coordinating zone $D_1$ and cruising zone $D_2$. Nominal parameters shown in Table \ref{table:case_study_parameters} are used in the case study.

\begin{table}[htbp]
  \caption{Nominal parameters in the case study.}\label{table:case_study_parameters}
  \begin{center}
    \begin{tabular}{cc}
    \hline
    Parameter             & Value     \\
    \hline
    $\eta$                & $0.1$       \\
    $\alpha$              & $3.51\times10^{-7} \ L \cdot s^2 / m^3$ \\
    $\phi$                & $32.2 \ L/100 km$       \\
    $v$                   & $23 \ m/s$      \\
    $w_1$                 & $25.8 \ \$/hour $       \\
    $w_2$                 & $0.868 \ \$ / L$       \\
    $D_1$                 & $1 \ km$       \\
    $D_2$                 & $30 \ km$       \\
    $\gamma$              & $0.9$       \\
    $\beta$               & $0.9$       \\
    $M$                   & $50$       \\
    \hline
    \end{tabular}
  \end{center}
\end{table}

\subsection{Performance comparison}
\label{sub_performance}
In this subsection, we compare three platooning policies under different flows:
\begin{enumerate}
    \item  Baseline: all vehicles keep the original speed during the coordinating zone, and the following vehicle would merge with the leading vehicle if they arrive at the junction within the safety reaction time $2.3 \ s$.
    \item Policy A: we use the proposed policy in \cite{xiong2019analysis}. The following vehicle would merge with the previous vehicle if their inter-arrival time is below an optimal threshold, otherwise it would keep the original speed $v$. Note that only acceleration is enabled in this policy.
    \item Policy B: our proposed policy is implemented according to Theorem~\ref{thm_general}.
\end{enumerate}
Note that Policy A and Policy B are both RTS due to the dynamic prediction of arrival rate.

Since we study the platooning of heavy-duty vehicles, the arrival rate is much smaller than the real traffic flow. We assume that a percentage of real flow on I-210 and 134 is the heavy-duty vehicle, and all vehicles in the simulation can be connected. Figure \ref{fig:average_total_cost} shows total cost under different average flows, i.e., different percentages of Average Daily Traffic (ADT) in Table \ref{table:vehicle_flow}. The evaluation metric is the average cost per vehicle, $AC = \frac{TC}{N_t}$, where $TC$ is the total cost, and $N_t$ is the total number of vehicles. The results show that our proposed policy yield the minimum average total cost compared with other policies. The improvement is more significant under higher average flows.

\begin{figure}[htbp]
  \centering
  \subfigure[Average total cost]
  {\label{fig:average_total_cost}
  \includegraphics[width=0.32\textwidth,trim=30 90 110 120,clip]{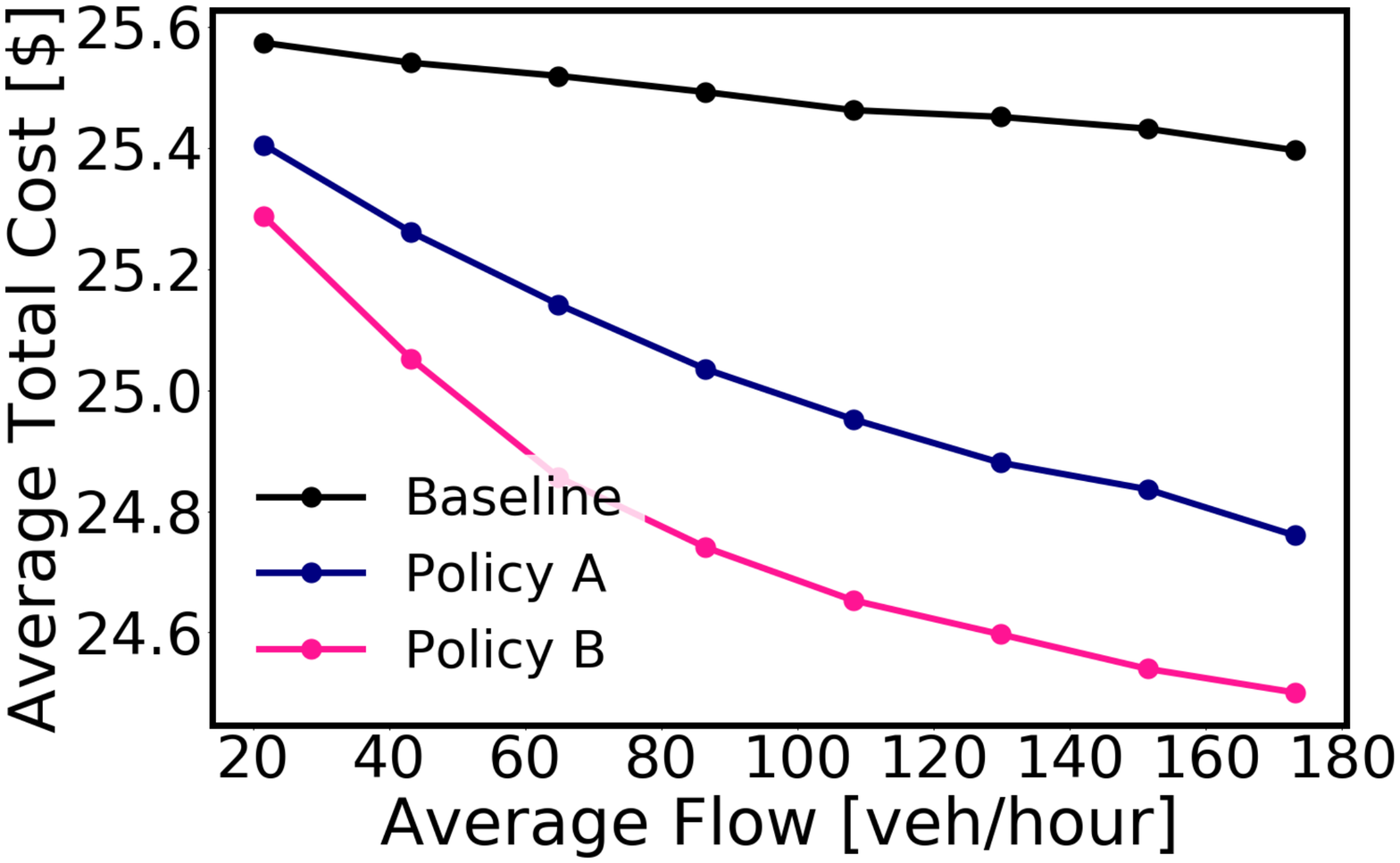}}
  \subfigure[Average fuel consumption]
  {\label{fig:average_fuel_consumption}
  \includegraphics[width=0.32\textwidth,trim=30 90 110 120,clip]{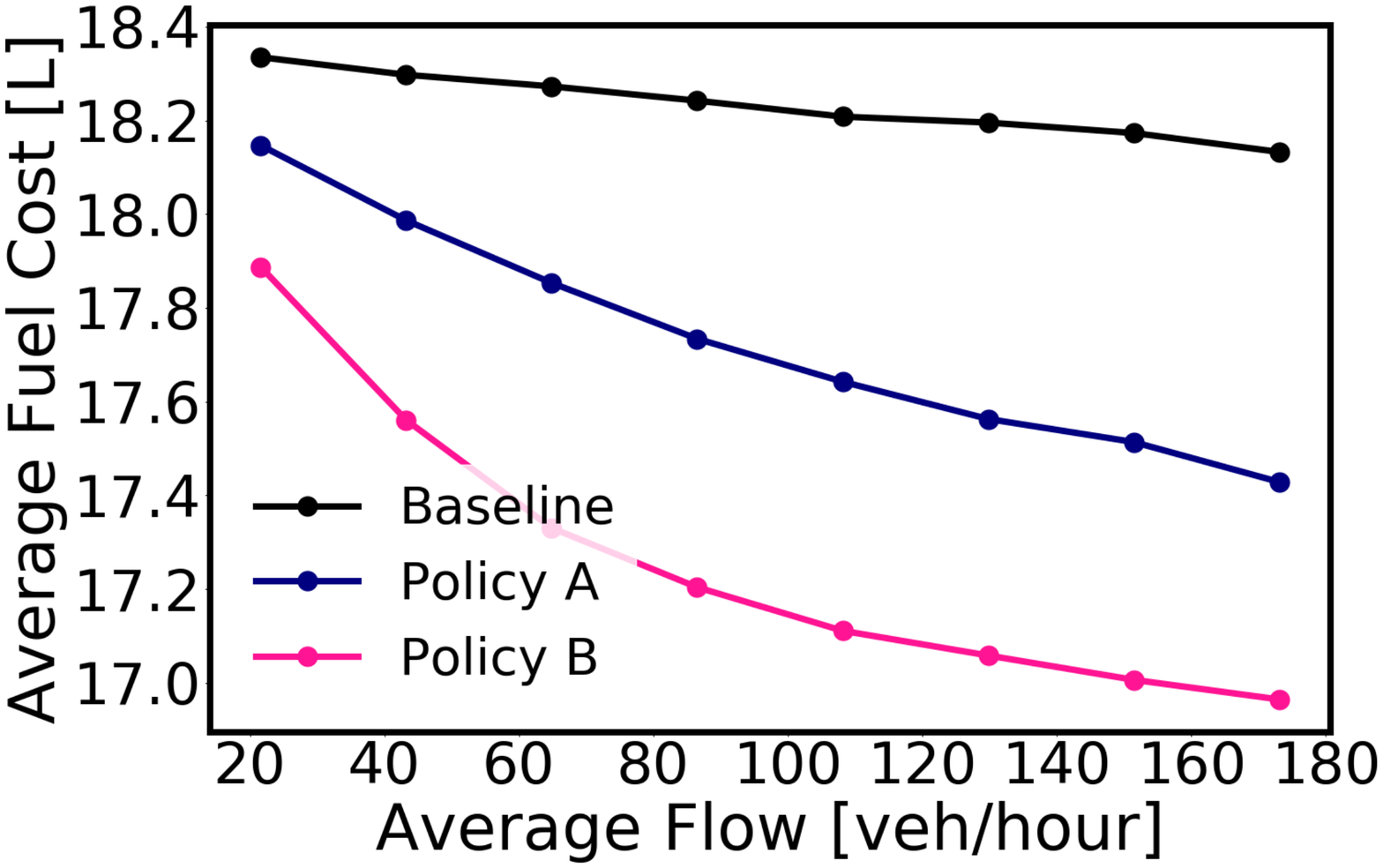}}
  \subfigure[Average travel time]
  {\label{fig:average_travel_time}
  \includegraphics[width=0.32\textwidth,trim=30 90 110 120,clip]{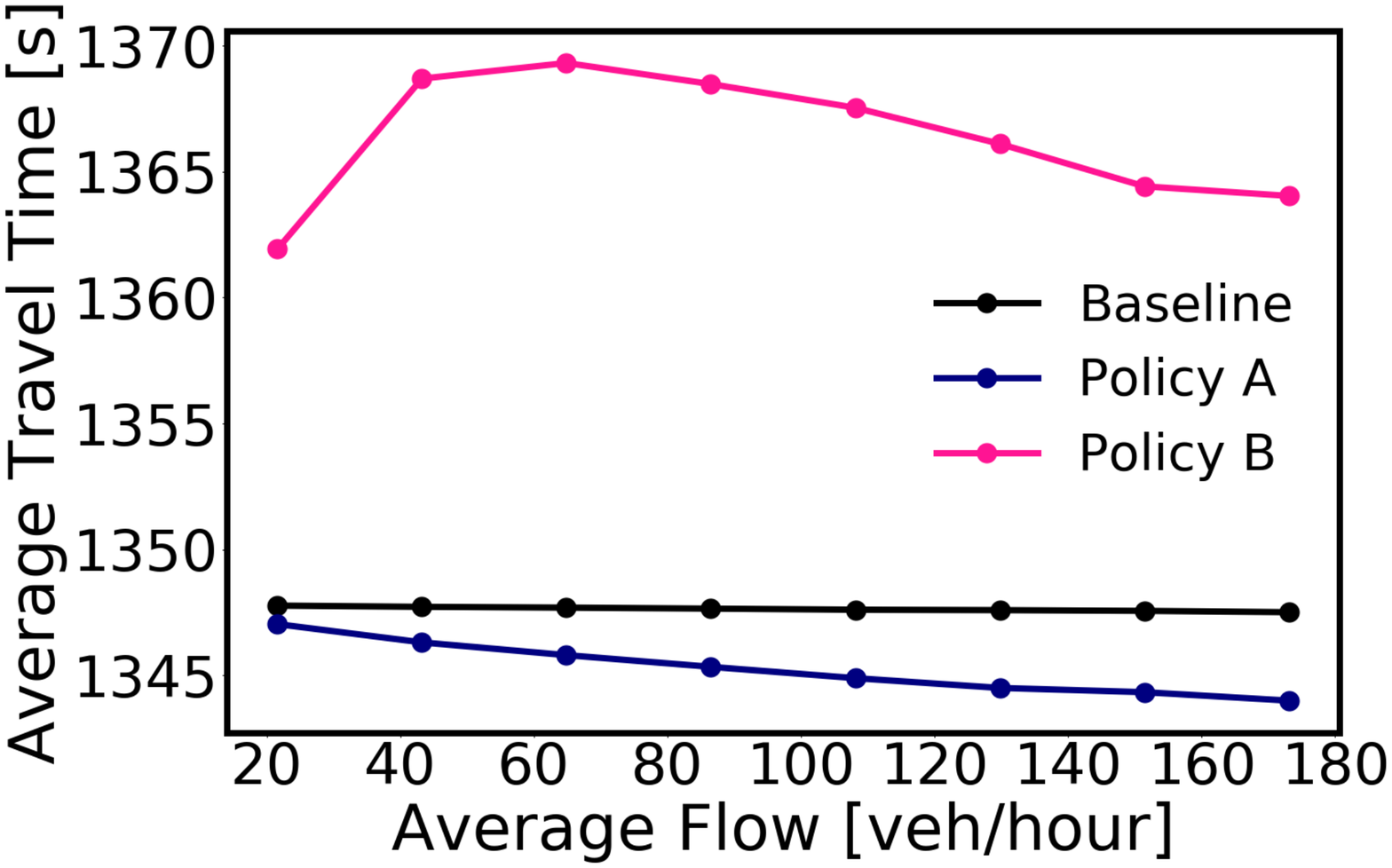}}
  
  \caption{Performance comparison among various policies.} %\label{fig:V(s)_whole}
\end{figure}

In Figure \ref{fig:average_total_cost}, Policy B yields $ \$ 0.9$ average cost reduction from the baseline when the average flow is $173 \ veh/hour$, i.e., we can save $\$ 3736.8 $ at one junction after one-day coordination. Policy A can reduce $ \$ 0.6$ average cost, and the total cost reduction is only $\$2491.2$. Hence, policy B is more effective in reducing the absolute total cost. Figure \ref{fig:average_fuel_consumption} shows the average fuel consumption under different flows, and Figure \ref{fig:average_travel_time} shows the average travel time with flows. The two figures show that Policy B has the best performance by reducing the fuel consumption. The average travel time in Policy B is even larger than the baseline. Policy A reduces both fuel consumption and travel time, however, the total cost is more than Policy B.

\subsection{Sensitivity analysis}
\label{sub_sensitivity}
In this subsection, we study the sensitivity of the average cost with key parameters: discount factor $\gamma$ and cursing distance $D_2$. We have conducted the sensitivity analysis of $\gamma$ when $D_2 = 30 \ km$ and $D_2 = 70 \ km$. The average vehicle flow is $45 \ veh/hour$. The results in Figure \ref{fig:discount_factor_30} indicate that the average total cost would decrease as we increase $\gamma$, which denotes the extent of considering long-term rewards, when the cruising distance is $30 \ km$.  However, when we increase the cruising the distance to $70 \ km$ (Figure \ref{fig:discount_factor_70}), the curve would firstly decrease and then increase with $\gamma$. When we increase the cruising distance, the policy enables more deceleration to form longer platoons under large $\gamma$ due to fuel savings in $D_2$. When the following vehicle receives the non-merging signal, lower speed would increase the travel time, and thus increase the total cost. Analysis of the vehicle deceleration with respect to $D_2$ is shown in Section~\ref{sub_interpretation}.

\begin{figure}[htbp]
  \centering
  \subfigure[$D_2 = 30 \ km$]
  {\label{fig:discount_factor_30}
  \includegraphics[width=0.42\textwidth,trim=40 90 90 120,clip]{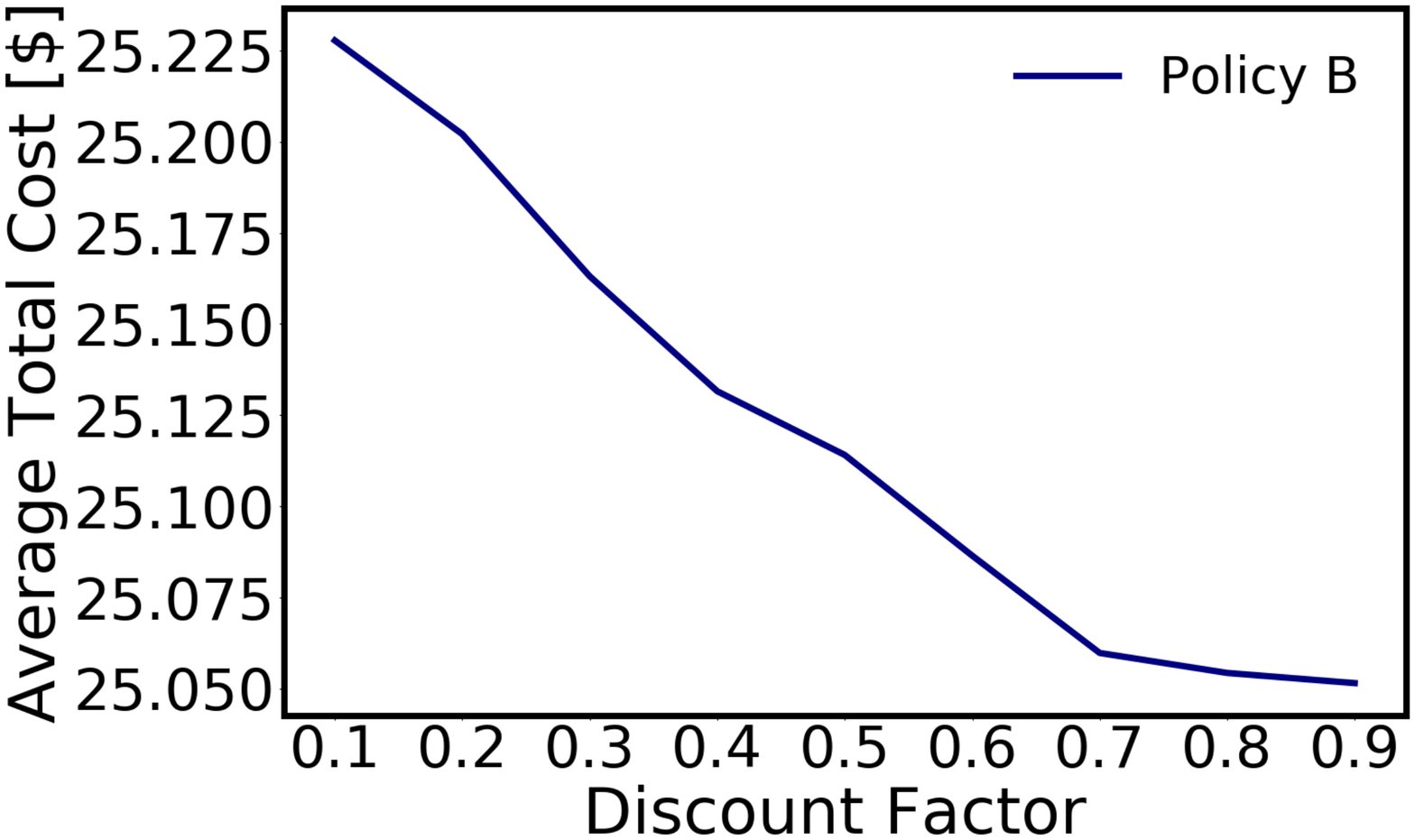}
  }
  \subfigure[$D_2 = 70 \ km$]
  {\label{fig:discount_factor_70}
  \includegraphics[width=0.42\textwidth,trim=40 90 90 120,clip]{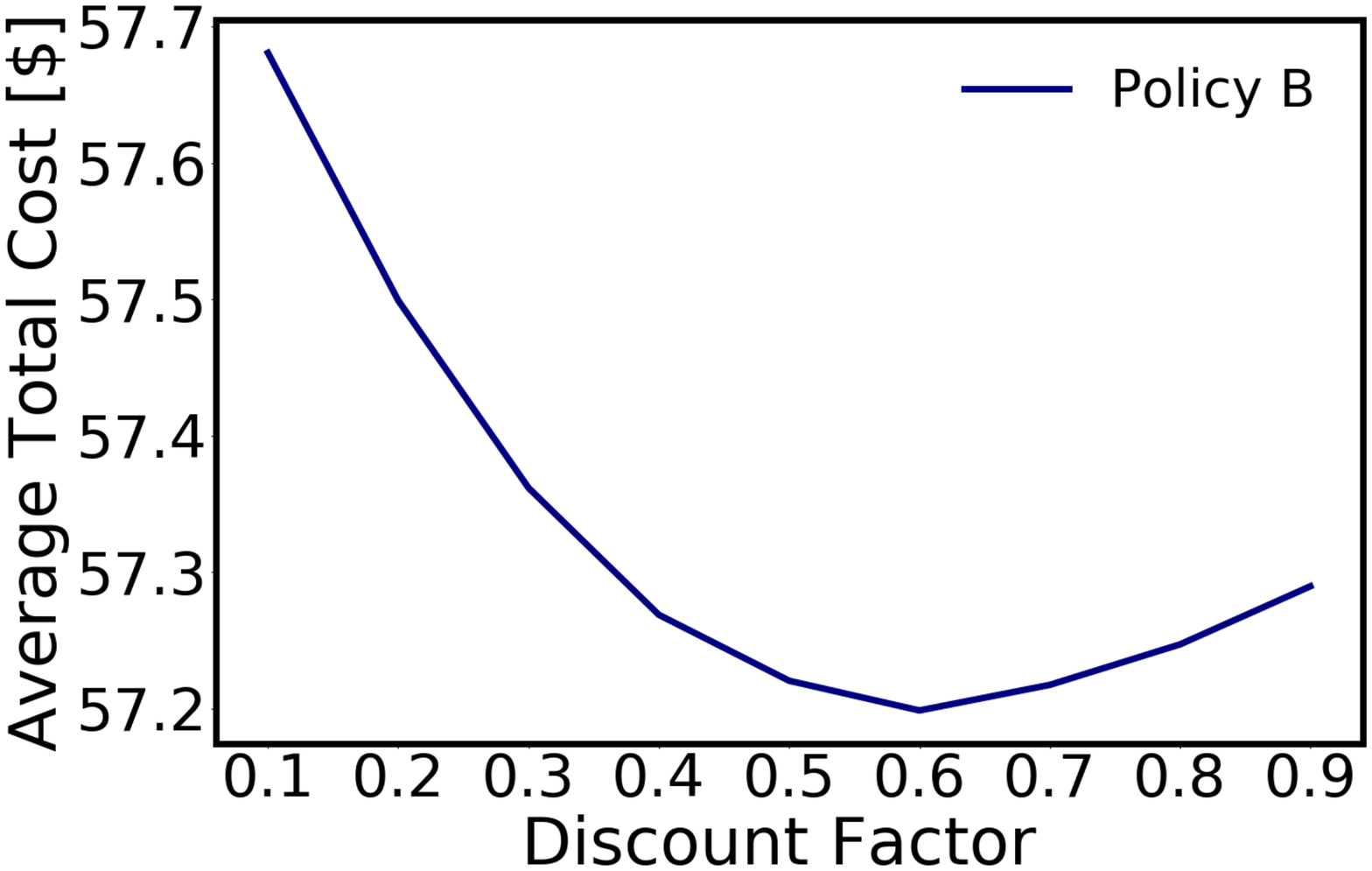}
  }
  \caption{Average cost with the discount factor $\gamma$.} \label{fig:discount_factor}
\end{figure}

Then we investigate the effect of the cruising distance $D_2$. The evaluation metric is the average cost per kilometer, $AC' = \frac{TC}{N_t * (D_1 + D_2)}$. Figure \ref{fig:cruising_distance_gamma_0.6} shows the average cost $AC'$ with $D_2$ under $\gamma = 0.6$. Average cost would decrease as we increase the cruising distance due to the fuel reduction in $D_2$. However, the results in Figure \ref{fig:cruising_distance_gamma_0.9} show that the average cost per kilometer curve firstly decreases, and then does not change too much after $60 \ km$. The initial decrease of $AC'$ results from the benefits of fuel savings in $D_2$. The marginal improvement of extending the cruising distance is negligible after $60 \ km$ with a large $\gamma$. Figure \ref{fig:discount_factor} and Figure \ref{fig:cruising_distance} indicate that large discount factor and long cruising distance together can increase the average total cost.

\begin{figure}[htbp]
  \centering
  \subfigure[$ \gamma = 0.6$]
  {\label{fig:cruising_distance_gamma_0.6}
  \includegraphics[width=0.42\textwidth,trim=60 90 90 130,clip]{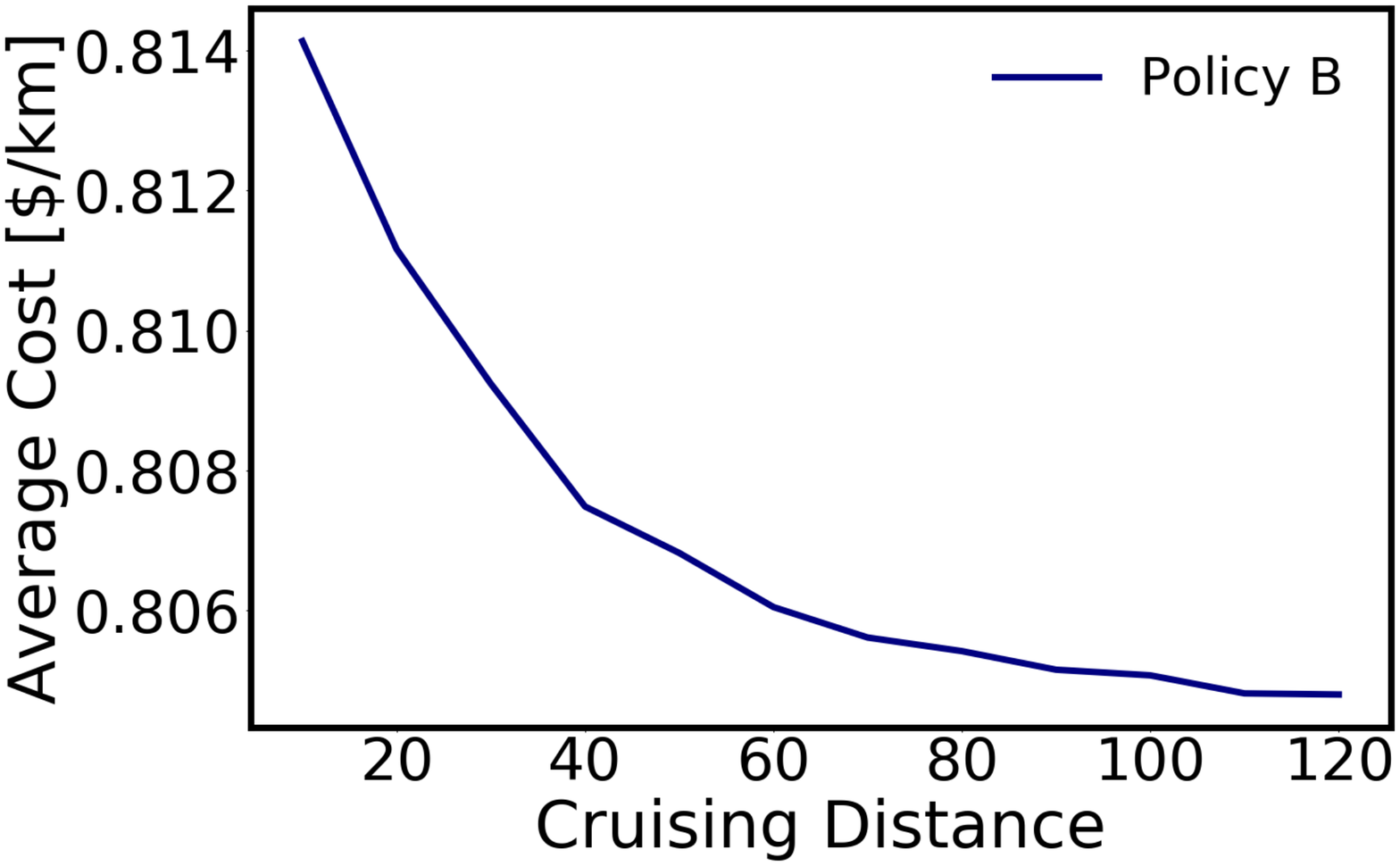}
  }
  \subfigure[$ \gamma = 0.9$]
  {\label{fig:cruising_distance_gamma_0.9}
  \includegraphics[width=0.42\textwidth,trim=60 90 90 130,clip]{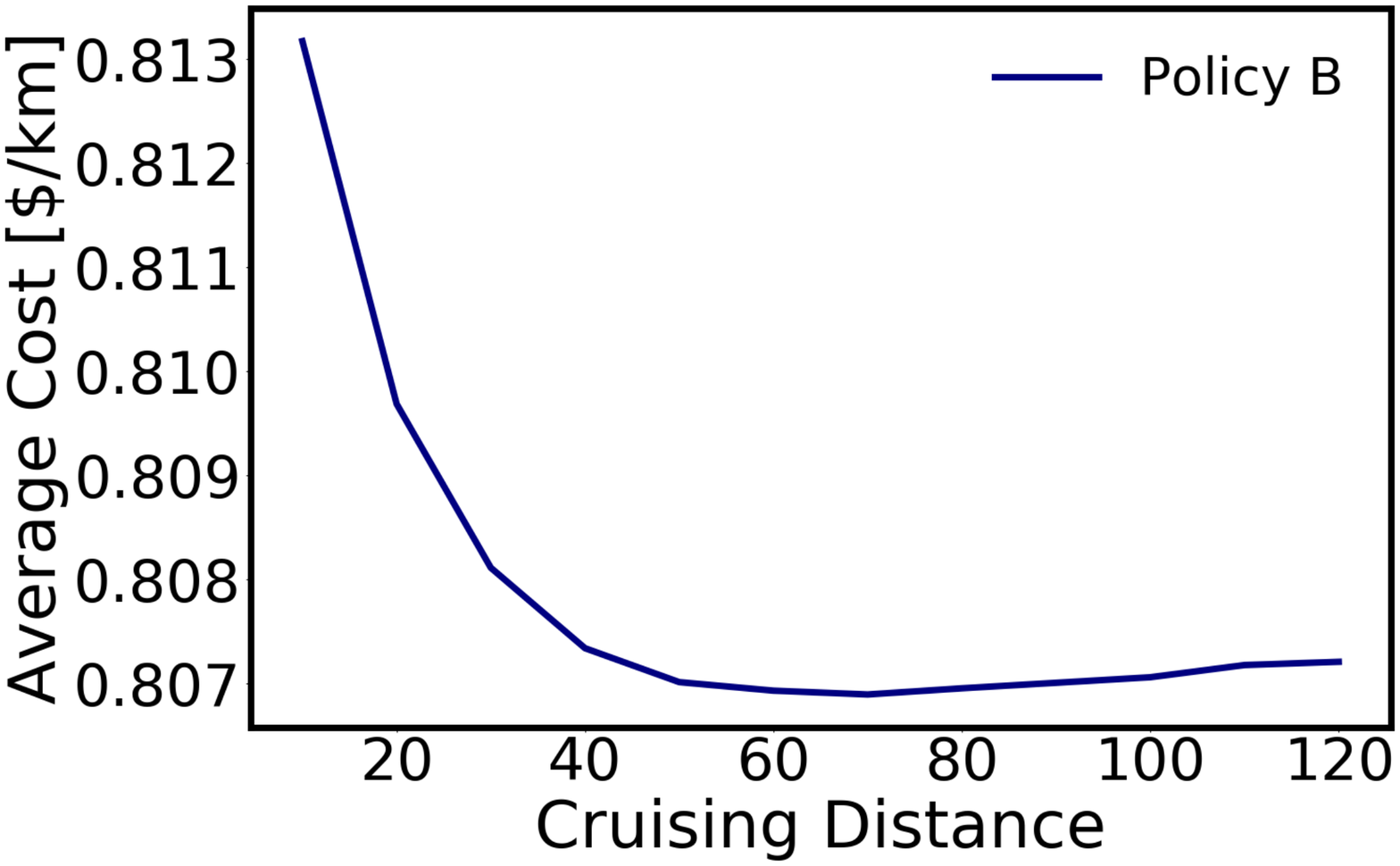}
  }
  \caption{Average cost per kilometer with the cruising distance $D_2$.} \label{fig:cruising_distance}
\end{figure}

\subsection{Interpretation of Real-time Strategy}
\label{sub_interpretation}
In this subsection, we present the insights of the Real-time Strategy (RTS). Figure \ref{fig:threshold_C_with_flow} shows the dynamic threshold and C value with respect to the estimated flow $\hat\lambda_k$. The analysis period is from 9:10 A.M. to 9:57 A.M., and the number of vehicle arrival is 50.
When we increase $D_2$ from $30 \ km$ to $70 \ km$, the threshold would not change too much, while the C values in Figure \ref{fig:C_with_flow_70} are much smaller than those in Figure \ref{fig:C_with_flow_30}. When $D_2 = 70 \ km$, the leading vehicle would have smaller speed due to more deceleration, which would reduce the platoon average speed.

\begin{figure}[htb]
  \centering
  \subfigure[Threshold when $D_2 = 30 \ km$.]
  {\label{fig:threshold_with_flow_30}
  \includegraphics[width=0.47\columnwidth,trim=20 110 90 140,clip]{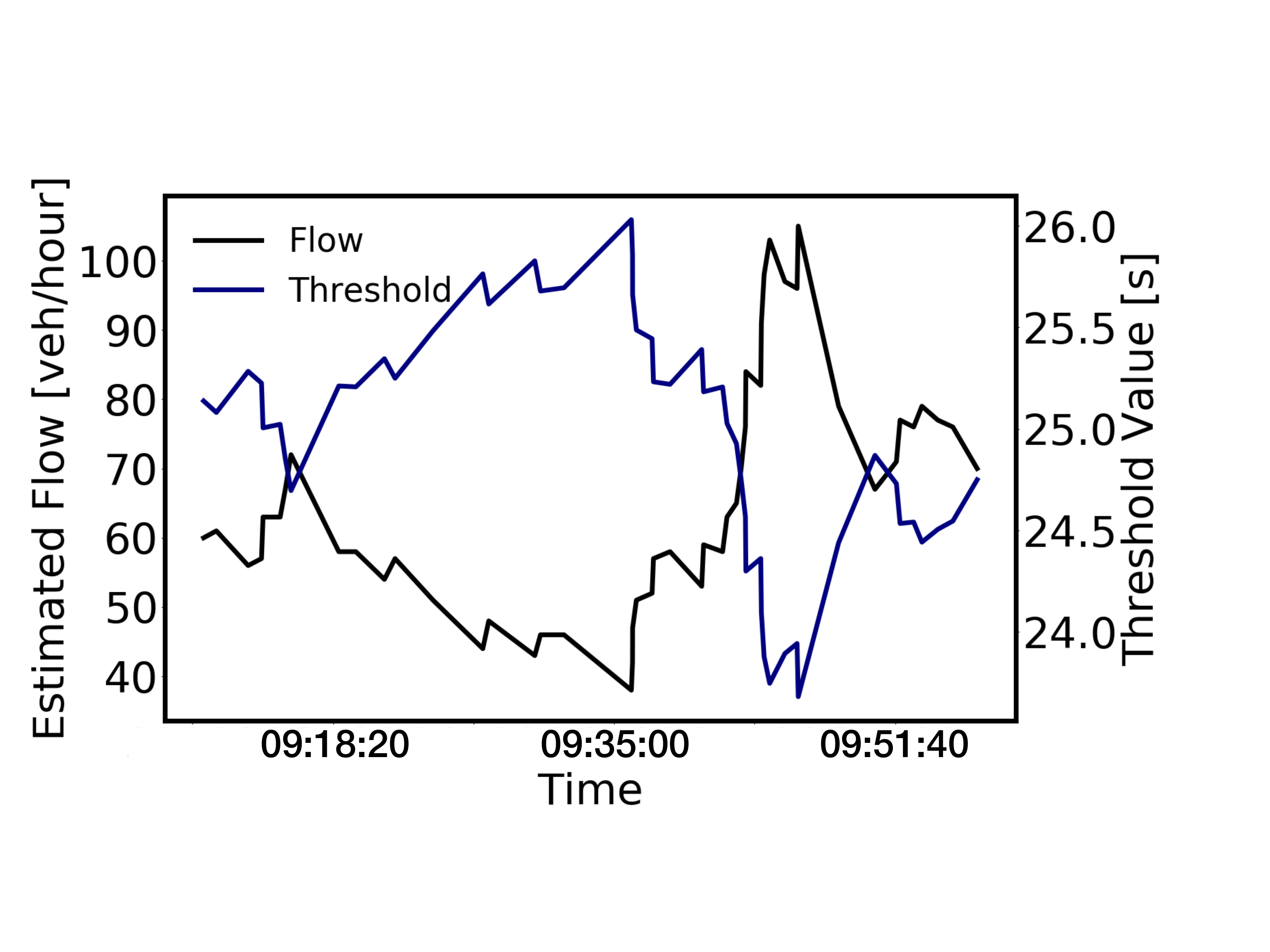}
  }
  \subfigure[C value when $D_2 = 30 \ km$.]
  {\label{fig:C_with_flow_30}
  \includegraphics[width=0.47\columnwidth,trim=20 110 90 140,clip]{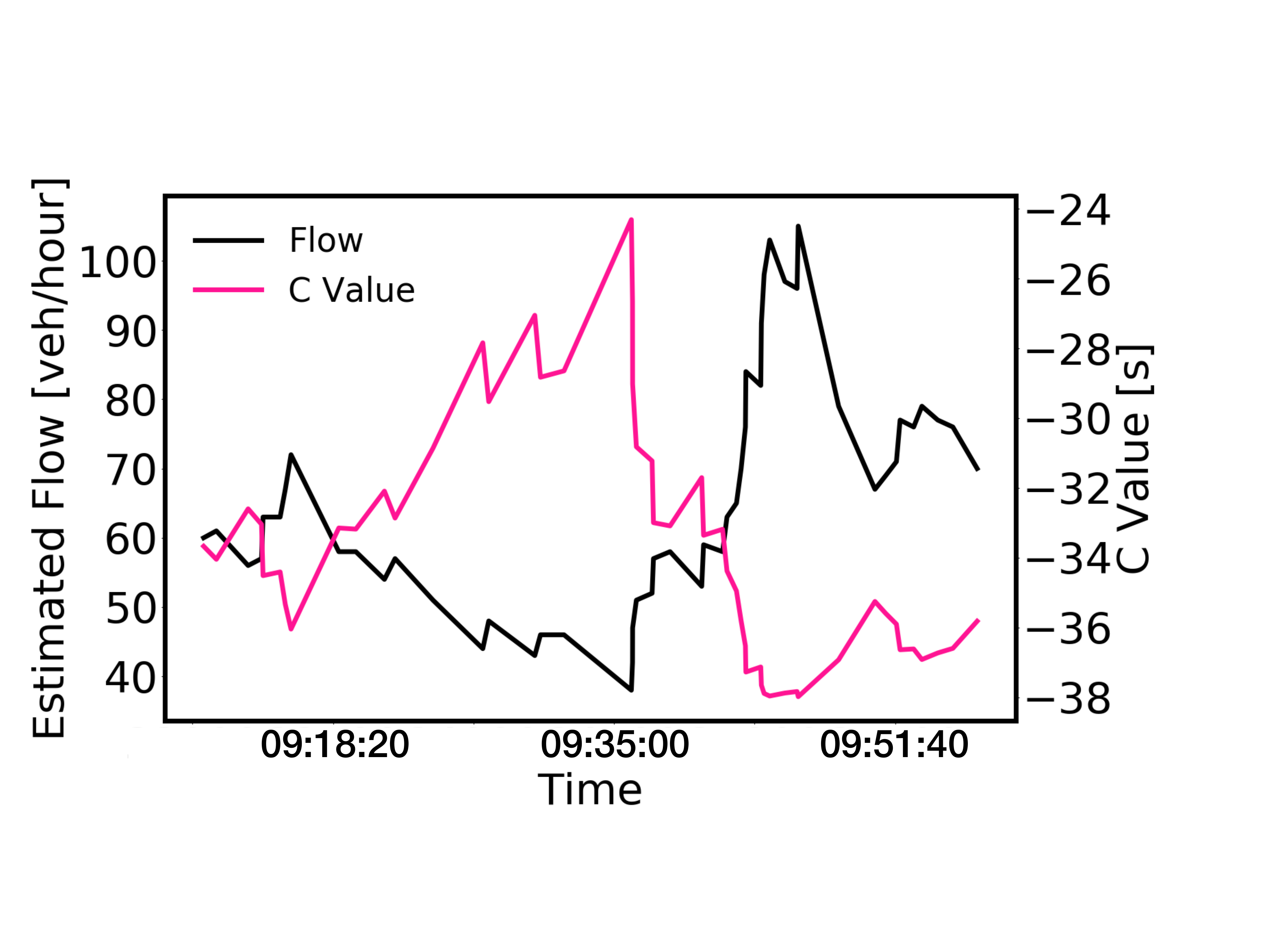}
  }
  \subfigure[Threshold when $D_2 = 70 \ km$.]
  {\label{fig:threshold_with_flow_70}
  \includegraphics[width=0.47\columnwidth,trim=20 110 90 140,clip]{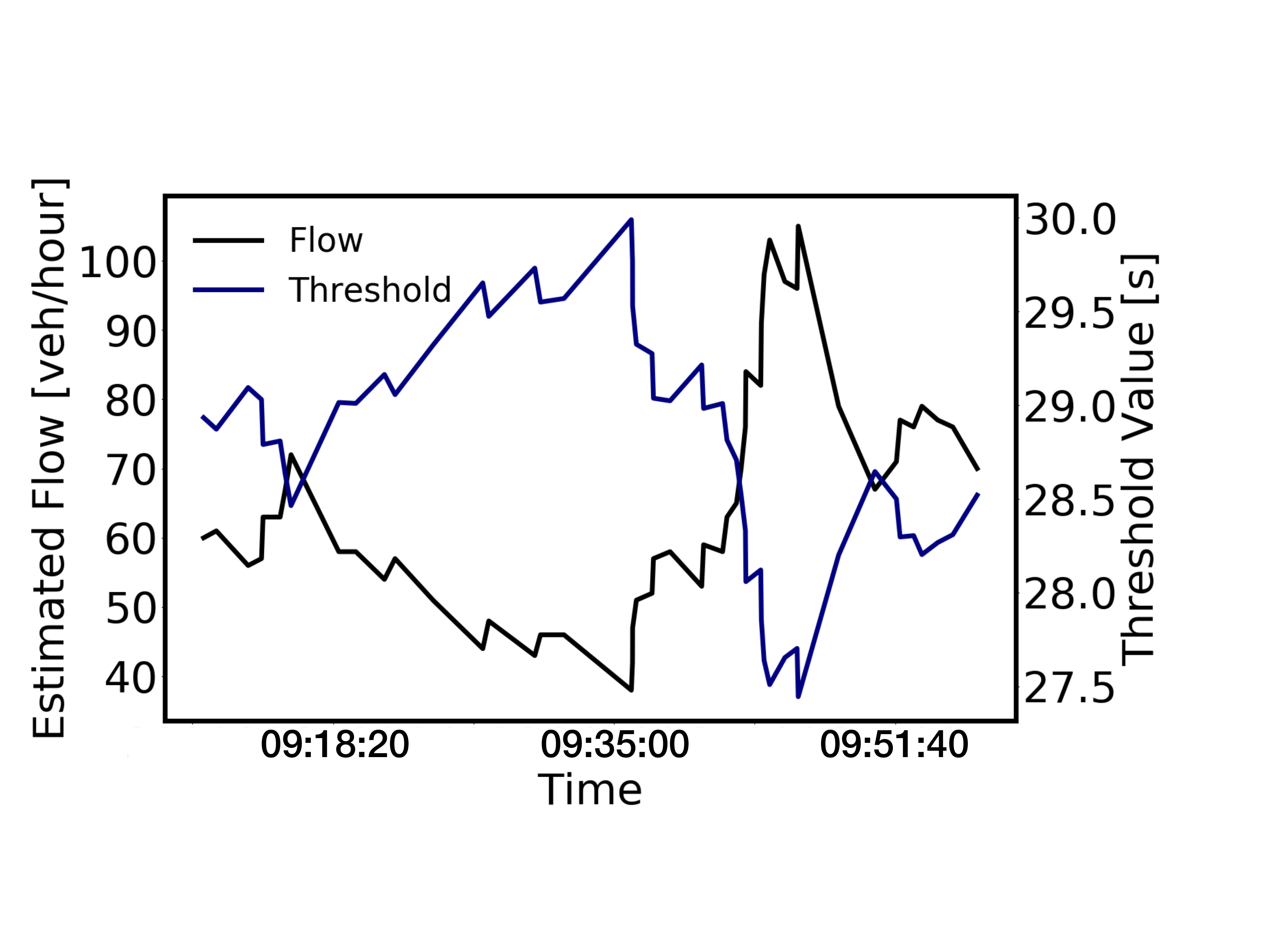}
  }
  \subfigure[C value when $D_2 = 70 \ km$.]
  {\label{fig:C_with_flow_70}
  \includegraphics[width=0.47\columnwidth,trim=20 110 90 140,clip]{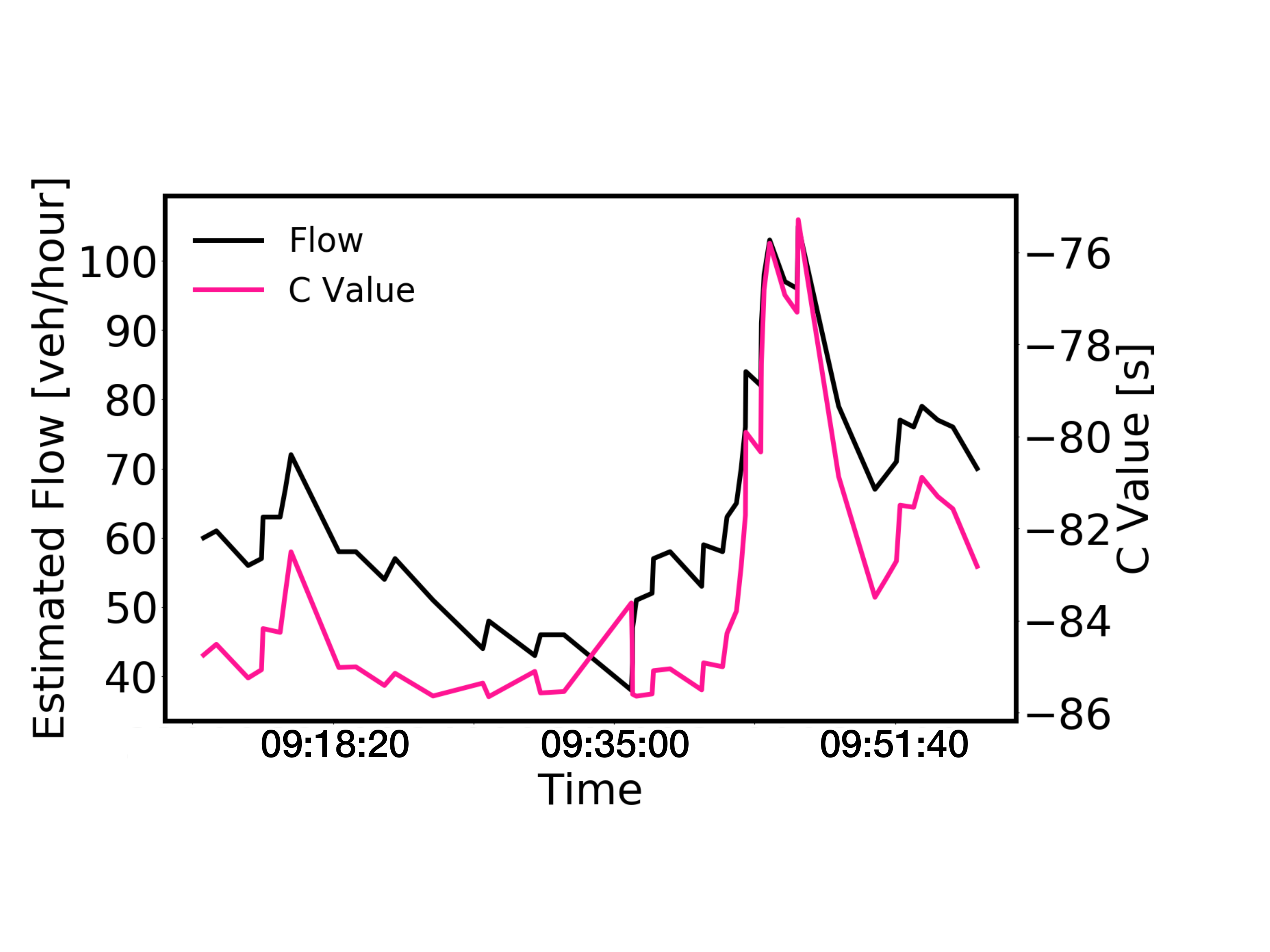}
  }
  
  \caption{Threshold and C with the estimated flow $\hat\lambda_k$.} \label{fig:threshold_C_with_flow}
\end{figure}

In Figure \ref{fig:threshold_with_flow_30} and Figure \ref{fig:threshold_with_flow_70}, the threshold changes toward the opposite direction of the flow, which indicates that lower flow can increase the threshold and thus enable more platooning. The C values in Figure
\ref{fig:C_with_flow_30} change toward the opposite direction of the flow, however, the C values in Figure \ref{fig:C_with_flow_70} change toward the same direction. The gap between threshold and C, i.e., threshold minus the C, would increase as we have longer $D_2$.

We then select 5 vehicles during the analysis period to show their longitudinal maneuvers by the space-time diagram (Figure \ref{fig:space-time diagram}) in the coordinating zone $D_1$. In Figure \ref{fig:space_time_30}, these 5 vehicles form 2 platoons with higher speeds under $D_2 = 30 \ km$. In contrast, vehicles form 1 platoon with lower speeds in Figure \ref{fig:space_time_70} under $D_2 = 70 \ km$, which indicates that higher cruising distance can facilitate the platooning process and reduce the average speed in $D_1$, thus increase the total cost resulting from more travel time. The average cost of extending $D_2$ would firstly decrease and then increase slightly in Figure \ref{fig:cruising_distance_gamma_0.9}.

\begin{figure}[hbt]
  \centering
  \subfigure[$D_2 = 30 \ km$]
  {\label{fig:space_time_30}
  \includegraphics[width=0.45\textwidth, trim=120 150 140 130,clip]{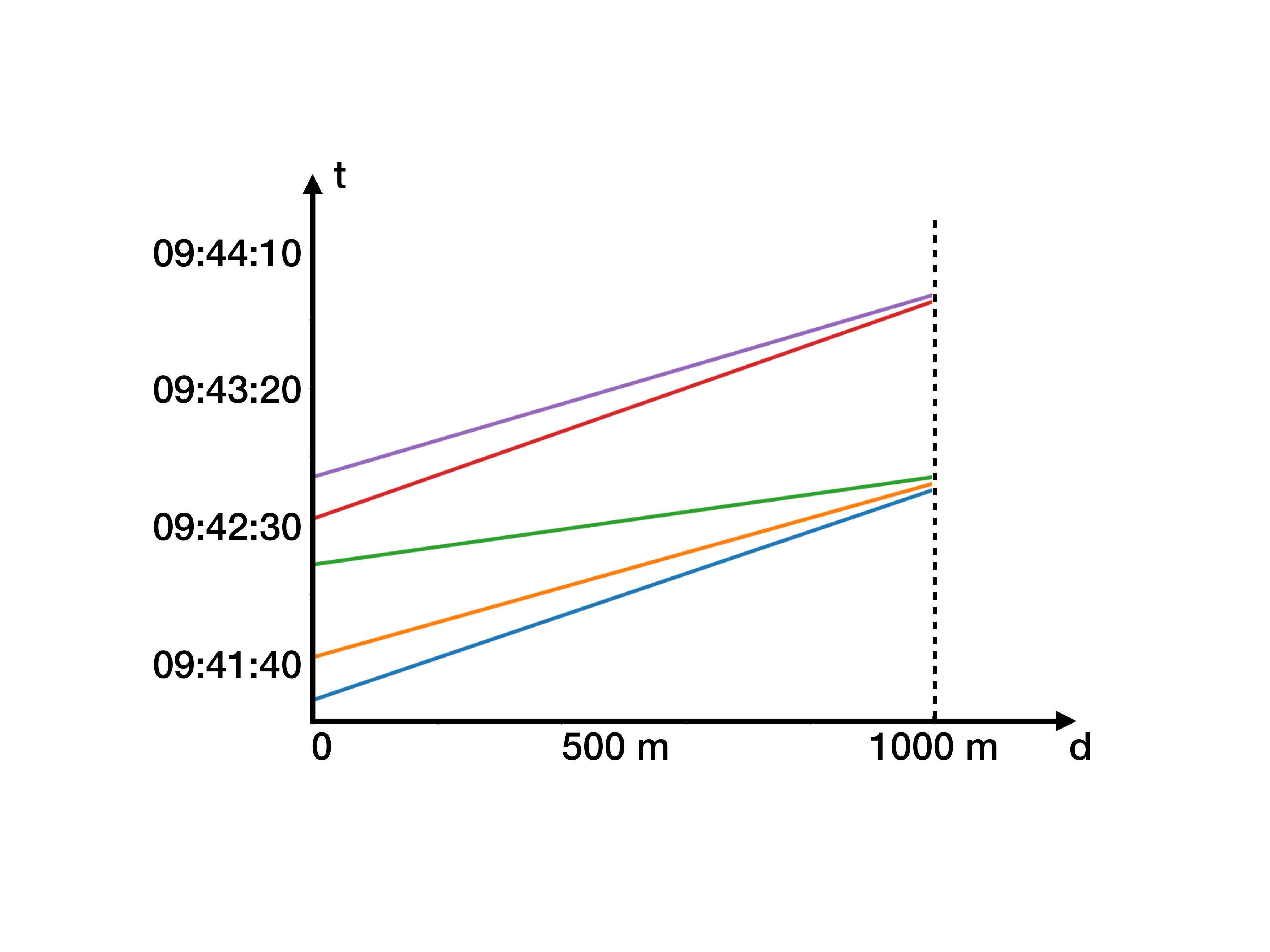}
  }
  \subfigure[$D_2 = 70 \ km$]
  {\label{fig:space_time_70}
  \includegraphics[width=0.45\textwidth, trim=120 150 140 130,clip]{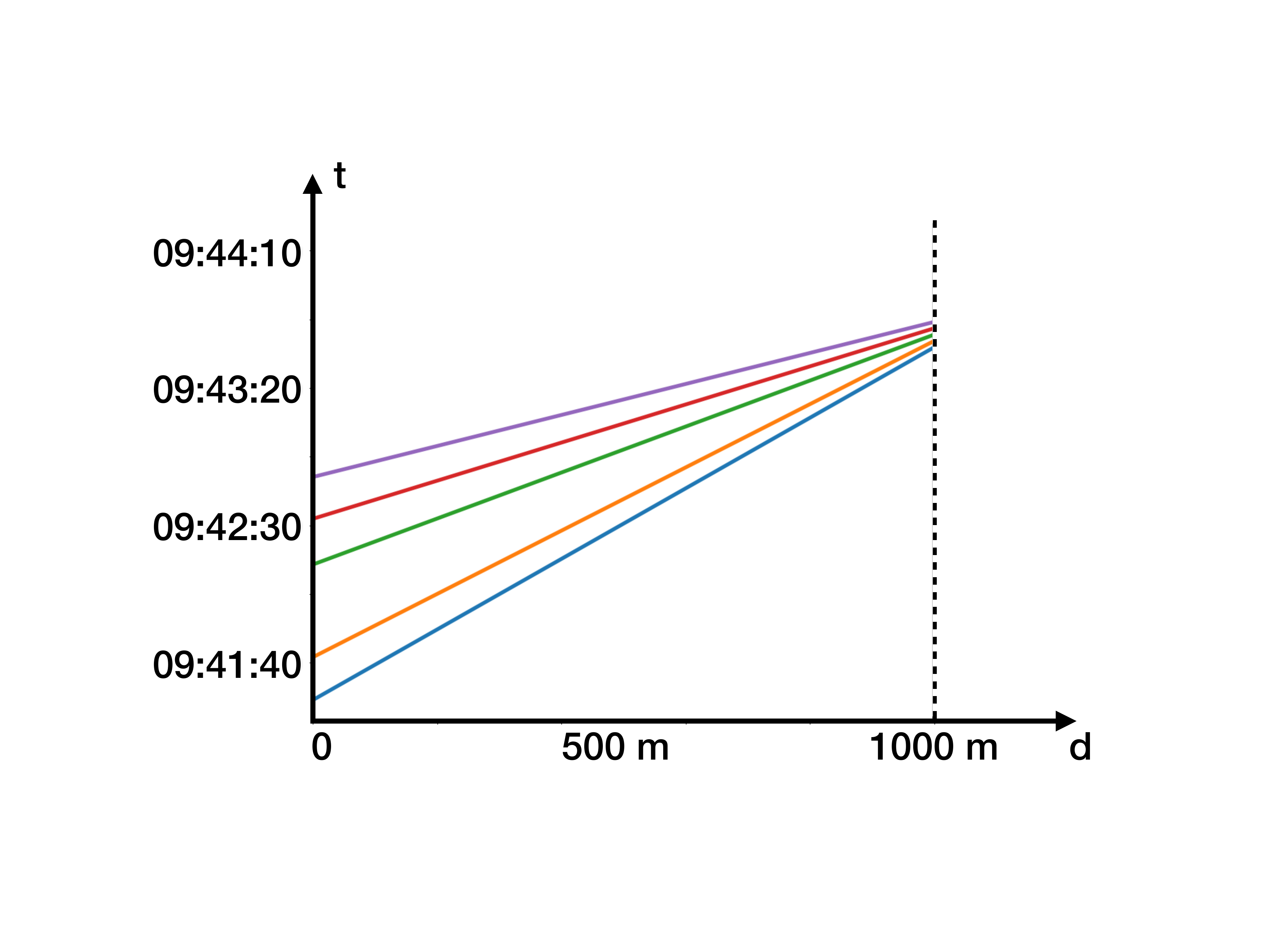}
  }
  \caption{Space-time diagram in the coordinating zone} \label{fig:space-time diagram}
\end{figure}
\section{Concluding Remarks}
\label{sec_conclude}

In this paper, we consider the coordinated platooning problem using a Markov decision process formulation.
The formulation is general in the sense that the arrival process can be a rather general class of renewal processes and that the cost function is generic.
By studying the Bellman optimality condition for the MDP, we show that the optimal coordination strategy is threshold based.
Using this structural result, we develop a recursive approximation algorithm to compute the optimal strategy, which is significantly faster than the generic value iteration algorithm.
Furthermore, we show that for Poisson arrival processes, the optimal strategy can be directly computed by solving a system of integral equations.
We also validate our results in simulation with Real-time Strategy using real traffic data.

This work can be extended in several directions.
First, the analysis for a single junction is the basis for the analysis of centralized or distributed coordination at networks of junctions (see e.g. \cite{xiong2019evaluation} for a preliminary simulation-based analysis).
Second, the interaction between CAV platoons and the background traffic still needs to be appropriately modeled and addressed.
Third, social-economical mechanisms that incent CAVs to cooperate still needs to be designed and validated, since the benefit of platooning is not evenly distributed over all vehicles.

\bibliographystyle{IEEEtran}
\bibliography{optimization}

% Generated by IEEEtran.bst, version: 1.14 (2015/08/26)
\begin{thebibliography}{10}
\providecommand{\url}[1]{#1}
\csname url@samestyle\endcsname
\providecommand{\newblock}{\relax}
\providecommand{\bibinfo}[2]{#2}
\providecommand{\BIBentrySTDinterwordspacing}{\spaceskip=0pt\relax}
\providecommand{\BIBentryALTinterwordstretchfactor}{4}
\providecommand{\BIBentryALTinterwordspacing}{\spaceskip=\fontdimen2\font plus
\BIBentryALTinterwordstretchfactor\fontdimen3\font minus
  \fontdimen4\font\relax}
\providecommand{\BIBforeignlanguage}[2]{{%
\expandafter\ifx\csname l@#1\endcsname\relax
\typeout{** WARNING: IEEEtran.bst: No hyphenation pattern has been}%
\typeout{** loaded for the language `#1'. Using the pattern for}%
\typeout{** the default language instead.}%
\else
\language=\csname l@#1\endcsname
\fi
#2}}
\providecommand{\BIBdecl}{\relax}
\BIBdecl

\bibitem{maurer2016autonomous}
M.~Maurer, J.~C. Gerdes, B.~Lenz, H.~Winner \emph{et~al.}, ``Autonomous
  driving,'' \emph{Berlin, Germany: Springer Berlin Heidelberg}, vol.~10, pp.
  978--3, 2016.

\bibitem{boccardi2014five}
F.~Boccardi, R.~W. Heath, A.~Lozano, T.~L. Marzetta, and P.~Popovski, ``Five
  disruptive technology directions for 5g,'' \emph{IEEE communications
  magazine}, vol.~52, no.~2, pp. 74--80, 2014.

\bibitem{davila2010sartre}
A.~D{\'a}vila and M.~Nombela, ``Sartre: Safe road trains for the environment,''
  in \emph{Conference on personal rapid transit PRT@ LHR}, vol.~3, 2010, pp.
  2--3.

\bibitem{bhoopalam2018planning}
A.~K. Bhoopalam, N.~Agatz, and R.~Zuidwijk, ``Planning of truck platoons: A
  literature review and directions for future research,'' \emph{Transportation
  Research Part B: Methodological}, vol. 107, pp. 212--228, 2018.

\bibitem{tsugawa2011automated}
S.~Tsugawa, S.~Kato, and K.~Aoki, ``An automated truck platoon for energy
  saving,'' in \emph{2011 IEEE/RSJ International Conference on Intelligent
  Robots and Systems}.\hskip 1em plus 0.5em minus 0.4em\relax IEEE, 2011, pp.
  4109--4114.

\bibitem{alam2015heavy}
A.~Alam, B.~Besselink, V.~Turri, J.~Martensson, and K.~H. Johansson,
  ``Heavy-duty vehicle platooning for sustainable freight transportation: A
  cooperative method to enhance safety and efficiency,'' \emph{IEEE Control
  Systems Magazine}, vol.~35, no.~6, pp. 34--56, 2015.

\bibitem{ploeg14}
J.~Ploeg, N.~van~de Wouw, and H.~Nijmeijer, ``{$L_p$} string stability of
  cascaded systems: Application to vehicle platooning,'' \emph{IEEE
  Transactions on Control Systems Technology}, vol.~22, no.~2, pp. 786--793,
  2014.

\bibitem{coogan15interconnected}
S.~Coogan and M.~Arcak, ``A dissipativity approach to safety verification for
  interconnected systems,'' \emph{IEEE Transactions on Automatic Control},
  vol.~60, no.~6, pp. 1722--1727, 2015.

\bibitem{besselink2017string}
B.~Besselink and K.~H. Johansson, ``String stability and a delay-based spacing
  policy for vehicle platoons subject to disturbances,'' \emph{IEEE
  Transactions on Automatic Control}, vol.~62, no.~9, pp. 4376--4391, 2017.

\bibitem{gao2017data}
W.~Gao, Z.-P. Jiang, and K.~Ozbay, ``Data-driven adaptive optimal control of
  connected vehicles,'' \emph{IEEE Transactions on Intelligent Transportation
  Systems}, vol.~18, no.~5, pp. 1122--1133, 2017.

\bibitem{larson2015distributed}
J.~Larson, K.-Y. Liang, and K.~H. Johansson, ``A distributed framework for
  coordinated heavy-duty vehicle platooning.'' \emph{IEEE Trans. Intelligent
  Transportation Systems}, vol.~16, no.~1, pp. 419--429, 2015.

\bibitem{larsson2015vehicle}
E.~Larsson, G.~Sennton, and J.~Larson, ``The vehicle platooning problem:
  Computational complexity and heuristics,'' \emph{Transportation Research Part
  C: Emerging Technologies}, vol.~60, pp. 258--277, 2015.

\bibitem{boysen2018identical}
N.~Boysen, D.~Briskorn, and S.~Schwerdfeger, ``The identical-path truck
  platooning problem,'' \emph{Transportation Research Part B: Methodological},
  vol. 109, pp. 26--39, 2018.

\bibitem{luo2018coordinated}
F.~Luo, J.~Larson, and T.~Munson, ``Coordinated platooning with multiple
  speeds,'' \emph{Transportation Research Part C: Emerging Technologies},
  vol.~90, pp. 213--225, 2018.

\bibitem{johansson2019game}
A.~Johansson and J.~M{\aa}rtensson, ``Game theoretic models for profit-sharing
  in multi-fleet platoons,'' in \emph{2019 IEEE Intelligent Transportation
  Systems Conference (ITSC)}.\hskip 1em plus 0.5em minus 0.4em\relax IEEE,
  2019, pp. 3019--3024.

\bibitem{sun2019behaviorally}
X.~Sun and Y.~Yin, ``Behaviorally stable vehicle platooning for energy
  savings,'' \emph{Transportation Research Part C: Emerging Technologies},
  vol.~99, pp. 37--52, 2019.

\bibitem{deng2017energy}
L.~Deng, M.~H. Hajiesmaili, M.~Chen, and H.~Zeng, ``Energy-efficient timely
  transportation of long-haul heavy-duty trucks,'' \emph{IEEE Transactions on
  Intelligent Transportation Systems}, vol.~19, no.~7, pp. 2099--2113, 2017.

\bibitem{larsen2019hub}
R.~Larsen, J.~Rich, and T.~K. Rasmussen, ``Hub-based truck platooning:
  Potentials and profitability,'' \emph{Transportation Research Part E:
  Logistics and Transportation Review}, vol. 127, pp. 249--264, 2019.

\bibitem{abdolmaleki2019itinerary}
M.~Abdolmaleki, M.~Shahabi, Y.~Yin, and N.~Masoud, ``Itinerary planning for
  cooperative truck platooning,'' \emph{Available at SSRN 3481598}, 2019.

\bibitem{liang2015heavy}
K.-Y. Liang, J.~M{\aa}rtensson, and K.~H. Johansson, ``Heavy-duty vehicle
  platoon formation for fuel efficiency,'' \emph{IEEE Transactions on
  Intelligent Transportation Systems}, vol.~17, no.~4, pp. 1051--1061, 2015.

\bibitem{turri2016cooperative}
V.~Turri, B.~Besselink, and K.~H. Johansson, ``Cooperative look-ahead control
  for fuel-efficient and safe heavy-duty vehicle platooning,'' \emph{IEEE
  Transactions on Control Systems Technology}, vol.~25, no.~1, pp. 12--28,
  2016.

\bibitem{van2017fuel}
S.~van~de Hoef, K.~H. Johansson, and D.~V. Dimarogonas, ``Fuel-efficient en
  route formation of truck platoons,'' \emph{IEEE Transactions on Intelligent
  Transportation Systems}, vol.~19, no.~1, pp. 102--112, 2017.

\bibitem{guo2018fuel}
G.~Guo and Q.~Wang, ``Fuel-efficient en route speed planning and tracking
  control of truck platoons,'' \emph{IEEE Transactions on Intelligent
  Transportation Systems}, vol.~20, no.~8, pp. 3091--3103, 2018.

\bibitem{zhang2017freight}
W.~Zhang, E.~Jenelius, and X.~Ma, ``Freight transport platoon coordination and
  departure time scheduling under travel time uncertainty,''
  \emph{Transportation Research Part E: Logistics and Transportation Review},
  vol.~98, pp. 1--23, 2017.

\bibitem{xiong2019analysis}
X.~Xiong, E.~Xiao, and L.~Jin, ``Analysis of a stochastic model for coordinated
  platooning of heavy-duty vehicles,'' in \emph{2019 IEEE Conference on
  Decision and Control}, 2019.

\bibitem{bertsekas1995dynamic}
D.~P. Bertsekas, \emph{Dynamic programming and optimal control}.\hskip 1em plus
  0.5em minus 0.4em\relax Athena scientific Belmont, MA, 1995, vol.~1, no.~2.

\bibitem{varaiya2007freeway}
P.~P. Varaiya, ``Freeway performance measurement system (pems), pems 7.0,''
  Tech. Rep., 2007.

\bibitem{sutton2018reinforcement}
R.~S. Sutton and A.~G. Barto, \emph{Reinforcement learning: An
  introduction}.\hskip 1em plus 0.5em minus 0.4em\relax MIT press, 2018.

\bibitem{chicone2006ordinary}
C.~Chicone, \emph{Ordinary differential equations with applications}.\hskip 1em
  plus 0.5em minus 0.4em\relax Springer Science \& Business Media, 2006,
  vol.~34.

\bibitem{mcgehee2000driver}
D.~V. McGehee, E.~N. Mazzae, and G.~S. Baldwin, ``Driver reaction time in crash
  avoidance research: validation of a driving simulator study on a test
  track,'' in \emph{Proceedings of the human factors and ergonomics society
  annual meeting}, vol.~44, no.~20.\hskip 1em plus 0.5em minus 0.4em\relax Sage
  Publications Sage CA: Los Angeles, CA, 2000, pp. 3--320.

\bibitem{krajzewicz2002sumo}
D.~Krajzewicz, G.~Hertkorn, C.~R{\"o}ssel, and P.~Wagner, ``Sumo (simulation of
  urban mobility)-an open-source traffic simulation,'' in \emph{Proceedings of
  the 4th middle East Symposium on Simulation and Modelling (MESM20002)}, 2002,
  pp. 183--187.

\bibitem{xiong2019evaluation}
X.~Xiong, T.~Wang, and L.~Jin, ``Evaluation of headway threshold-based
  coordinated platooning over a cascade of highway junctions,'' in \emph{2020
  Transportation Research Board Annual Meeting}, no. 20-02746, 2019.

\end{thebibliography}

\end{document}